%% file: CSDA_Combined.tex
\documentclass[a4paper,fleqn]{cas-sc}

\usepackage[authoryear,longnamesfirst]{natbib}
\usepackage{bibunits}
\defaultbibliographystyle{cas-model2-names}
\defaultbibliography{allpapers}
\usepackage{amssymb,amsmath,amsthm,mathtools}
\usepackage[textfont=it,tableposition=above,font=small]{caption}
\usepackage{xcolor}
\usepackage[inline]{enumitem}
\usepackage{booktabs} 
\usepackage{subcaption}
\usepackage{multirow}
\usepackage{float}

\newtheorem{theorem}{Theorem}
\newtheorem{lemma}[theorem]{Lemma}
\theoremstyle{definition}
\newtheorem*{definition}{Definition}

\newcommand{\RPD}{\textsf{\textbf{RPD}}} 
\newcommand{\RPDOD}{\textsf{\textbf{RPD-OD}}} 
\newcommand{\RHD}{\textsf{\textbf{RHD}}} 
\newcommand{\FD}{\textsf{\textbf{FD}}} 
\newcommand{\ID}{\textsf{\textbf{ID}}} 
\newcommand{\MBD}{\textsf{\textbf{MBD}}} 
\newcommand{\VV}{\mathcal{V}} 
\newcommand{\HH}{\mathbb{H}} 
\newcommand{\R}{\mathbb{R}} 
\newcommand{\N}{\mathbb{N}} 
\newcommand{\argmax}{\operatornamewithlimits{argmax}} 
\newcommand{\pr}{\mathbb{P}} 
\newcommand{\E}{\mathbb{E}} 
\newcommand{\var}{\mathsf{var}} 
\newcommand{\MAD}{\mathsf{MAD}} 
\newcommand{\med}{\mathsf{med}} 
\newcommand{\set}[1]{\left\lbrace #1 \right\rbrace} 
\renewcommand{\SS}{\mathbb{S}} 
\renewcommand{\P}[1]{\mathcal{P}({#1})} 
\newcommand{\dd}{\,\mathrm{d}} 
\newcommand{\abs}[1]{\left|{#1}\right|} 
\newcommand{\inner}[1]{\left\langle#1\right\rangle} 
\newcommand{\norm}[1]{\left\lVert#1\right\rVert} 
\newcommand{\Clo}[1]{\operatorname{cl}\left({#1}\right)} 
\newcommand{\spn}[1]{\operatorname{span}\left(#1\right)} 
\newcommand{\st}{\,\colon\,} 
\newcommand{\eqdis}{\overset{d}{=}} 
\newcommand{\tr}{^\top} 
\newcommand{\indic}{\mathbf{1}} 
\newcommand{\xasto}[1][]{\xrightarrow[\scalebox{0.7}{$#1$}]{\text{a.s.}}} 

\begin{document}
\gdef\lastpage{\pageref*{main:LastPage}}
\begin{bibunit}

\let\WriteBookmarks\relax
\def\floatpagepagefraction{1}
\def\textpagefraction{.001}

\shorttitle{Projection depth for functional data: Practical issues, computation and applications}    

\shortauthors{F. Bo\v{c}inec, S. Nagy and H. Yeon}

\title [mode = title]{Projection depth for functional data: Practical issues, computation and applications}  

\author[1]{Filip Bo\v{c}inec}[orcid=0009-0000-4415-9302]
\ead{bocinec@karlin.mff.cuni.cz}
\credit{Conceptualization, Formal analysis, Software, Writing – original draft}
\cormark[1]
\cortext[1]{Corresponding author}

\author[1]{Stanislav Nagy}[orcid=0000-0002-8610-4227]
\ead{nagy@karlin.mff.cuni.cz}
\credit{Conceptualization, Methodology, Supervision, Writing – review \& editing}

\author[2]{Hyemin Yeon}[orcid=0009-0008-3585-2103]
\ead{hyeon1@kent.edu}
\credit{Methodology, Software, Supervision, Writing – review \& editing}

\affiliation[1]{organization={Faculty of Mathematics and Physics, Charles University},
            addressline={Ke Karlovu 2027/3}, 
            city={Prague},
            postcode={12116}, 
            country={Czech Republic}}

\affiliation[2]{organization={Department of Mathematical Sciences, Kent State University},
            addressline={233 Summit Street}, 
            city={Kent},
            postcode={44242}, 
            state={OH},
            country={USA}}

\nonumnote{This work was supported by the ERC CZ grant LL2407 of the Ministry of Education, Youth and Sport of the Czech Republic.}

\begin{abstract}
Statistical analysis of functional data is challenging due to their complex patterns, for which functional depth provides an effective means of reflecting their ordering structure. 
In this work, we investigate practical aspects of the recently proposed regularized projection depth (RPD), which induces a meaningful ordering of functional data while appropriately accommodating their complex shape features. 
Specifically, we examine the impact and choice of its tuning parameter, which regulates the degree of effective dimension reduction applied to the data, and propose a random projection–based approach for its efficient computation, supported by theoretical justification. 
Through comprehensive numerical studies, we explore a wide range of statistical applications of the RPD and demonstrate its particular usefulness in uncovering shape features in functional data analysis.
This ability allows us to propose a novel RPD-based outlier detection method, and to demonstrate that the RPD outperforms competing depth-based approaches in tasks such as functional classification and two-sample hypothesis testing.
\end{abstract}

\begin{keywords}
    functional data analysis \sep statistical depth \sep projection depth \sep robust statistics \sep outlier detection
\end{keywords}

\maketitle

\section{Introduction: Regularized projection depth}

Over the past two decades, the ordering structure of functional data has been extensively studied, leading to many effective applications in Statistics and Machine Learning. 
A central concept is functional depth, an extension of multivariate depth functions such as halfspace  \citep{Tukey1975}, simplicial \citep{Liu1990}, spatial \citep{Chaudhuri1996}, and zonoid \citep{Mosler2002} depths, 
which measures centrality versus outlyingness for functional data.
Examples of functional depths include integrated \citep{Fraiman_Muniz2001}, (modified) band \citep{Lopez_Romo2009}, infimal \citep{Mosler2013}, spatial \citep{Chakraborty_Chaudhuri2014} and spherical \citep{Elmore_etal2006, Mendros_Nagy2025} depths.
All these depths naturally induce rankings and support a variety of statistical tools for functional data analysis \citep{SunGenton2011, Chakraborty_Chaudhuri2014b, Hubert_etal2017, RamsayChenouri2024}. 

An important distinction from multivariate data is that functional data often exhibit complex shape patterns.
Consequently, in the functional setting, centrality depends not only on magnitude, location, or pointwise variability, but crucially on the shape of the data itself.
Although functional depth provides an important framework for ordering and summarizing functional observations, 
relatively few general-purpose tools are designed to be explicitly sensitive to shape. 
This complexity motivates the development and practical assessment of depth measures that can effectively capture shape-related features~\citep{Hlubinka_etal2015, Nagy_etal2017, Helander_etal2020, Yeon_etal2025}.

Toward this aim, \cite{Bocinec_etal2026} recently extended the prominent projection depth on the Euclidean space \citep{Zuo2003} to a real infinite-dimensional separable Hilbert space $\HH$.
The projection depth builds on the paradigm of projection pursuit \citep{Huber1985} and the Stahel-Donoho outlyingness \citep{Stahel1981, Donoho1982}.
To explain, let $\inner{\cdot, \cdot}$ be the inner product for $\HH$ with induced norm $\norm{\cdot}$, and denote by $\P{\HH}$ the set of Borel probability measures on $\HH$.
For $x\in \HH$ and a unit vector $v \in \SS = \set{y \in \HH \st \norm{y}=1}$, the outlyingness of $x$ with respect to $X \sim P_X \in \P{\HH}$ in direction $v$ is
\begin{align}\label{eqOut}
	O_v(x;P_X) = \frac{\abs{\inner{x, v} - \med[\inner{X,v}]}}{\MAD[\inner{X,v}]},
\end{align}
where $\med[U]$ and $\MAD[U]$ denote the median and median absolute deviation of a real-valued variable $U$.\footnote{We define the median of a real-valued random variable~$U$, denoted by~$\med[U]$, as the midpoint of all values~$m \in \R$ satisfying $\min\left\{ \pr(U \leq m), \pr(U \geq m) \right\} \geq \tfrac{1}{2}$. 
The median absolute deviation $\MAD[U]$ is then defined as $\med[\abs{U - \med[U]}]$.}
As observed in \citet{Chakraborty_Chaudhuri2014} and \citet{Bocinec_etal2026}, the maximal outlyingness (over all directions $v \in \SS$) can be infinite for almost every $x \in \HH$ under standard models.
This yields a degenerate projection depth in $\HH$, which is not statistically useful. 
To address this degeneracy, the \textit{regularized projection depth} (\RPD{}) was proposed by incorporating a regularization that ensures well-defined and strictly positive values, as defined below.

\begin{definition}[Regularized projection depth;~\citet{Bocinec_etal2026}]
	For $\beta \geq 0$ and $X\sim P_X \in \P{\HH}$, the \emph{regularized set of directions} is
	\begin{equation}\label{eqRegDir}
		\VV_{\beta} = \VV_{\beta}(P_X) = \set{v\in\SS\st \MAD[\inner{X,v}]\geq \beta}.
	\end{equation}
	The \emph{regularized projection depth (\RPD{})} of $x \in \HH$ with respect to $P_X$ is defined as
	\begin{align}
		D_{\beta}(x;P_X)
		= \inf_{v \in \VV_{\beta}} \left( 1 + O_v(x; P_X) \right)^{-1}, \label{eqRPD}
	\end{align}
	where the projection outlyingness $O_v$ is given in~\eqref{eqOut}.
\end{definition}

The previous study~\citep{Bocinec_etal2026} explored the theoretical properties of \RPD{}. Importantly, the non-degeneracy of \RPD{} was established, meaning that for $\beta>0$, the depth function $D_\beta(\cdot; P_X)$ is strictly positive. Further, it was shown that \RPD{} preserves several properties of the usual multivariate projection depth, including quasiconcavity and monotonicity, and is well-behaved under symmetry. 
\RPD{} also exhibits other desirable properties:
(i) it is invariant under orthogonal transformations and shifts, 
(ii) for $\beta>0$, \RPD{} is $1/\beta$-Lipschitz, hence continuous, and 
(iii) under mild conditions, it also satisfies the vanishing-at-infinity property.
Importantly, (iv) uniform consistency of the sample \RPD{} on bounded sets is established, and (v) a high breakdown point of the induced median (any point in $\HH$ that maximizes \RPD{}) is proved.
Nevertheless, the practical performance and applications of \RPD{} remain unexplored.

The primary goal of this paper is to fill this gap and explore practical aspects of the \RPD{} in functional data analysis. 
We first consider the selection of the regularization parameter $\beta$ in the \RPD{} definition.
Following the quantile-based choice suggested in~\citet{Bocinec_etal2026}, 
we take $\beta$ as a quantile at level $u \in (0,1)$ of the MADs of the randomly drawn projections. 
Although this choice was proposed in~\cite{Bocinec_etal2026}, 
its theoretical justification was not established there.
Importantly, we show that this choice preserves key theoretical properties of the \RPD{}, 
including non-degeneracy and consistency. 
Using a quantile-based $\beta$ also enhances data adaptivity; 
for instance, smaller values of $u$ help highlight shape features in functional data. 
In practice, this selection can be implemented through the same random projection approximation used for computing the \RPD{}, whose theoretical validity is established in this paper.

Building on these foundations, this paper provides a systematic numerical investigation of the \RPD{} in central tasks of functional data analysis: outlier detection, classification, and hypothesis testing. 
Specifically, we propose a novel \RPD{}-based outlier detection method, referred to as \RPDOD{}. 
This approach introduces a methodological innovation by applying a mixture model directly to depth values -- a strategy not previously considered in the literature. 
Crucially, the reliability of this procedure stems from the \RPD{}'s inherent sensitivity to shape features; applying the same mixture principle to traditional depths (e.g., integrated depth) would be ineffective, as they often fail to capture structural anomalies.
We demonstrate that \RPDOD{} accurately detects outliers that differ in shape, even in challenging cases where they overlap with regular curves, whereas competing depths show limited effectiveness.

While this novel outlier detection approach is a significant contribution, the versatility of the \RPD{} extends well beyond it. 
When used in depth-depth classification \citep{Li_etal2012}, the \RPD{} also performs exceptionally well in scenarios where groups differ by shape.
Moreover, the \RPD{}-rank-based test is powerful under alternatives involving differences in mean magnitude, mean shape, or covariance structure.
The \RPD{} also shows competitive performance in robust location estimation. Overall, these studies demonstrate the practical effectiveness and versatility of the \RPD{}, with particular strength in shape-focused inferential tasks in functional data analysis.

The remainder of the paper is organized as follows.
Section~\ref{sec: SelectionTuningParameter} proposes a strategy for selecting the regularization parameter $\beta$ and examines which theoretical results for \RPD{} continue to hold when $\beta$ is taken in this data-adaptive manner. 
In Section~\ref{sec: Computation}, we describe an efficient and simple method for approximate computation of the (sample) \RPD{}, 
establish the consistency of this approximation, 
and analyze its convergence properties.
Our statistical contributions are collected in Section~\ref{sec: Simulations}, which mainly addresses three tasks: (i) outlier detection and the introduction of the~\RPDOD{} method (Section~\ref{sec: OutlierDetection}),
(ii)~supervised classification (Section~\ref{sec: Classification}), and
(iii) hypothesis testing (Section~\ref{sec: HypothesisTesting}). 
For each task, we demonstrate the applicability of~\RPD{} and evaluate its performance relative to other, both classical and cutting-edge depth notions across a variety of scenarios. 
Finally, Section~\ref{sec: RealData} demonstrates the application of~\RPD{} to real-world data analysis.
Detailed proofs of all theoretical results (Section~\ref{Supplement: Proofs}), additional simulation studies concerning robust location estimation (Section~\ref{Supplement: LocationEstimation}), supplementary simulation results (Section~\ref{Supplement: Simulations}), and complete \textsf{R} codes of the performed numerical studies (including an \textsf{R} package for computing \RPD{}) are given in an online supplementary material.

\subsection*{Notations}

We consider a real separable (typically infinite-dimensional) Hilbert space $\HH$ with inner product $\inner{\cdot, \cdot}$ and the induced norm $\norm{\cdot}$.
Let $\SS = \set{v \in \HH \st \norm{v}=1}$ denote the unit sphere in $\HH$. For a set $A \subseteq \HH$, we use $\Clo{A}$ and $\spn{A}$ to denote the closure and linear span of $A$, respectively.
All random variables are defined on a common probability space $\left( \Omega, \mathcal F, \pr \right)$.
Throughout the paper, “a.s.” stands for “almost surely”.
The set of all Borel probability measures on a metric space $\mathbb{M}$ is denoted by $\mathcal{P}(\mathbb{M})$; 
we write $X \sim P_X \in \P{\mathbb{M}}$ for a random variable $X$ in $\mathbb{M}$ whose distribution is $P_X$.
The symbol \scalebox{0.8}{$\eqdis$} denotes equality in distribution. 
We write $Z_n \xasto[n\to\infty] Z$ to denote the almost sure convergence of random variables $\{Z_n\}_{n=1}^\infty$ to $Z$ as $n \to \infty$.

\section{Selection of tuning parameter}\label{sec: SelectionTuningParameter}

The theory of the~\RPD{} was originally developed under the assumption of a fixed parameter $\beta>0$~in~\eqref{eqRegDir}. 
In practice, however, the parameter $\beta$ is taken as the $u$-quantile of $\MAD[\inner{X,V}]$, 
where $V$ is independent of $X$ and supported on~$\SS$; see also~\citet[Section~4]{Bocinec_etal2026}. 
Formally, we make the following definition.

\begin{definition}[Regularization based on quantiles]
Let $u \in [0,1)$ and
$V$ be a random element supported on $\SS$.\footnote{By the support of a random element $V$ we mean the smallest closed set $A\subseteq \HH$ such that $\pr(V\in A)=1$.} 
Consider the univariate random variable $\MAD[\inner{X,V}](\omega) = \MAD[\inner{X,V(\omega)}]$, $\omega\in\Omega$, and define
\begin{equation}\label{eqBetaU}
    \beta(u) = q_u[\MAD[\inner{X,V}]]
\end{equation}
be the $u$-quantile\footnote{Let $Z$ be a non-negative real-valued random variable with distribution function $F$. The $u$-quantile of $Z$ is defined as $q_u=\inf\set{x\geq 0 \st F(x)\geq u}$. In particular, we define $q_0=0$.}
of $\MAD[\inner{X,V}]$. 
\footnote{Note that we treat $\MAD[\inner{X,V}]$ in~\eqref{eqBetaU} as a random variable with respect to the random element $V$ (considering $X$ as fixed), and therefore, when computing quantiles, we take into account the randomness induced by $V$. We will adhere to this convention throughout the paper, whenever writing $\MAD[\inner{X,V}]$.} Using $\beta(u)$, we set 
\begin{equation}\label{eqRPDu}
    \VV_{(u)} = \VV_{\beta(u)}\quad\text{and}\quad D_{(u)}(\cdot; P_X) = D_{\beta(u)}(\cdot; P_X).
\end{equation}
\end{definition}

The depth~$D_{(u)}$ is the \RPD{} based on a regularization by the $u$-quantile of $\MAD[\inner{X,V}]$ that is used in practice. 
Intuitively, in~\eqref{eqRPDu} we restrict attention to the $(1-u)$-fraction of directions $v\in\SS$ with the largest values of $\MAD[\inner{X,v}]$ in~\eqref{eqRegDir}.
Smaller values of $u$ correspond to weaker regularization, with $\VV_{(0)}=\SS$ in the boundary case. 
Note that the regularization in the definition of~\RPD{} can be viewed as a form of effective dimension reduction. 
Choosing the regularization parameter $\beta(u)$ as a quantile, therefore, allows us to quantify the degree of effective dimension reduction by means of a scalar $u \in [0,1)$.

The choice of $\beta$ as a $u$-quantile guarantees that $\VV_{(u)}\neq\emptyset$. 
As shown in the following result, for $u>0$, we obtain that $\beta(u)>0$ under mild assumptions.
Recall that a \emph{hyperplane} in $\HH$ is a set of the form $\left\{ x \in \HH \colon \inner{x, v} = c \right\}$ for some $v \in \SS$ and $c \in \R$.

\begin{lemma}\label{lemma: betaPos}
    Let $X \sim P_X \in \P{\HH}$ and $u\in(0,1)$. Suppose that $V\sim\nu\in\P{\SS}$ is smooth in the sense that $\nu(H) = 0$ for every hyperplane $H \subset \HH$ passing through the origin. If $P_X$ has no atom with probability at least $1/2$, i.e.
    \begin{equation}\label{eq: atomAssumpt}
        \sup_{x\in\HH}P_X(\set{x}) < \frac{1}{2},
    \end{equation}
    then $\beta(u)>0$.
\end{lemma}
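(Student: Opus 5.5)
The plan is to reduce $\beta(u)>0$ to a statement about the law of $\MAD[\inner{X,V}]$ at zero, and then to a geometric statement about hyperplanes that carry at least half of the mass of $P_X$.

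\emph{Step 1 (reduction).} Let $F$ be the distribution function of the non-negative random variable $\MAD[\inner{X,V}]$. Since $F$ is right-continuous and $q_u=\inf\set{x\geq 0\st F(x)\geq u}$, we have $q_u>0$ as soon as $F(0)<u$. As $u>0$, it therefore suffices to show
\[
F(0)=\nu\bigl(\set{v\in\SS\st \MAD[\inner{X,v}]=0}\bigr)=0 .
\]

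\emph{Step 2 (translating $\MAD=0$).} Fix $v$ and put $U=\inner{X,v}$, $m=\med[U]$ and $W=\abs{U-m}\geq 0$. The medians of $W$ form an interval $[0,b]$, because $\pr(W\geq 0)=1$. Its midpoint equals zero only if $b=0$, and in particular $0$ must be a median of $W$. Hence $\pr(W\leq 0)\geq 1/2$, that is,
\[
P_X(H_v)\geq \tfrac12, \qquad H_v=\set{x\in\HH\st\inner{x,v}=m}.
\]
So the set to control is contained in
\[
A=\set{v\in\SS\st P_X(\set{x\st\inner{x,v}=c})\geq 1/2 \text{ for some } c\in\R}.
\]
It remains to prove $\nu(A)=0$. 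Smoothness of $\nu$ gives zero mass to every hyperplane through the origin, and hence to every countable union of such hyperplanes. The target is therefore the following claim: $A$ is contained in a countable union of hyperplanes through $0$, or is itself countable up to sign, in which case it is $\nu$-null, since each single point $\set{v}$ lies in the hyperplane $\set{w\st\inner{w,e}=0}$ for any unit $e\perp v$.

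\emph{Step 3 (main obstacle: structure of $A$).} This is where assumption \eqref{eq: atomAssumpt} enters. Suppose $v,w\in A$ are not parallel. Then $H_v\cap H_w$ is a closed affine subspace of codimension two. Inclusion–exclusion gives
\[
P_X(H_v\cup H_w)=P_X(H_v)+P_X(H_w)-P_X(H_v\cap H_w)\le 1 .
\]
My plan is to work with ``minimal'' heavy affine subspaces, namely closed affine subspaces $E$ with $P_X(E)\geq 1/2$ such that every proper closed affine subspace of $E$ has strictly smaller mass. I would then argue as follows.
\begin{enumerate*}[label=(\roman*)]
\item The atom condition prevents the intersection of distinct heavy sets from retaining mass $1/2$, since iterating intersections downward would eventually force an atom of mass at least $1/2$.
\item Consequently, the heavy hyperplanes $H_v$, $v\in A$, differ pairwise by sets of positive mass.
\item Since $P_X$ is a probability measure, a family of sets each of mass at least $1/2$ whose pairwise intersections are ``small'' must be countable.
\end{enumerate*}
Each heavy hyperplane determines its normal up to sign, so countably many heavy hyperplanes give countably many directions in $A$, and $\nu(A)=0$ follows.

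The delicate part is point (iii): the mass-counting bound can fail if the pairwise intersections carry mass arbitrarily close to $1/2$. I would try to handle this by grouping the hyperplanes according to the heavy lower-dimensional affine subspace they contain. Every $v$ normal to a fixed closed affine subspace $E$ lies in $(E-E)^{\perp}$. If $E$ has dimension at least one, that set is contained in a hyperplane through the origin, which is $\nu$-null. If $E$ is a point, its mass is below $1/2$ by \eqref{eq: atomAssumpt}. This reduces the problem to showing that there are only countably many such heavy intersections, and I expect this counting argument to be the main work.
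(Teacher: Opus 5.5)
Your Steps~1 and~2 are sound and match the paper's reduction. It suffices to show $\nu(\set{v\in\SS\st \MAD[\inner{X,v}]=0})=0$, and $\MAD[\inner{X,v}]=0$ forces the median hyperplane $H_v$ to satisfy $P_X(H_v)\geq 1/2$. (The median set of $W$ is some interval $[a,b]\subseteq[0,\infty)$, not necessarily starting at $0$, but your conclusion is unaffected.)

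The genuine gap is Step~3. It carries the entire content of the lemma, and it is not proved. Sub-claims (i) and (ii) are false as stated. Take $X=Ze$ with $Z\sim N(0,1)$ and $e\in\SS$. Then $P_X$ has no atoms, yet every hyperplane $\set{x\in\HH\st\inner{x,v}=0}$ with $v\perp e$ has mass $1$. Any two of them intersect in a set of mass $1$, and they differ pairwise only by null sets. Moreover $A=\SS\cap\set{v\st\inner{v,e}=0}$ is uncountable, so ``countable up to sign'' fails as well.

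Your last paragraph partially repairs this by grouping directions according to a heavy affine subspace $E$ of positive dimension. But the assertion that countably many such $E$ suffice to cover $A$ is exactly what remains to be shown, and you leave it open. In $\R^d$ this could be done by induction on dimension, since the affine subspaces carrying mass not already carried by lower-dimensional ones form a countable family. In an infinite-dimensional $\HH$, however, hyperplanes and all their finite intersections are infinite-dimensional. There is no dimension to induct on and no bound on the length of the descent, so a genuinely new argument would be required, and your sketch does not supply one.

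The idea you are missing is that no structural description of $A$ is needed: the paper applies Fubini--Tonelli on $P_X\otimes\nu$ twice. Suppose $B=\set{v\in\SS\st P_X(H_v)\geq 1/2}$ had $\nu(B)>0$. Then
\[
\int_\HH \nu(\set{v\in B\st \inner{y,v}=\med[\inner{X,v}]})\dd P_X(y)=\int_B P_X(H_v)\dd\nu(v)\geq \nu(B)/2>0,
\]
so there is a point $x$ lying on $H_v$ for all $v$ in a set $B'$ with $\nu(B')>0$. Applying Fubini--Tonelli again,
\[
\int_\HH \nu(\set{v\in B'\st \inner{y-x,v}=0})\dd P_X(y)=\int_{B'}P_X(H_v)\dd\nu(v)\geq \nu(B')/2 .
\]
For $y\neq x$ the integrand vanishes, because $\set{v\st\inner{y-x,v}=0}$ is a hyperplane through the origin and hence $\nu$-null. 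So the left-hand side equals $\nu(B')P_X(\set{x})$, giving $P_X(\set{x})\geq 1/2$, which contradicts the atom assumption.

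Smoothness of $\nu$ is thus used in a dual role: it annihilates the directions orthogonal to a single fixed nonzero vector $y-x$, rather than absorbing a countable cover of $A$. That is why no counting is needed and the argument works unchanged in infinite dimension.
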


The proof of Lemma~\ref{lemma: betaPos}, as well as all the other proofs from this paper, is provided in Section~\ref{Supplement: Proofs} of the online supplementary material.
Throughout this paper, we assume that the conditions of Lemma~\ref{lemma: betaPos} are satisfied.
These assumptions are natural, since one may choose $\nu$ as the distribution of $Y/\norm{Y}$, where $Y$ is (any) centered, non-degenerate Gaussian element in $\HH$.
Moreover, assumption~\eqref{eq: atomAssumpt} on $P_X$ is standard and is typically satisfied.

Note that the choice of the regularization parameter as a quantile does not alter most of the main theoretical results of~\citet{Bocinec_etal2026}.
The departures from the theory are summarized below:

\begin{itemize}
    \item \textbf{Non-degeneracy.} The non-degeneracy of the~\RPD{} from~\citet[Theorem~2]{Bocinec_etal2026} holds provided that $u\in [0,1)$ is such that $\beta(u)>0$. Lemma~\ref{lemma: betaPos} above shows that this is satisfied for $u>0$ under mild assumptions.
    
    \item \textbf{Orthogonal invariance.} The orthogonal invariance property of the~\RPD{} from~\citet[Theorem~5]{Bocinec_etal2026} no longer holds. 
    If $\mathcal{S}\colon \HH \to \HH$ is a unitary operator,\footnote{A bounded linear operator $\mathcal{S}\colon \HH \to \HH$ is called \emph{unitary} if and only if $\mathcal{S}\mathcal{S}^*=\mathcal{S}^*\mathcal{S}=\mathcal{I}$, where $\mathcal{I}$ denotes the identity operator and $\mathcal{S}^*$ the adjoint operator of~$\mathcal{S}$.} then, in general, the random variable $\MAD[\inner{\mathcal{S}X,V}](\omega)=\MAD[\inner{X,\mathcal{S}^*V}](\omega)$ has a different distribution from $\MAD[\inner{X,V}](\omega)$. Consequently, a unitary transformation $\mathcal{S}$ leads to a different value of $\beta(u)$. 
    This leads to the following result.
    
    \begin{theorem}\label{thm: invariance}
        Let $\mathcal{T}x = \mathcal{S}x + e$ for all $x \in \HH$, where $\mathcal{S} \colon \HH \to \HH$ is a unitary operator and $e \in \HH$ is a fixed vector. Then $D_{(u)}(\mathcal{T}x; P_{\mathcal{T}X}) = D'_{(u)}(x; P_X)$ for all $x\in\HH$. Here, $D'_{(u)}$ denotes the~\RPD{} where $\beta(u)$ in~\eqref{eqBetaU} is replaced by $q_u[\MAD[\inner{X,\mathcal{S}^*V}]]$.
    \end{theorem}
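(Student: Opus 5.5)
The plan is to unfold both sides of the claimed identity from the definitions and compare them term by term. Three ingredients are needed: how the median and $\MAD$ of projections transform under $\mathcal{T}$, how the regularized set of directions transforms, and the fact that $\mathcal{S}$ is a bijection of $\SS$.

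\textbf{Projections of the transformed variable.} Fix $v\in\SS$. Since $\inner{\mathcal{T}X,v}=\inner{X,\mathcal{S}^*v}+\inner{e,v}$, the projection of $\mathcal{T}X$ in direction $v$ is a shift of the projection of $X$ in direction $\mathcal{S}^*v$. Median equivariance under shifts and $\MAD$ invariance under shifts (both immediate from the definitions in the footnote) then give
\[
\med[\inner{\mathcal{T}X,v}]=\med[\inner{X,\mathcal{S}^*v}]+\inner{e,v},\qquad \MAD[\inner{\mathcal{T}X,v}]=\MAD[\inner{X,\mathcal{S}^*v}].
\]
Consequently
\[
O_v(\mathcal{T}x;P_{\mathcal{T}X})=\frac{\abs{\inner{x,\mathcal{S}^*v}-\med[\inner{X,\mathcal{S}^*v}]}}{\MAD[\inner{X,\mathcal{S}^*v}]}=O_{\mathcal{S}^*v}(x;P_X).
\]
I will make this hold pointwise for every $v$, including when the $\MAD$ vanishes, so it also covers the $\beta=0$ case. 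In that case I will use whatever convention for $O_v$ the original paper adopts; the identity holds for either convention, because numerator and denominator match exactly.

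\textbf{Regularization parameter.} By the displayed $\MAD$ identity applied with $v=V(\omega)$, the random variable $\MAD[\inner{\mathcal{T}X,V}]$ coincides $\omega$-wise with $\MAD[\inner{X,\mathcal{S}^*V}]$. Hence the $\beta(u)$ used for $D_{(u)}(\cdot;P_{\mathcal{T}X})$ equals $\beta'=q_u[\MAD[\inner{X,\mathcal{S}^*V}]]$, which is exactly the parameter defining $D'_{(u)}$.

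\textbf{Change of variables in the infimum.} Write $\VV_{\beta'}(P_{\mathcal{T}X})=\set{v\in\SS\st \MAD[\inner{X,\mathcal{S}^*v}]\ge\beta'}$. Because $\mathcal{S}^*$ is unitary, it maps $\SS$ bijectively onto $\SS$, with inverse $\mathcal{S}$. The substitution $w=\mathcal{S}^*v$ therefore maps $\VV_{\beta'}(P_{\mathcal{T}X})$ bijectively onto $\VV_{\beta'}(P_X)$. Combining this with the outlyingness identity,
\[
D_{(u)}(\mathcal{T}x;P_{\mathcal{T}X})=\inf_{w\in\VV_{\beta'}(P_X)}\bigl(1+O_w(x;P_X)\bigr)^{-1}=D'_{(u)}(x;P_X).
\]

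\textbf{Main obstacle.} The argument is essentially bookkeeping, and the only delicate point is the second step. There I must check that the quantile is taken with respect to the same underlying randomness $V$ on both sides, so that the identification is an equality of random variables and hence of quantiles, not merely of distributions. This holds because the identity is $\omega$-wise. Nothing beyond the definitions is needed: Lemma~\ref{lemma: betaPos} plays no role.
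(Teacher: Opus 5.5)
Your proposal is correct and follows essentially the same route as the paper. The paper also derives the pointwise identity $O_v(\mathcal{T}x;P_{\mathcal{T}X})=O_{\mathcal{S}^*v}(x;P_X)$ from shift-equivariance of the median and shift-invariance of the $\MAD$, then rewrites the transformed direction set as $\set{\mathcal{S}w \st w\in\SS,\ \MAD[\inner{X,w}]\geq q_u[\MAD[\inner{X,\mathcal{S}^*V}]]}$ and takes the infimum.
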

    
    A natural idea is to consider a unitarily invariant $V$, i.e., $\mathcal{S}V \eqdis V$ for all unitary $\mathcal{S}$. Unitarily invariant non-degenerate distributions, however, do not exist when $\dim(\HH)=\infty$~\citep[p.~1]{Kuo1975}. 
    In practical applications, this issue does not arise because the observed functions are represented in a discretized form with finitely many values. Consequently, the dimension is finite, and $V$ can naturally be chosen uniformly on the unit sphere of the corresponding finite-dimensional space.
    
    Although orthogonal invariance is lost, an immediate consequence of Theorem~\ref{thm: invariance} is that the~\RPD{} is shift invariant:
    \begin{equation}\label{eq: shiftInvariance}
        D_{(u)}(x+e; P_{X+e}) = D_{(u)}(x; P_X) \quad \text{for all } e \in \HH \text{ and } x \in \HH.
    \end{equation}
        
    \item \textbf{Consistency.} The quantile $\beta(u)$ depends on the underlying distribution $P_X$. Hence, it must be approximated from a random sample. The sample version consistency of \RPD{} remains to hold under mild assumptions; we elaborate on this in~Theorem~\ref{thm: ConsistRegQuantile} in Section~\ref{subsec: consistency} below.
    
    \item \textbf{Robustness of the induced median.} Theorem~10 in~\citet{Bocinec_etal2026} shows that for fixed $\beta>0$, the breakdown point of the median induced by the~\RPD{} equals~$1/2$. The case where $\beta=\beta(u)$ is chosen as in~\eqref{eqBetaU} is discussed in Theorem~\ref{thm: BP} in Section~\ref{subsec: robustness} below.
\end{itemize}

\subsection{Consistency with \texorpdfstring{$\beta$}{beta} chosen as a quantile} \label{subsec: consistency}

For $X \sim P_X \in \P{\HH}$, consider a random sample $\set{X_i}_{i=1}^n$ of size $n$, where $X_i \eqdis X$. Denote by $\widehat{P}_n$ the corresponding empirical distribution, $\widehat{P}_n = n^{-1}\sum_{i=1}^n\delta_{X_i}$, where $\delta_x$ is the Dirac measure at $x \in \HH$. For $v \in \SS$, define
$\widehat{\med}\left[\set{\inner{X_i, v}}_{i=1}^n\right]$ and
$\widehat{\MAD}\left[\set{\inner{X_i, v}}_{i=1}^n\right]$ 
as the sample median and sample MAD based on the projections $\set{\inner{X_i,v}}_{i=1}^n$, respectively.\footnote{For a sample $\set{Z_i}_{i=1}^n \subset \R$ with order statistics $Z_{(1)} < \cdots < Z_{(n)}$, the sample median is defined as $\widehat{\med}\left[ \set{Z_i}_{i=1}^n\right] = Z_{((n+1)/2)}$ if $n$ is odd and $\widehat{\med}\left[ \set{Z_i}_{i=1}^n\right] = (Z_{(n/2)} + Z_{(n/2+1)})/2$ otherwise. The sample MAD is then defined by $\widehat{\MAD}\left[\set{Z_i}_{i=1}^n\right] = \widehat{\med}\left[\set{\abs{Z_i - \widehat{\med}\left[\set{Z_j}_{j=1}^n\right]}}_{i=1}^n\right]$.}

With this notation, the regularization parameter $\beta(u)$ is estimated by its sample analog, that is, we define $\widehat{\beta}_n(u)$ as the $u$-quantile of $\widehat{\MAD}\left[\set{\inner{X_i, V}}_{i=1}^n\right]$. Note that this quantile is taken conditionally on $\set{X_i}_{i=1}^n$, thus $\widehat{\beta}_n(u)$ remains random since it depends on the realization of the sample $\set{X_i}_{i=1}^n$. As shown in the following lemma, under mild assumptions, $\widehat{\beta}_n(u)$ is a consistent estimator of $\beta(u)$.

\begin{lemma}\label{lemma: betaConv}
    Let $X \sim P_X \in \P{\HH}$ have a contiguous support.\footnote{We say that a random variable $X \sim P_X \in \P{\HH}$ has contiguous support \citep{Kong_Zuo2010} if for each $v \in \SS$, the support of the real random variable $\inner{X,v}$ is connected.}
    Further assume that $\beta(u)$ is a point of continuity of the distribution function of $\MAD[\inner{X,V}]$. Then $\widehat{\beta}_n(u)\xasto[n\to\infty]\beta(u)$.
\end{lemma}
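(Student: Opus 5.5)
Our plan is to reduce the statement to uniform almost sure convergence of the sample MAD over directions, and then pass to quantiles. Write $M(v) = \MAD[\inner{X,v}]$ and $\widehat M_n(v) = \widehat{\MAD}\left[\set{\inner{X_i,v}}_{i=1}^n\right]$. Let $F$ and $\widehat F_n$ denote the distribution functions of $M(V)$ and $\widehat M_n(V)$ with respect to $V\sim\nu$, the latter taken conditionally on the sample. The key intermediate claim is
\begin{equation*}
    \sup_{v\in\SS}\abs{\widehat M_n(v) - M(v)} \xasto[n\to\infty] 0 .
\end{equation*}

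\textbf{Uniform convergence of the median.} We first show this for the median. Fix $\omega$ in the probability-one event on which the empirical measures $\widehat P_n$ converge uniformly to $P_X$ over the class of halfspaces $\set{x\st \inner{x,v}\le t}$. If a Glivenko--Cantelli property for this class in infinite dimension is not available, we use the weaker fact that $\widehat P_n \to P_X$ weakly a.s. in the separable space $\HH$. Combined with tightness, this gives uniform convergence of the projected distribution functions over $v$ in compact sets, and a separate argument is then needed for directions in which $X$ is concentrated, since there both quantities are small.

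Contiguous support makes the median of $\inner{X,v}$ unique, because the distribution function strictly increases through level $1/2$. It also makes the median a continuous functional at the projected law. A standard argument then transfers the uniform closeness of the projected distribution functions to uniform closeness of the medians. For this we need a uniform positive lower bound on the increase $F_v(m_v+\varepsilon)-F_v(m_v-\varepsilon)$ over $v$. This is where we expect the main obstacle, since $\SS$ is not compact.

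\textbf{The MAD and the obstacle.} We then repeat the argument for $\abs{\inner{X,v}-\med}$, which again has connected support near its median, to get the uniform statement for $\widehat M_n$. If the uniform lower bound proves hard, we fall back on pointwise a.s. convergence for each fixed $v$, which follows from strong consistency of sample medians under uniqueness. We then apply Fubini on $\Omega\times\SS$ to get, for a.e. $\omega$, convergence $\widehat M_n(V)\to M(V)$ for $\nu$-a.e. $V$. This suffices for the last step and avoids uniformity altogether.

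\textbf{Passing to quantiles.} From a.s. convergence in $V$ (either version) we get $\widehat M_n(V)\to M(V)$ in distribution under $\nu$, for a.e. sample path. Hence $\widehat F_n(t)\to F(t)$ at every continuity point $t$ of $F$. A standard fact is that quantiles converge at $u$ whenever $F$ is continuous at $q_u$ and $F$ strictly crosses level $u$ there. We then argue at continuity points $t$ slightly below and above $\beta(u)$. For $t>\beta(u)$ we have $F(t)\ge u$, and we need strict inequality to conclude $\widehat F_n(t)>u$ eventually. Similarly, for $t<\beta(u)$ we have $F(t)<u$ by the definition of the quantile, so $\widehat F_n(t)<u$ eventually. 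Together these give $\liminf_n \widehat\beta_n(u)\ge\beta(u)$.

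For the upper bound, continuity of $F$ at $\beta(u)$ gives $F(\beta(u))\ge u$. If equality holds on a flat stretch, the quantile may not converge. We therefore interpret the hypothesis as ensuring that $F(t)>u$ for $t>\beta(u)$, noting that this is the step where the assumption is used. This yields $\widehat\beta_n(u)\to\beta(u)$ almost surely.
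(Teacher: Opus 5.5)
Your skeleton matches the paper's: uniform a.s.\ convergence of the sample MAD over $\SS$, then distributional convergence of $\widehat M_n(V)$, then convergence of the $u$-quantile. The paper does not prove the uniform statement. It imports it from Lemma~12 of the companion paper (Bo\v{c}inec et al., 2026), and that lemma is where contiguous support is used.

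Your own sketch of the uniform statement does not close. Uniform convergence over halfspaces is false in general in infinite dimension. For a centred Gaussian $X$ with infinite-dimensional support, almost surely for every $n$ you can solve the Gram system to get $v\in\SS$ with $\inner{X_i,v}>0$ for all $i\le n$, while $\pr(\inner{X,v}\le 0)=1/2$. Your alternative via weak convergence and tightness stops at the non-compactness obstacle you flag yourself. Citing that lemma is the intended route.

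The genuine gap is in the last step. You end by replacing the hypothesis ``$\beta(u)$ is a continuity point of $F$'' with ``$F(t)>u$ for all $t>\beta(u)$''. Neither condition implies the other, so as written you prove a different lemma.

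Your diagnosis is correct for the route you chose. Convergence in distribution cannot rule out a flat stretch of $F$ at level $u$ to the right of $\beta(u)$, and neither can the weaker $\nu$-a.e.\ convergence your Fubini fallback delivers. On such a stretch the quantile need not converge.

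The missing idea is to use the uniformity directly instead of discarding it. Set $\delta_n=\sup_{v\in\SS}\abs{\widehat M_n(v)-M(v)}$. Then $M(V)-\delta_n\le\widehat M_n(V)\le M(V)+\delta_n$ for every realization of $V$, hence $F(t-\delta_n)\le\widehat F_n(t)\le F(t+\delta_n)$ for all $t$. The definition $q_u=\inf\set{x\ge 0\st F(x)\ge u}$ then gives
\[
\abs{\widehat\beta_n(u)-\beta(u)}\le\delta_n\xasto[n\to\infty]0,
\]
with no continuity or uniqueness condition at $\beta(u)$ at all. This is also why the Fubini fallback cannot replace the uniform statement: it throws away exactly the information that neutralizes flat stretches.

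The paper phrases its final step as distributional convergence plus a quantile result from Serfling. Your objection applies to that phrasing too; what actually closes the argument is the uniform bound already in hand.
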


The sample version of the~\RPD{} is defined as
\begin{equation}
    D_{(u)}(x; \widehat{P}_n)=\inf_{v\in\widehat{\VV}_{n(u)}}\left(1+O_v(x; \widehat{P}_n)\right)^{-1}
    =\left(1+\sup_{v\in\widehat{\VV}_{n(u)}}\frac{\abs{\inner{x,v}-\widehat{\med}\left[\set{\inner{X_i, v}}_{i=1}^n\right]}}{\widehat{\MAD}\left[\set{\inner{X_i, v}}_{i=1}^n\right]}\right)^{-1}, \label{eqRPDsample}
\end{equation}
where 
\begin{equation*}
    \widehat{\VV}_{n(u)}
    = \set{v \in \SS \st \widehat{\MAD}\left[\set{\inner{X_i, v}}_{i=1}^n\right] \geq \widehat{\beta}_n(u)}.
\end{equation*}

Next, we show that defining $\beta(u)$ as a quantile does not affect the consistency of~\RPD{}.
To this end, we introduce the maximal outlyingness function $\Gamma_x$ at $x \in \HH$ over $\VV_t$ (from~\eqref{eqRPDu} with $\beta(u)=t$) as
\begin{align}\label{eqMaxOut}
    \Gamma_x(t) = \sup_{v \in \VV_{t}} O_v(x; P_X), \quad t \in [0, \infty).
\end{align}

\begin{theorem}\label{thm: ConsistRegQuantile}
    Suppose that $X \sim P_X \in \P{\HH}$ has contiguous support and let $u \in [0,1)$ be such that~$\beta(u) > 0$. 
    Assume that $\beta(u)$ is a point of continuity of the distribution function of $\MAD[\inner{X,V}]$.
    Furthermore, assume that $\mathcal{F} \subset \HH$ is a bounded set for which the family $\set{\Gamma_x : x \in \mathcal{F}}$ of maximal outlyingness functions~\eqref{eqMaxOut} is equicontinuous at $\beta$, i.e.,
    \begin{equation}\label{consAssump}
        \lim_{t \to \beta} \sup_{x \in \mathcal{F}} \abs{\Gamma_x(t) - \Gamma_x(\beta)} = 0.
    \end{equation}
    Then, the sample \RPD{} in~\eqref{eqRPDsample} is uniformly consistent for the population \RPD{} in~\eqref{eqRPDu} over $\mathcal{F}$, namely
    \begin{equation*}
        \sup_{x \in \mathcal{F}} \abs{D_{(u)}(x; \widehat{P}_n) - D_{(u)}(x; P_X)} \xasto[n \to \infty] 0.
    \end{equation*}
\end{theorem}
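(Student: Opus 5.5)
We reduce the statement to the consistency theory for a fixed regularization level in \citet{Bocinec_etal2026}, combined with Lemma~\ref{lemma: betaConv}. Write $\beta=\beta(u)$ and $\widehat\beta_n=\widehat\beta_n(u)$. Since $t\mapsto (1+t)^{-1}$ is $1$-Lipschitz on $[0,\infty]$, it suffices to show
\[
\sup_{x\in\mathcal F}\abs{\widehat\Gamma_{x,n}(\widehat\beta_n)^{\wedge}-\Gamma_x(\beta)^{\wedge}}\xasto[n\to\infty]0 .
\]
Here $\widehat\Gamma_{x,n}(t)=\sup\{\widehat O_v(x)\st v\in\SS,\ \widehat{\MAD}[\set{\inner{X_i,v}}_{i=1}^n]\geq t\}$ is the sample maximal outlyingness, and $a^{\wedge}=(1+a)^{-1}$. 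We work on the probability-one event where $\widehat\beta_n\to\beta$ (Lemma~\ref{lemma: betaConv}). On the same event we also want
\[
\sup_{v\in\SS}\abs{\widehat{\med}[\set{\inner{X_i,v}}]-\med[\inner{X,v}]}\to0
\quad\text{and}\quad
\sup_{v\in\SS}\abs{\widehat{\MAD}[\set{\inner{X_i,v}}]-\MAD[\inner{X,v}]}\to0 .
\]
These uniform convergences over directions are exactly the ingredients used in \citet{Bocinec_etal2026} under the contiguous-support assumption, and we invoke them from there.

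\textbf{Step 1 (sandwich).} Fix $\varepsilon>0$ with $\beta-\varepsilon>0$. For $n$ large we have $\abs{\widehat\beta_n-\beta}<\varepsilon/2$ and $\sup_v\abs{\widehat{\MAD}-\MAD}<\varepsilon/2$. Then the directions satisfy
\[
\VV_{\beta+\varepsilon}\subseteq\widehat{\VV}_{n(u)}\subseteq\VV_{\beta-\varepsilon}.
\]
Indeed, if $\MAD[\inner{X,v}]\geq\beta+\varepsilon$ then $\widehat{\MAD}>\beta+\varepsilon/2>\widehat\beta_n$, and the reverse inclusion follows symmetrically.

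\textbf{Step 2 (outlyingness comparison).} On $\VV_{\beta-\varepsilon}$ the denominators $\MAD$ and $\widehat{\MAD}$ are bounded below by $\beta-\varepsilon$ and $\beta-3\varepsilon/2$, both positive. Since $\mathcal F$ is bounded, say $\norm{x}\le R$, and the medians are uniformly bounded in $v$ (again from the contiguous-support theory), we get
\[
\sup_{x\in\mathcal F}\sup_{v\in\VV_{\beta-\varepsilon}}\abs{\widehat O_v(x)-O_v(x;P_X)}\le C_\varepsilon\,\delta_n ,
\]
where $\delta_n$ is the sum of the two uniform errors in median and MAD, and $\delta_n\to0$. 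Combining this with Step~1 gives, for all $x\in\mathcal F$,
\[
\Gamma_x(\beta+\varepsilon)-C_\varepsilon\delta_n\le\widehat\Gamma_{x,n}(\widehat\beta_n)\le\Gamma_x(\beta-\varepsilon)+C_\varepsilon\delta_n .
\]
Note that $\Gamma_x$ is nonincreasing in $t$. The values may be infinite only when $\beta-\varepsilon\le0$, which we excluded. Finiteness then follows from the bound $\Gamma_x(t)\le(R+\sup_v\abs{\med})/t$.

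\textbf{Step 3 (conclusion).} By the equicontinuity assumption~\eqref{consAssump}, $\sup_{x\in\mathcal F}\abs{\Gamma_x(\beta\pm\varepsilon)-\Gamma_x(\beta)}\to0$ as $\varepsilon\to0$. Hence
\[
\limsup_n\sup_{x\in\mathcal F}\abs{\widehat\Gamma_{x,n}(\widehat\beta_n)-\Gamma_x(\beta)}\le\eta(\varepsilon)\to0,
\]
and applying the Lipschitz map $t\mapsto(1+t)^{-1}$ finishes the proof.

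The main obstacle is Step~2's need for \emph{uniform-in-$v$} almost sure convergence of the sample median and MAD over the infinite-dimensional sphere. This is where contiguous support is essential, because it guarantees that the population median and MAD are unique and continuous in $v$. We rely on \citet{Bocinec_etal2026} for this. If their result only covers the fixed-$\beta$ depth, we would instead obtain the uniform statements from it by applying the fixed-$\beta$ consistency at levels $\beta\pm\varepsilon$ and sandwiching as above.
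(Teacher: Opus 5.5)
Your proof is correct, but it is organized differently from the paper's. The paper inserts the intermediate quantity $\Gamma_x(\widehat\beta_n(u))$ and splits by the triangle inequality. The gap between $\sup_{v\in\widehat{\VV}_{n(u)}}O_v(x;\widehat P_n)$ and $\Gamma_x(\widehat\beta_n(u))$ is handled by citing the fixed-$\beta$ consistency result \citep[Theorem~7]{Bocinec_etal2026}. The term $\sup_{x\in\mathcal F}\abs{\Gamma_x(\widehat\beta_n(u))-\Gamma_x(\beta(u))}$ is handled by Lemma~\ref{lemma: betaConv} together with \eqref{consAssump}.

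You never evaluate $\Gamma_x$ at the random level. Instead you sandwich $\widehat{\VV}_{n(u)}$ between the deterministic sets $\VV_{\beta+\varepsilon}$ and $\VV_{\beta-\varepsilon}$. You then compare sample and population outlyingness uniformly on $\VV_{\beta-\varepsilon}$, using uniform-in-$v$ convergence of the sample median and MAD. Finally you conclude from monotonicity of $\Gamma_x$ and equicontinuity at $\beta$.

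The paper's version is shorter because it uses the earlier theorem as a black box. That theorem, however, concerns a fixed deterministic level, with equicontinuity assumed at that level. Applying it at the $n$-dependent level $\widehat\beta_n(u)$ therefore tacitly needs a version that is uniform in the level. Your sandwich is exactly the argument that supplies this, and it uses \eqref{consAssump} only at $\beta(u)$, as hypothesized. Your key ingredient is the same one the paper invokes when proving Lemma~\ref{lemma: betaConv}, namely uniform MAD convergence \citep[Lemma~12]{Bocinec_etal2026}, plus the uniform median convergence that underlies it.

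Two minor remarks:
\begin{enumerate}
\item The positive lower bound $\beta-3\varepsilon/2$ for the sample MAD on $\VV_{\beta-\varepsilon}$ requires $\varepsilon<2\beta/3$. Alternatively, take $n$ large enough that the uniform MAD error is below $\beta-\varepsilon$.
\item Your closing fallback would not work as stated. Fixed-level consistency at $\beta\pm\varepsilon$ requires equicontinuity at those levels, which is not assumed. The fallback is not needed, though.
\end{enumerate}
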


Theorem~\ref{thm: ConsistRegQuantile} states that the consistency of the sample version $D_{(u)}(\cdot, \widehat{P}_n)$ is essentially equivalent to condition~\eqref{consAssump}, that is, to the uniform continuity of the maximal outlyingness function $\Gamma_x(\cdot)$. This continuity is, in fact, required already for the consistency of the standard \RPD{} with a fixed parameter $\beta$. It can be shown that~\eqref{consAssump} is satisfied, for instance, in the case of elliptically symmetric distributions on~$\HH$~\citep[Theorems~8 and 9]{Bocinec_etal2026}.

\subsection{Robustness of the induced median with \texorpdfstring{$\beta$}{beta} chosen as a quantile}\label{subsec: robustness}

Robustness analysis plays a central role in the study of depth-based functionals.
An estimator is said to be robust 
if small perturbations in the data distribution do not lead to substantial changes in its value.
We assess robustness through the breakdown point, which represents the maximum fraction of contamination that can be introduced before the estimator becomes arbitrarily unreliable. 
It is a crucial measure of the robustness of any statistical functional \citep{Huber_Ronchetti2009}.

In this section, we study the robustness of the \RPD{}-induced median using a modified breakdown point designed for infinite-dimensional spaces.
For $X \sim P_X \in \P{\HH}$ and $u \in (0,1)$,
the \RPD{} median of $P_X$ is defined as
\begin{equation} \label{eq: RPD median}
    \theta(P_X) = \theta_{(u)}(P_X)
    = \argmax\set{D_{(u)}(\theta; P_X)\st \theta \in \Clo{\spn{\VV_{(u)}}}}.
\end{equation}
We restrict our focus to points $\theta \in \Clo{\spn{\VV_{(u)}}}$ because all relevant information about $P_X$ in $D_{(u)}(\cdot; P_X)$ is contained in this subspace. 
More precisely, $D_{(u)}(x; P_X) = D_{(u)}(\widehat{x}; P_X)$ for any $x \in \HH$ and its orthogonal projection $\widehat{x}$ onto $\Clo{\spn{\VV_{(u)}}}$~\citep[Theorem~5d]{Bocinec_etal2026}.

To define our breakdown point notion, 
we first introduce contaminated distributions and their related quantities.
Assuming a smooth (as in Lemma~\ref{lemma: betaPos}) distribution $V \sim \nu \in \P{\SS}$ and an atomless $P_X$,
for $\varepsilon \in [0,1]$ and $Q \in \P{\HH}$, 
we define the $\varepsilon$-contaminated distribution 
$P_{(Q, \varepsilon)} \in \P{\HH}$ by $Q$ as
\begin{equation*}
    P_{(Q, \varepsilon)}(A)
    = (1-\varepsilon)P_X(A) + \varepsilon\, Q(A),
    \quad \text{for all Borel sets } A \subseteq \HH.
\end{equation*}
The contaminated regularization parameter
$\beta_{(Q,\varepsilon)}(u)$ is the $u$-quantile of
\begin{equation*}
    \MAD[\inner{X', V}], \quad \text{where } X' \sim P_{(Q, \varepsilon)}.
\end{equation*}
Under our assumptions, Lemma~\ref{lemma: betaPos} ensures that $\beta(u) > 0$, 
and for $\varepsilon < 1/2$, $\beta_{(Q,\varepsilon)}(u) > 0$ for any $Q \in \P{\HH}$.
The subsequent contaminated direction set 
$\VV_{(u)}(Q, \varepsilon)$ is defined by
\begin{equation*}
    \VV_{(u)}(Q, \varepsilon)
    = \set{v \in \SS \st \MAD[\inner{X', v}] \geq \beta_{(Q,\varepsilon)}(u)},
    \quad \text{where } X' \sim P_{(Q, \varepsilon)},
\end{equation*}
which is non-empty for any $Q \in \P{\HH}$ by construction.
Finally, tailored to our infinite-dimensional setup, the breakdown point of the \RPD{} median $\theta = \theta(P_X)$
is defined as\footnote{
    Since $\VV_{(u)} \neq \emptyset$ and $\VV_{(u)}(Q,\varepsilon) \neq \emptyset$, 
    the sets of medians of $P_X$ and $P_{(Q,\varepsilon)}$ are non-empty, see~\citet[Theorem~6]{Bocinec_etal2026}. 
    The particular choice of representatives $\theta(P_X)$ and $\theta(P_{(Q,\varepsilon)})$ does not affect our results.}
\begin{equation} \label{eq: BP}
    \varepsilon^*_{(u)}(\theta; P_X)
    = \inf\set{\varepsilon \in [0,1] \st 
        \sup_{Q \in \P{\HH}} 
        \sup_{v \in \VV_{(u)}(Q, \varepsilon)}
        \abs{\inner{\theta(P_{(Q,\varepsilon)}) - \theta(P_X), v}} = \infty}.
\end{equation}

The distance $\left\Vert \theta(P_{(Q,\varepsilon)}) - \theta(P_X) \right\Vert$ originally used in the classical breakdown point~\citep{Huber_Ronchetti2009} is replaced in~\eqref{eq: BP} by the maximum inner product over the contaminated direction set $\VV_{(u)}(Q, \varepsilon)$.
The intuition for this change comes from
the Cauchy-Schwarz inequality \citep[Theorem~5.3.3]{Dudley2002}, which provides
\begin{equation} \label{eq: Cauchy Schwarz}
    \sup_{v \in \SS}\abs{\inner{\theta(P_{(Q,\varepsilon)}) - \theta(P_X), v}} = \left\Vert \theta(P_{(Q,\varepsilon)}) - \theta(P_X) \right\Vert.
\end{equation}
Since the entire Hilbert sphere $\SS$ is too rich,
the supremum in \eqref{eq: BP} is taken over a smaller set $\VV_{(u)}(Q, \varepsilon)$.

The breakdown point analysis for the \RPD{} median with $\beta(u)$ chosen as a quantile depends on the smoothness of both $P_X$ and $V \sim \nu$.  
Let $F_v$ denote the distribution function of $\inner{X, v}$ for $v \in \SS$, and define its modulus of continuity
\begin{equation*}
    \omega_v(\delta) = \sup_{\abs{t_1 - t_2} \leq \delta} \abs{F_v(t_1) - F_v(t_2)}, \quad \delta\geq 0.
\end{equation*}
For $\varepsilon \in (0, 1/2)$, consider the technical condition
\begin{equation}\label{technAssBP} 
     \exists\gamma>0\st \nu\left(\set{v \in \SS \st 
        \omega_v(\gamma) < \frac{1/2 - \varepsilon}{1 - \varepsilon}}\right) 
        \geq 1 - u.
\end{equation}
This condition essentially states that there exist sufficiently many directions $v \in \SS$ (in terms of the measure $\nu$) for which the projection $\inner{X, v}$ is sufficiently scattered.
Note that the function $\omega_v(\cdot)$ is non-decreasing.
Consequently, if~\eqref{technAssBP} holds for some $\gamma$, it also holds for all $0<\gamma'\leq \gamma$.

\begin{theorem}\label{thm: BP}
    Consider $X \sim P_X \in \P{\HH}$ with no atoms, $u \in (0,1)$, and assume that $V \sim \nu \in \P{\SS}$ is smooth as in Lemma~\ref{lemma: betaPos}. 
    If assumption~\eqref{technAssBP} holds for some $\varepsilon \in (0, 1/2)$, then
    \begin{equation*}
        \sup_{Q \in \P{\HH}} 
        \sup_{v \in \VV_{(u)}(Q, \varepsilon)}
        \abs{\inner{\theta(P_{(Q,\varepsilon)}) - \theta(P_X), v}} < \infty.
    \end{equation*}
    Consequently, if~\eqref{technAssBP} holds for every $\varepsilon \in (0,1/2)$, 
    then $\varepsilon^*_{(u)}(\theta; P_X) = 1/2$.
\end{theorem}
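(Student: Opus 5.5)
The plan is to show that, under~\eqref{technAssBP}, the contaminated regularization parameter is bounded below uniformly in $Q$, and then to rerun the argument behind Theorem~10 of~\citet{Bocinec_etal2026}, which uses only a fixed $\beta>0$.

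\textbf{Step 1: a uniform lower bound on contaminated MADs.} Fix $\varepsilon\in(0,1/2)$, set $\eta=\frac{1/2-\varepsilon}{1-\varepsilon}$, and take $\gamma$ from~\eqref{technAssBP}. Let $A_\gamma=\set{v\in\SS\st \omega_v(\gamma)<\eta}$, so $\nu(A_\gamma)\geq 1-u$. I claim that for every $Q$ and every $v\in A_\gamma$ we have $\MAD[\inner{X',v}]\geq \gamma/2$, where $X'\sim P_{(Q,\varepsilon)}$.

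Write $m=\med[\inner{X',v}]$ and $d=\MAD[\inner{X',v}]$. By the definition of the MAD, $\pr(\abs{\inner{X',v}-m}\leq d)\geq 1/2$. The contaminated mass of this interval is at most
\[
(1-\varepsilon)\bigl(F_v(m+d)-F_v((m-d)^-)\bigr)+\varepsilon .
\]
If $2d<\gamma$, this is at most $(1-\varepsilon)\omega_v(\gamma)+\varepsilon<(1-\varepsilon)\eta+\varepsilon=1/2$, which is a contradiction. Some care is needed with left limits versus the modulus of continuity; I would handle it by slightly shrinking $\gamma$, which is allowed by monotonicity of $\omega_v$.

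Because $\nu(A_\gamma)\geq 1-u$, the MAD under contamination is at least $\gamma/2$ on a set of $\nu$-measure at least $1-u$. Hence $\beta_{(Q,\varepsilon)}(u)\geq\gamma/2$ for all $Q$. The quantile convention and possible ties at level exactly $u$ need checking here; if necessary I would use $\gamma/4$. Consequently $\VV_{(u)}(Q,\varepsilon)\subseteq\VV_{\gamma/2}(P_{(Q,\varepsilon)})$. The same computation with $\varepsilon=0$ also bounds $\beta(u)$ below, consistent with Lemma~\ref{lemma: betaPos}.

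\textbf{Step 2: bounding the projected displacement.} Take $v\in\VV_{(u)}(Q,\varepsilon)$ and write $\theta_Q=\theta(P_{(Q,\varepsilon)})$. I will compare the depths at $\theta_Q$ and at $\theta(P_X)$ under the contaminated law.

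First I need a lower bound on $D_{(u)}(\theta_Q;P_{(Q,\varepsilon)})$ that is uniform in $Q$. Consider the point $\theta(P_X)$, projected onto $\Clo{\spn{\VV_{(u)}(Q,\varepsilon)}}$, which does not change the depth. For each direction $w$, since $\varepsilon<1/2$, the contaminated median $\med[\inner{X',w}]$ lies between the $(\frac{1/2-\varepsilon}{1-\varepsilon})$- and $(\frac{1/2}{1-\varepsilon})$-quantiles of $\inner{X,w}$. These quantiles are controlled by $\norm{w}=1$ together with tightness of $P_X$. Thus $\abs{\inner{\theta(P_X),w}-\med[\inner{X',w}]}\leq C$ for a constant $C$ depending only on $P_X$, $\varepsilon$ and $\theta(P_X)$. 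Combined with MAD $\geq\gamma/2$ from Step 1, the outlyingness of $\theta(P_X)$ is at most $2C/\gamma$ uniformly in $Q$. The median maximizes depth, so $\theta_Q$ has outlyingness at most $2C/\gamma$ as well.

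Then for $v$ in the contaminated direction set,
\[
\abs{\inner{\theta_Q-\theta(P_X),v}}\leq \frac{2C}{\gamma}\,\MAD[\inner{X',v}]+\abs{\med[\inner{X',v}]-\inner{\theta(P_X),v}} .
\]
The median term is again at most $C$. The contaminated MAD is bounded above uniformly over unit $v$ by a quantile argument analogous to the median bound. This yields a bound uniform in $Q$ and $v$, which proves the first claim.

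\textbf{Main obstacle and conclusion.} The main obstacle is the uniform control in Step 2 of the quantiles of $\inner{X,w}$ over all $w\in\SS$. I would use tightness: there is a ball of $P_X$-mass exceeding $1-\delta$, and then $\abs{\inner{X,w}}\leq R$ with that probability for every unit $w$. This bounds the relevant quantiles by $R$ whenever $\delta$ is chosen below $\min\set{\frac{1/2-\varepsilon}{1-\varepsilon},\,1-\frac{1/2}{1-\varepsilon}}$, both of which are positive because $\varepsilon<1/2$.

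Finally, the boundedness for every $\varepsilon<1/2$ gives $\varepsilon^*_{(u)}\geq 1/2$. For the reverse inequality, take $\varepsilon\geq1/2$ and let $Q=\delta_{y}$ with $\norm{y}\to\infty$. The contaminated medians then follow $y$ along many directions, so the supremum in~\eqref{eq: BP} is infinite, as in the fixed-$\beta$ proof of~\citet{Bocinec_etal2026}. Hence $\varepsilon^*_{(u)}(\theta;P_X)=1/2$.
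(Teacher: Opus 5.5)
Your proposal is essentially the paper's proof. It first obtains a $Q$-uniform lower bound on $\beta_{(Q,\varepsilon)}(u)$ from \eqref{technAssBP}, then compares the outlyingness of $\theta(P_{(Q,\varepsilon)})$ with that of a fixed admissible point (the paper uses $0$, you use the projection of $\theta(P_X)$), bounding contaminated medians and MADs uniformly in $v$ through the distribution of $\norm{X}$; your constant $\gamma/2$ in Step~1 is the correct one, and the paper's claim $\MAD[\inner{X',v}]\geq\gamma$ is a harmless slip.

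The tie at level exactly $u$ that you flag is also glossed over in the paper, and passing to $\gamma/4$ does not remove it. What removes it is choosing $\gamma$ with $\nu\bigl(\set{v\in\SS\st\omega_v(\gamma)\geq\frac{1/2-\varepsilon}{1-\varepsilon}}\bigr)<u$ strictly. This is always possible here: applying Fubini to the event $\set{\inner{X-\tilde{X},V}=0}$, with $\tilde{X}$ an independent copy of $X$, and using the smoothness of $\nu$ and the atomlessness of $P_X$ as in Lemma~\ref{lemma: betaPos}, shows that $\inner{X,v}$ is atomless, and hence $\omega_v(\gamma)\to0$ as $\gamma\to0$, for $\nu$-almost every $v$.
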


This result is remarkable, as it shows that under quite mild conditions, the infinite-dimensional \RPD{} median achieves the highest possible breakdown point of $1/2$, attainable by location-equivariant estimators \citep{Huber_Ronchetti2009}.
Note that the \RPD{} median inherently satisfies this equivariance requirement due to the shift invariance property~\eqref{eq: shiftInvariance}.

A natural question is when the technical condition~\eqref{technAssBP} holds. Consider, for instance, Gaussian distributions in~$\HH$.
That is, assume that $X$ has a centered Gaussian distribution with a positive definite covariance operator $\Sigma \colon \HH \to \HH$~\citep[Definition~7.2.3]{Hsing_Eubank2015}.
This implies that for any $v\in\SS$, the random variable $\inner{X,v}$ follows a univariate Gaussian distribution with zero mean and variance $\inner{v, \Sigma v}$. 
In this setting, we can bound the modulus of continuity
\begin{equation*}
    \omega_v(\delta) = \sup_{\abs{t_1 - t_2} \leq \delta} \abs{F_v(t_1) - F_v(t_2)} \leq \delta\,\sup_{t\in \R}f_v(t) \leq \frac{\delta}{\sqrt{2\pi \inner{v, \Sigma v}}},
\end{equation*}
where $f_v$ denotes the probability density function of $\inner{X,v}$. 
It follows that if there exists $\gamma>0$ such that 
\begin{equation}\label{eq: BPGauss}
\begin{aligned}
    \nu\left(\set{v \in \SS \st 
    \omega_v(\gamma) < \frac{1/2 - \varepsilon}{1 - \varepsilon}}\right)\geq\nu\left(\set{v \in \SS \st 
    \frac{\gamma}{\sqrt{2\pi \inner{v, \Sigma v}}} < \frac{1/2 - \varepsilon}{1 - \varepsilon}}\right) \\=
    \nu\left(\set{v \in \SS \st 
    \inner{v, \Sigma v} > \left(\frac{(1 - \varepsilon)\gamma}{(1/2 - \varepsilon)\sqrt{2\pi}}\right)^2}\right)
    \geq 1 - u,
\end{aligned}
\end{equation}
then~\eqref{technAssBP} is satisfied. 
Moreover, note that for $V \sim \nu$, the random variable $\inner{V, \Sigma V}(\omega)=\inner{V(\omega), \Sigma V(\omega)}$, $\omega \in \Omega$, is strictly positive. 
Consequently, the $u$-quantile of $\inner{V, \Sigma V}$, for any $u>0$, is always positive, implying that for any $\varepsilon \in (0, 1/2)$, there exists a $\gamma>0$ such that~\eqref{eq: BPGauss} holds. 
Theorem~\ref{thm: BP} then implies that $\varepsilon^*_{(u)}(\theta; P_X) = 1/2$ for any Gaussian distribution $P_X$ with a positive definite covariance operator.
The same argument applies to elliptically symmetric distributions in~$\HH$~\citep{Boente_etal2014} with a positive definite scatter operator, for which all projections $\inner{X,v}$, $v\in\SS$, possess bounded densities. 

Note that the condition~\eqref{technAssBP} is more strict for smaller values of $u>0$.
This indicates that stronger regularization leads to more robust estimators of location.
This is illustrated in our numerical examples in Section~\ref{Supplement: LocationEstimation} of the online supplementary material.

\section{Computation of~\RPD{}}\label{sec: Computation}

The definition of the \RPD{} in~\eqref{eqRPD} requires computing an infimum over the set of directions $\VV_{(u)}\subseteq\SS$. 
Although the \RPD{} can be interpreted as a conditional version of the classical projection depth \citep{Zuo2003}, directly computing this infimum in a (typically) infinite-dimensional Hilbert space $\HH$ poses a significant challenge.
While sophisticated algorithms exist for projection depth in finite-dimensional settings \citep[e.g.,][]{Liu_Zuo2014b, Shao_etal2022}, they do not scale well to functional data.

To address this issue, we employ a stochastic approximation algorithm inspired by \citet[Section~6]{Dyckerhoff2004}. 
Instead of evaluating the infimum over the entire set $\VV_{(u)}$, we minimize the outlyingness over a finite set of directions drawn independently from a probability distribution supported on $\VV_{(u)}$. 
First, we sample $L\in\N$ independent directions $\set{u_1,\ldots,u_L} \in \SS$ from $\nu$ and approximate $\beta(u)$ with the sample $u$-quantile $\beta^{(L)}(u)$ of $\set{\MAD[\inner{X,u_i}]}_{i=1}^L$. 
Next, we sample $M\in\N$ independent directions $\set{v_1, \dots, v_M} \in \SS$ from the distribution $\nu$ conditioned on $\VV^{(L)}_{(u)}=\VV_{\beta^{(L)}(u)}$.
Finally, using the approximate regularization $\beta^{(L)}(u)$, and directions $\set{v_1, \dots, v_M}$,
The approximated random \RPD{} is defined as
\begin{equation} \label{eq: random RPD}
    D_{(u)}^{(M)}(x; P_X) = \min_{m=1,\dots,M} \left(1 + O_{v_m}(x; P_X) \right)^{-1}.
\end{equation}
This approach reduces the complex optimization problem to a straightforward evaluation of univariate outlyingness functions. 
The validity of this approximation is ensured by the following consistency result.

\begin{theorem}[Consistency of the random \RPD{}] \label{thm: consistency}
Consider $x \in \HH$ and $P_X\in\P{\HH}$ with either finite or contiguous support. Let $\nu \in \P{\SS}$ be such that its support is the whole set $\SS$. Assume that $\beta(u)$ in~\eqref{eqBetaU} is uniquely defined, i.e., the distribution function of $\MAD[\inner{X,V}]$ is strictly increasing at $\beta(u)$. Additionally, assume that $\Gamma_x(\cdot)$ in~\eqref{eqMaxOut} is continuous at $\beta(u)$. Then, the approximated random~\RPD{}~\eqref{eq: random RPD} converges almost surely to the theoretical \RPD{} as the number of directions increases, that is $\lim_{M, L\to \infty} D_{(u)}^{(M)}(x; P_X) = D_{(u)}(x; P_X)$ a.s.
\end{theorem}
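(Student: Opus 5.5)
We split the argument into two parts: first, the approximate regularization level converges; second, the minimum over random directions converges. For the first part, set $G(t)=\nu(\set{v\st \MAD[\inner{X,v}]\le t})$, the distribution function of $\MAD[\inner{X,V}]$. The values $\MAD[\inner{X,u_i}]$, $i=1,\dots,L$, are i.i.d.\ draws from $G$. Because $G$ is strictly increasing at $\beta(u)$, the standard argument for sample quantiles applies: by the Glivenko--Cantelli theorem, $\beta^{(L)}(u)\to\beta(u)$ a.s.\ as $L\to\infty$. Concretely, for every $\eta>0$ we have $G(\beta(u)-\eta)<u<G(\beta(u)+\eta)$, and the empirical distribution function eventually separates these values. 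The quantity $D^{(M)}_{(u)}$ is a minimum of $(1+O_v)^{-1}$, so it suffices to show that $\max_{m\le M}O_{v_m}(x;P_X)\to\Gamma_x(\beta(u))$ a.s.

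\textbf{Upper bound.} Fix $\eta>0$. Almost surely, for all large $L$ we have $\beta^{(L)}(u)\ge\beta(u)-\eta$, so every $v_m$ lies in $\VV_{\beta(u)-\eta}$. Since the sets $\VV_t$ shrink as $t$ grows, this gives $\max_m O_{v_m}\le\Gamma_x(\beta(u)-\eta)$. Letting $\eta\to0$ and using continuity of $\Gamma_x$ at $\beta(u)$ yields $\limsup\max_m O_{v_m}\le\Gamma_x(\beta(u))$.

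\textbf{Lower bound.} Fix $\varepsilon>0$ and $\eta>0$. By continuity of $\Gamma_x$, choose $w\in\VV_{\beta(u)+\eta}$ with $O_w(x;P_X)>\Gamma_x(\beta(u))-\varepsilon$. The maps $v\mapsto\med[\inner{X,v}]$ and $v\mapsto\MAD[\inner{X,v}]$ are continuous on $\SS$ near directions with positive MAD. This is where we use the assumption of finite or contiguous support, as in \citet{Bocinec_etal2026}: with contiguous support the quantiles of $\inner{X,v}$ are unique and move continuously in $v$. In the finite-support case we instead take $w$ off the finitely many bad hyperplanes, or argue by semicontinuity. It follows that there is an open neighbourhood $U$ of $w$ in $\SS$ on which $O_v>\Gamma_x(\beta(u))-2\varepsilon$ and $\MAD[\inner{X,v}]>\beta(u)+\eta/2$. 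Because $\nu$ has full support, $\nu(U)>0$.

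For large $L$ we have $\beta^{(L)}(u)<\beta(u)+\eta/2$, so $U\subseteq\VV^{(L)}_{(u)}$. The conditional law of each $v_m$ then gives $U$ probability at least $\nu(U)>0$. Hence the probability that none of $v_1,\dots,v_M$ falls in $U$ is at most $(1-\nu(U))^M$, which is summable in $M$. By Borel--Cantelli, a.s.\ some $v_m$ eventually lies in $U$. Taking countably many rational $\varepsilon,\eta$ gives $\liminf\max_m O_{v_m}\ge\Gamma_x(\beta(u))$ a.s.

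Some care is needed because the $v_m$ are drawn conditionally on $\beta^{(L)}$. We handle this by conditioning on the $u_i$ and applying the bound $(1-\nu(U))^M$ uniformly over all $L$ beyond the random threshold. The joint limit $M,L\to\infty$ then follows, since both bounds hold uniformly in $M$ once $L$ is large, and vice versa.

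The main obstacle is the continuity of $v\mapsto O_v(x;P_X)$ at a near-maximizing direction $w$, specifically continuity of the median and MAD of projections. I would first try to prove this for contiguous support via uniqueness of the quantiles. For finite support I would perturb $w$ within $\VV_{\beta(u)+\eta}$ to a direction where no ties occur among the projected atoms, which makes the median and MAD locally continuous there.
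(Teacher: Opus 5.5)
Your argument is correct in substance, but it is organized differently from the paper's proof.

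\textbf{How the two routes differ.} The paper inserts the intermediate depth $D_{\beta^{(L)}(u)}(x;P_X)$ and splits the error in two:
\begin{itemize}
\item the regularization error is bounded by $\abs{\Gamma_x(\beta^{(L)}(u))-\Gamma_x(\beta(u))}$ and vanishes by continuity of $\Gamma_x$;
\item the direction-sampling error $\abs{D^{(M)}_{(u)}(x;P_X)-D_{\beta^{(L)}(u)}(x;P_X)}$ is delegated to Proposition~11 of Dyckerhoff (2004), with the remark that $\nu$ does not vanish on the closed set $\VV_{\beta^{(L)}(u)}$.
\end{itemize}
You instead sandwich $\max_{m\le M}O_{v_m}(x;P_X)$ directly:
\begin{itemize}
\item for the upper bound you use left-continuity of $\Gamma_x$ and $\VV^{(L)}_{(u)}\subseteq\VV_{\beta(u)-\eta}$;
\item for the lower bound you use right-continuity of $\Gamma_x$ and an open neighbourhood $U$ of a near-maximizer lying inside $\set{v\st\MAD[\inner{X,v}]>\beta(u)+\eta/2}$.
\end{itemize}
In effect you reprove Dyckerhoff's argument inline. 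The paper's route is shorter. Yours is self-contained and shows exactly where finite or contiguous support enters, namely continuity of $v\mapsto\med[\inner{X,v}]$ and $v\mapsto\MAD[\inner{X,v}]$. Thanks to the slack $\eta$, it also only needs $\nu$ to charge \emph{open} sets. That is sturdier than the paper's remark: full support on $\SS$ does not imply that $\nu$ puts positive mass on a closed set such as $\VV_{\beta^{(L)}(u)}$ near its near-maximizing directions, since a closed set can be $\nu$-null.

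\textbf{Three points to tighten.}

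First, the finite-support case. Drop the perturbation idea: as stated it is circular, because moving $w$ off the tie hyperplanes preserves $O_w$ and the MAD bound only if these are already continuous at $w$. It is also unnecessary. If $P_X=\sum_j p_j\delta_{x_j}$, the lower and upper medians of $\inner{X,v}$ equal $\min_S\max_{j\in S}\inner{x_j,v}$ and $\max_S\min_{j\in S}\inner{x_j,v}$, where $S$ ranges over index sets with $\sum_{j\in S}p_j\geq 1/2$. Both are therefore continuous in $v$, and so is their midpoint. Applying the same formula to $\abs{\inner{x_j,v}-\med[\inner{X,v}]}$ gives continuity of the MAD everywhere. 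Your contiguous-support sketch is fine, once you note that the median of $\abs{\inner{X,v}-\med[\inner{X,v}]}$ is also unique when the MAD is positive.

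Second, the joint limit. A bound $(1-\nu(U))^M$ that is uniform in $L$ does not give an almost sure statement for all large $(L,M)$ simultaneously, since that needs control of a union over infinitely many $L$. If directions are redrawn independently for each $L$, then for fixed $M$ the failure events are conditionally independent across $L$, with probabilities in general bounded away from zero. By the second Borel--Cantelli lemma they then occur for infinitely many $L$, and the joint limit fails.

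You need a coupling. For example, obtain the $v_m$ by rejection from one i.i.d.\ $\nu$-sequence $w_1,w_2,\dots$, independent of the $u_i$. Let $J$ be the index of the first $w_j$ in $U$; $J$ is a.s.\ finite and does not depend on $L$. Once $U\subseteq\VV^{(L)}_{(u)}$, you get $\max_{m\le M}O_{v_m}(x;P_X)>\Gamma_x(\beta(u))-2\varepsilon$ for all such $L$ and all $M\geq J$. The paper is equally silent here, so this is an imprecision in the statement's double limit rather than a flaw in your method.

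Third, $\eta$ must be chosen after $\varepsilon$, so that $\Gamma_x(\beta(u)+\eta)>\Gamma_x(\beta(u))-\varepsilon$, not fixed independently of it.
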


For practical implementation, we leverage the computational efficiency of \textsf{C++} integrated into the \textsf{R} environment via the \textsf{RcppArmadillo} package \citep{RcppArmadillo}. 
The algorithm underlying~\eqref{eq: random RPD} is available in the software package \textsf{RPD}, freely available online.\footnote{\url{https://github.com/NagyStanislav/RPD}}

The choice of the numbers of directions $L$ and $M$ presents a trade-off between computational efficiency and approximation accuracy. 
While larger values improve the quality of the approximation, they also increase the computational cost. 
This issue becomes particularly pronounced for functional data with high effective dimension, which may make the stochastic approximation computationally intensive.
Consequently, the main drawback of the \RPD{} is that in practice, it is computed approximately. 
Unless stated otherwise, in all subsequent computations we set $L = 1\,000$ and $M = 10\,000$.
These choices ensure that, even for datasets of moderate size (in the order of several thousand observations) and moderate discretization resolution (around $100$ grid points), the resulting computations can be performed within at most a few seconds.

In practice, the approximation of the regularization parameter $\beta^{(L)}(u)$ is less critical than the approximation of the depth itself.
Therefore, it is generally more important to choose a sufficiently large value of $M$ to ensure an accurate evaluation of the \RPD{}.  

Figure~\ref{fig: convergence rate} illustrates the convergence behavior of the stochastic approximation for representative functional observations.
Empirical evidence indicates that the stochastic approximation converges more rapidly for outlying functions than for more central ones.
This phenomenon can be attributed to the fact that, for outliers, it is typically easier to identify a projection direction in which the function appears extreme, whereas for more central functions the search for the most extreme direction is inherently more challenging.

\begin{figure}
    \centering
    \begin{subfigure}{0.33\textwidth}
        \centering
        \includegraphics[width=\linewidth]{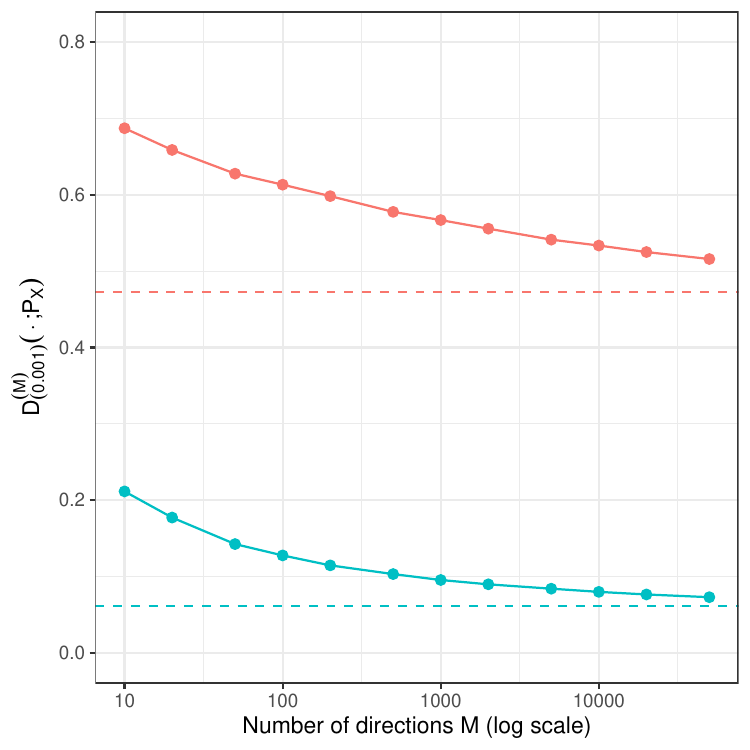}
    \end{subfigure}
    \begin{subfigure}{0.33\textwidth}
        \centering
        \includegraphics[width=\linewidth]{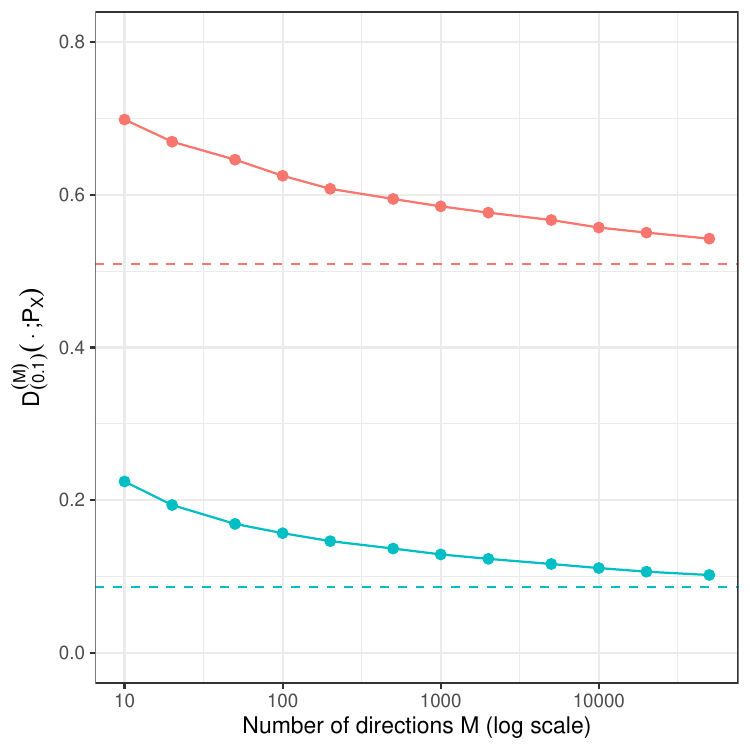}
    \end{subfigure}

    \caption{Convergence rate plots of $D_{(u)}^{(M)}$ with $u = 0.001$ (left) and $u = 0.1$ (right). The depth is evaluated for a central function $x(t)=0$, $t \in [0,1]$ (red), and for a peripheral function $x(t)=1.5\,\sin(2\pi t)$, $t \in [0,1]$ (blue). The reference dataset consists of $n=50$ independent centered Gaussian trajectories with covariance function $\Sigma(s,t)=\exp\left(-(s-t)^2/0.32\right)$, discretized on a grid of $T=51$ points. The plots show $D_{(u)}^{(M)}(x;\widehat{P}_n)$ as a function of $M$, averaged over $100$ Monte Carlo simulations, with dashed horizontal lines corresponding to $M=10^7$. The stochastic approximation of \RPD{} converges more slowly for central functions than for peripheral ones.}
    \label{fig: convergence rate}
\end{figure}

A related phenomenon was studied by~\citet{Briend_etal2025} in the context of the random approximation of the classical multivariate halfspace (Tukey) depth.
The authors show that, for points whose depth is bounded away from both $0$ and $1/2$, the random halfspace depth provides an inefficient approximation of the exact halfspace depth.
An analogous effect is observed in the present setting for the randomized~\RPD{}.

\section{Simulations}\label{sec: Simulations}

We assess the practical utility of \RPD{} in three statistical tasks:
(i) outlier detection, (ii) supervised classification, and (iii) hypothesis testing for the equality of two distributions.
Additional simulation results regarding robust location estimation are in Section~\ref{Supplement: LocationEstimation} of the online supplementary material. 
Throughout the paper we compare \RPD{} with the following functional depths: 
\begin{itemize}
    \item \RHD{}: the regularized halfspace depth~\citep{Yeon_etal2025} with a relatively high and fixed dimension $J=6$. This choice is motivated by the simulation results reported in~\citet{Bocinec_etal2026}, where \RHD{} with $J=6$ was observed to perform better. We also considered the default choice of $J$, which led to inferior performance, and is therefore not reported here;
    \item \FD{}: the integrated halfspace depth \citep{Fraiman_Muniz2001, Nagy_etal2016};
    \item \MBD{}: the modified band depth~\citep{Lopez_Romo2009}; and
    \item \ID{}: the infimal halfspace depth \citep{Mosler2013}, representing infimal/extremal depths from the literature \citep{Narisetty_Nair2016}.
\end{itemize}
All our simulations are performed in the statistical software \textsf{R}. 
We employ the \textsf{RPD} package for \RPD{}, the \textsf{RHD} package for \RHD{},\footnote{Available online at \url{https://github.com/luckyhm1928/RHD}.} and \textsf{ddalpha} \citep{ddalpha} for \FD{}, \ID{}, and \MBD{}. 
All results below are estimated based on Monte Carlo simulations with $1\,000$ replications.

For \RPD{}, the regularization parameter is chosen based on the $u$-quantile as defined in~\eqref{eqBetaU}, and to ensure consistency with the definition used in this paper (lower values of $u$ mean weaker regularization), we use the $(1-u)$-quantile for the \RHD{} regularization parameter. 
This will be denoted as $\RPD{}_u$ and $\RHD{}_u$. 
Several values of $u$ are considered and their impact is discussed. 

We consider the separable Hilbert space $\HH = L_2([0,1])$ of square-integrable functions on $[0,1]$, equipped with the standard inner product $\inner{x, y} = \int_0^1 x(t)y(t) \dd t$ and the induced norm $\norm{x} = \left(\int_0^1 x(t)^2 \dd t\right)^{1/2}$ for $x, y \in \HH$. 
In practice, each function $x \in \HH$ is observed on a discrete grid of $T=101$ equidistant points $t_1,\ldots,t_T \in [0,1]$, so that the $L_2$-norm is approximated by its discrete counterpart $\norm{x} \approx \left(\sum_{k=1}^T x(t_k)^2/T\right)^{1/2}$.

\subsection{Outlier detection}\label{sec: OutlierDetection}\label{sec:4.1}

Recall that the task of outlier detection can be viewed essentially as a binary classification problem, where the outlier class is typically very small or potentially empty. 
Our classification procedure consists of two steps: first, we order the data points according to their depth values, and subsequently, we select a threshold below which points are classified as outliers.
We divide our performance analysis into two parts. Initially, we evaluate the discriminative power of the various depth measures introduced above for outlier detection. 
Subsequently, by introducing a data-driven threshold selection mechanism, we propose a novel~\RPD{}-based outlier detection method and compare its performance with other established techniques.

\subsubsection{Discriminative power of~\RPD{}}\label{subsec: discriminative}
For outlier detection, we consider $n$ independent random functions $\set{X_i(t)}_{i=1}^n$ defined on $t \in [0,1]$, among which $m$ are outliers. This corresponds to a contamination level of $\varepsilon = m/n$. We examine a wide spectrum of functional outliers. 
In the following models, the error terms $\set{e_i(t)}_{i=1}^n$ are independent realizations of a centered Gaussian process with the covariance function $\Sigma(s, t) = \lambda \exp\left( - \left\vert s - t \right\vert/\lambda \right)$ for $s, t \in [0,1]$ and a scale parameter $\lambda > 0$. 

\begin{enumerate}[label=\textbf{Model~(D$_\arabic*$)}, ref=(D$_\arabic*$), leftmargin=*]
    \item\label{ModelD1} \emph{Magnitude outliers:} The non-contaminated observations are generated as $X_i(t) = 4t + e_i(t), \, t \in [0,1]$, where $e_i(t)$ is taken with $\lambda = 1$. The outlying observations follow the model $X_i(t) = 4t + 8 k_i + e_i(t)$, where $k_i \in \set{-1,1}$ is independent of $e_i$ with $\pr(k_i=1)=\pr(k_i=-1)=0.5$.
    
    \item\label{ModelD2} \emph{Location difference model:} The clean observations are generated according to $X_i(t) = 30\,t(1-t)^{1.5} + e_i(t), \, t \in [0,1]$, with $\lambda = 0.3$. The outlying observations are $X_i(t) = 30\,t^{1.5}(1-t) + e_i(t)$.

    \item\label{ModelD3} \emph{Covariance structure difference model:} The clean observations are $X_i(t) = 4t + e_i(t), \, t \in [0,1]$, with $\lambda = 1$. The outlying observations follow the model $X_i(t) = 4t + f_i(t)$, where $\set{f_i(t)}_{i=1}^m$ are independent centered Gaussian processes with covariance function $\Sigma(s,t) = 5\exp\left(-2\abs{t-s}^{0.5}\right),\, s,t \in [0,1]$.

    \item\label{ModelD4} \emph{Shifted periodic outliers:} The clean observations are $X_i(t) = 2\sin(15\pi t) + e_i(t), \, t \in [0,1]$, with $\lambda = 1$. The outlying observations are $X_i(t) = 2\sin(15\pi t+2) + e_i(t)$.
    
    \item\label{ModelD5} \emph{Central high-frequency outliers:} The clean observations are $X_i(t) = e_i(t), \, t \in [0,1]$, with $\lambda = 1$. The outlying observations follow the model $X_i(t) = \sin(20(t + \theta_i)\pi) + f_i(t)$, where $\set{\theta_i}_{i=1}^m$ are independent and uniform on $(0.25, 0.5)$ and $\set{f_i(t)}_{i=1}^m$ are independent centered Gaussian processes with covariance $\Sigma(s,t) = 0.1\exp\left(-\abs{t-s}^{0.1}/4\right),\, s,t \in [0,1]$.
    
    \item\label{ModelD6} \emph{Central shape outliers:} Take a basis of orthogonal polynomials $\phi_1, \dots, \phi_6$ on $[0,1]$ of maximum degree $5$ (with $\phi_1$ denoting the constant function of degree $0$, etc.). The non-contaminated observations are generated as $X_i(t) = \sum_{j=1}^6 c^j_i \phi_j(t), \, t \in [0,1]$, where $\set{c_i}_{i=1}^n = \set{(c_i^1, \dots, c_i^6)\tr}_{i=1}^n$ are independent centered Gaussian random vectors with covariance matrix $\Sigma \in \R^{6 \times 6}$ having unit diagonal and all off-diagonal elements equal to $0.95$. The outlying observations follow the model $X_i(t) = \sum_{j=1}^6 b^j_i \phi_j(t), \, t \in [0,1]$, where $\set{b_i}_{i=1}^m = \set{(b_i^1, \dots, b_i^6)\tr}_{i=1}^m$ are independent Gaussian random vectors centered at $(1,1,\dots,1)\tr \in \R^6$ with covariance matrix $\Sigma^{-1}/100$.
\end{enumerate}

Models \ref{ModelD1}--\ref{ModelD5} were inspired by~\citet{Cuevas_etal2007, Arribas_Romo2014, Ojo_etal2022} and \citet{Jimenez-Varon_etal2024}. Model~\ref{ModelD6} was used in~\citet{Bocinec_etal2026}.
Examples of simulated datasets from \ref{ModelD1}--\ref{ModelD6} are in Figure~\ref{fig: OutlierDetDatasets}.

\begin{figure}
    \centering
    \includegraphics[width=0.9\textwidth]{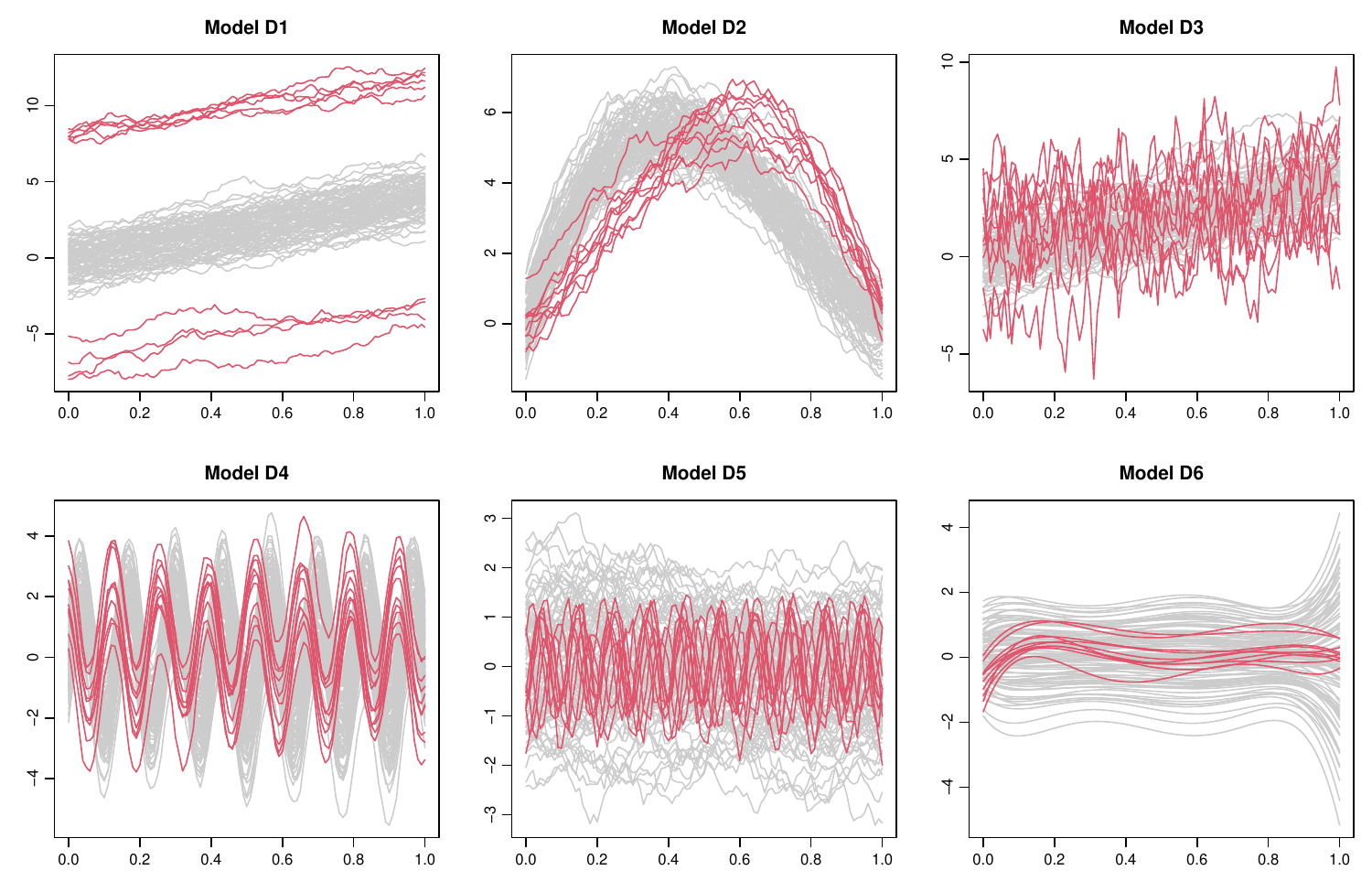}
    \caption{Outlier detection: Simulated datasets generated under the contamination Models~\ref{ModelD1}--\ref{ModelD6} described in Section~\ref{sec: OutlierDetection}. Each panel contains $100$ functions, of which $10$ are contaminated. The non-contaminated functions are gray, while the contaminating functions are highlighted in red.
    \label{fig: OutlierDetDatasets}}
\end{figure}

As the initial step of our analysis, we examine the ability of the depth measures to separate true outliers from non-contaminated curves, without any thresholding. 
The $n$ functions in the pooled sample are ranked based on their depth values, with a normalized rank of $1/n \approx 0$ assigned to the least deep function and a rank of $1$ to the deepest one. 
Functions sharing the same depth value are assigned their average rank.
The discriminative power is then evaluated using the normalized mean rank of the $m$ true outliers. 
The resulting average normalized mean ranks and their corresponding standard deviations, obtained across $1\,000$ independent Monte Carlo replications, are summarized in Table~\ref{tab: OutlierDetection_n500} for $n=500$ and $m\in\set{5,25,50}$, which correspond to contamination levels of $\varepsilon\in\set{0.01,0.05,0.10}$. 
The results for the same study with $n=100$ and $n=1\,000$ are provided in Tables~\ref{tab: OutlierDetection_n100} and~\ref{tab: OutlierDetection_n1000} in Section~\ref{Supplement: Simulations} of the online supplementary material.

\input{Tables_applications/OutlierDetectionRanks_n500}

The results show that \RPD{} consistently delivers near-perfect performance across all considered models.
In the simple Model~\ref{ModelD1}, all depth measures are able to accurately detect the shift in outlying functions. 
Models~\ref{ModelD2}, \ref{ModelD3}, and~\ref{ModelD4} involve location, covariance-structure, and shifted periodic outliers, respectively.
These types of outliers are more challenging to identify; however, they still manifest themselves as pointwise vertical deviations, which explains why the remaining depth measures (not necessarily projection-based) continue to exhibit reasonably good performance.
For instance, in Model~\ref{ModelD2}, \FD{} (and the closely related \MBD{}) performs comparably to \RPD{}. 
In Model~\ref{ModelD3}, good detection results are obtained by \RHD{}, and particularly by \ID{}.
In contrast, for Model~\ref{ModelD4}, all considered depth measures except \RPD{} yield a large average rank of the outlying curves, indicating that the set of lowest-depth curves already contains a substantial number of non-outlying observations.
Even at this stage, \RPD{} maintains consistently strong performance.
The main strength of \RPD{} becomes evident in Models~\ref{ModelD5} and~\ref{ModelD6}. 
In these settings, the outliers tend to remain within the central region of the data cloud while having markedly different shapes.
Here, the advantage of \RPD{} stems from its projection-based nature, whereas depths such as \FD{}, \ID{}, or \MBD{} rely solely on pointwise vertical deviations and therefore fail to adequately capture such shape outlyingness. 
The only competing depth that still gives reasonably good performance is \RHD{}, and only when using a small regularization parameter $u=0.001$.
In contrast, the~\RPD{} achieves very good results in both models~\ref{ModelD5} and~\ref{ModelD6} for $u=0.1$, and performs nearly perfectly for $u=0.001$.

For Models~\ref{ModelD5} and~\ref{ModelD6}, we observe that \RPD{} with regularization parameter $u=0.1$ performs worse than \RPD{} with $u=0.001$. 
This behavior is explained by the fact that stronger regularization leads to reduced sensitivity of the depth. 
Consequently, for the purpose of outlier detection, we generally recommend using a very small value of the regularization parameter $u$, such as $u=0.001$, as employed here.

A visual comparison of the performance of the depth measures under Model~\ref{ModelD5} is also in Figure~\ref{fig: OutlierDetDepths}. 
While \RPD{} successfully detects the central shape outlying curves, \MBD{} primarily identifies curves located at the periphery of the data cloud as outliers, because it attaches more importance to location (horizontal shift) than severe differences in shape. 
Analogous figures for the other models are provided in Figures~\ref{fig: OutlierDetDepths_D1}--\ref{fig: OutlierDetDepths_D6} in Section~\ref{Supplement: Simulations} of the online supplementary material.

\begin{figure}
    \centering
    \includegraphics[width=0.9\textwidth]{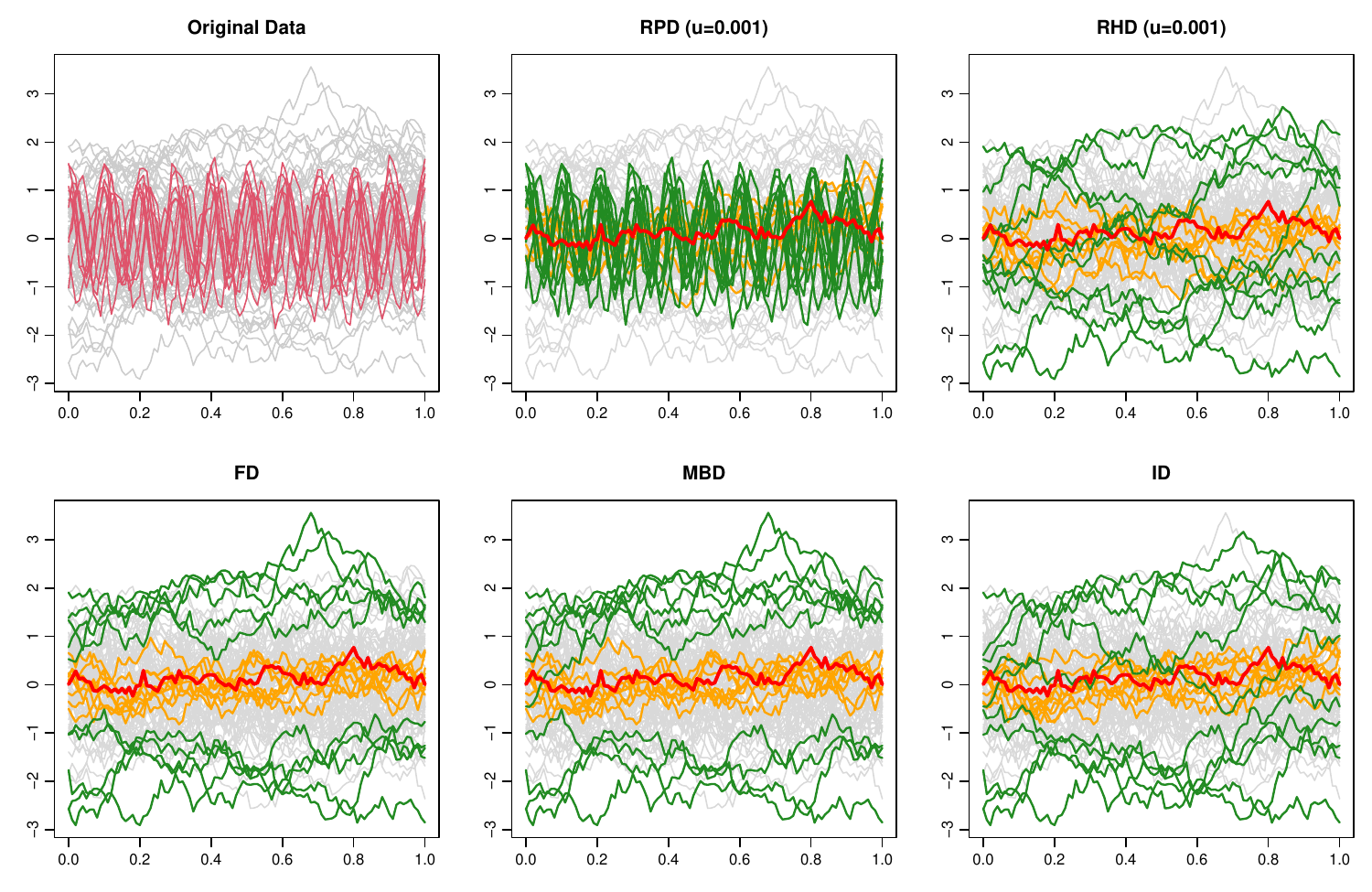}
    \caption{Outlier detection: A single random sample of functional observations (gray) generated from Model~\ref{ModelD5} with sample size $n = 100$, containing $m = 10$ outlying curves (red) (top left). 
    The same sample is shown with the sample median curve highlighted in red, the $10\%$ deepest observations in orange, and the $10\%$ least deep observations in green, based on \RPD{} and \RHD{} with $u = 0.001$, \FD{}, \MBD{} and \ID{} (remaining panels).
    }
    \label{fig: OutlierDetDepths}
\end{figure}

\subsubsection{Threshold selection} \label{subsec: threshold}
While Section~\ref{subsec: discriminative} established the near-perfect discriminative power of the~\RPD{} in separating outliers, this capability cannot be fully leveraged without a robust mechanism for selecting a threshold to flag outlying observations.
Determining such a threshold is challenging because the exact distribution of depth values is generally unknown and depends on the underlying data-generating process.
A common approach in the literature relies on functional boxplots~\citep{SunGenton2011}. 
Nevertheless, since shape outliers can effectively remain within the central envelope formed by non-outlying functions (see Models~\ref{ModelD5} and~\ref{ModelD6}), such methods often fail to distinguish them from the rest of the sample, see also~\citet{Nagy_etal2025}.
To address this, we operate directly in the $[0,1]$ space, transforming the functional data into univariate depth values and performing classification based solely on these measures.

As shown in Section~\ref{subsec: discriminative}, the \RPD{} distinguishes outlying functions almost flawlessly by assigning them significantly lower depth values.
This behavior is visually apparent in Figure~\ref{fig: OutlierDetectionDensities}, where the kernel density estimates of the resulting logit-depths---obtained via the transformation $\text{logit}(t) = \log(t / (1 - t))$---exhibit a clear bimodal character.
The primary mode, located at higher depth values, corresponds to non-contaminated curves, whereas the secondary mode at lower values corresponds to outliers.
This bimodality stems from two key properties of \RPD{}. 
First, \RPD{} is defined using $\med$ and $\MAD$ estimators, both possessing a high breakdown point.
Consequently, $D_{(u)}(x; P_X)$ is robust (as a function of $x$), meaning that the depth values of clean (that is, non-contaminated) observations are not much influenced by contamination. 
Second, \RPD{} effectively isolates outliers by pushing their depth values toward zero. 

\begin{figure}
    \centering
    \includegraphics[width=0.8\linewidth]{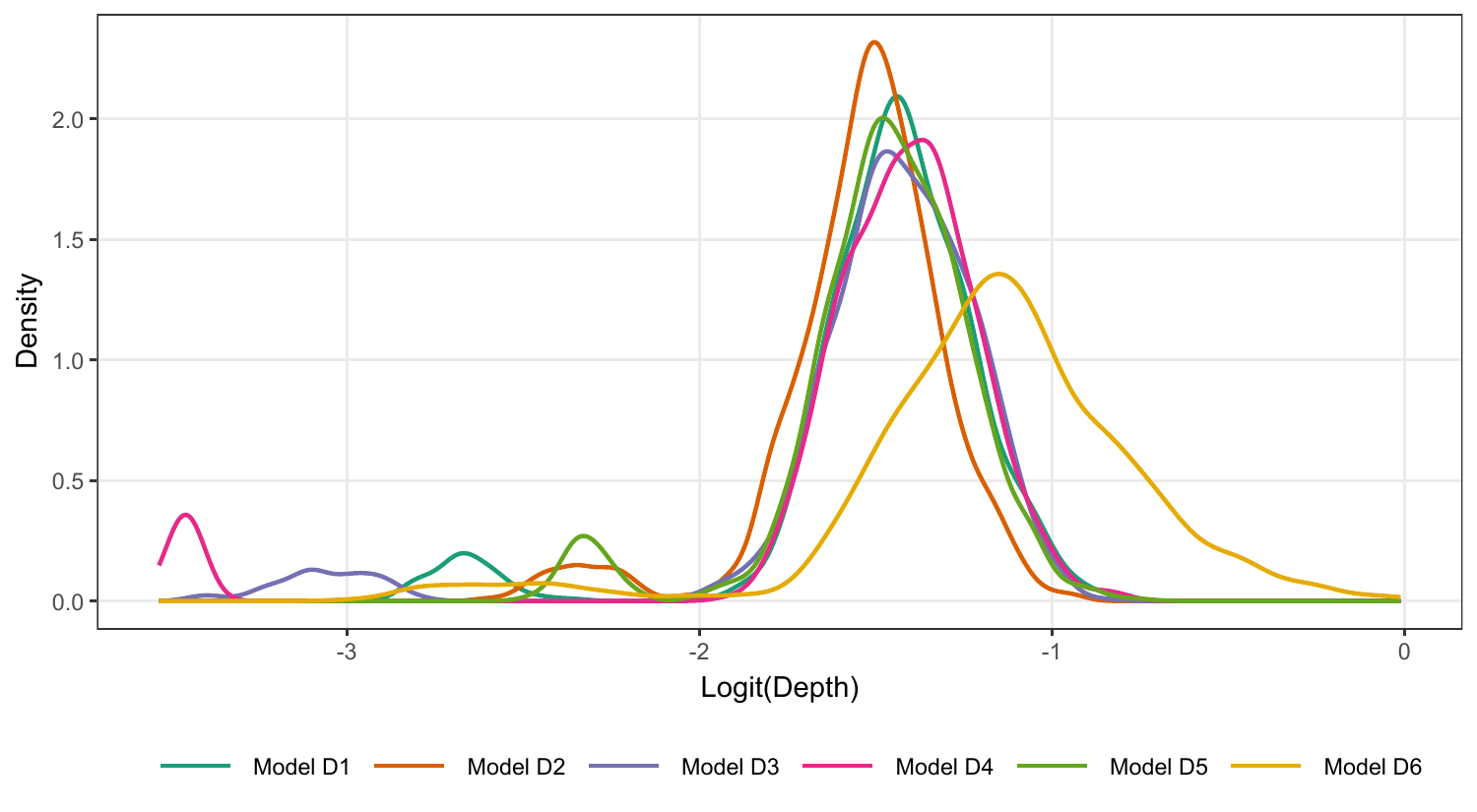}
    \caption{Outlier detection: Density estimates of the logit-transformed $\RPD{}$ (with $u=0.001$) scores across Models~\ref{ModelD1}--\ref{ModelD6}. For each model, the dataset consists of $n = 1\,000$ functional observations, out of which there are $m = 50$ outliers. The logit transformation is applied to stretch out the \RPD{}~values from $[0,1]$ to $\R$, highlighting the separation between the regular curves (main mass) and the outliers (lower depth values).}
    \label{fig: OutlierDetectionDensities}
\end{figure}

Motivated by these observations, we propose a threshold selection mechanism that models the logit-depth values of the sample as a finite Gaussian mixture.
We refer to this approach as the $\RPD{}$-outlier detection method (\RPDOD{}), which consists of three steps:

\begin{enumerate}
    \item We first apply a logit transformation to map the original depth values onto the real line for Gaussian modeling. 
    
    \item Using the \textsf{mclust} package in \textsf{R}~\citep{mclust_package}, we then perform clustering on these transformed values based on parameterized Gaussian mixture models estimated via the EM~algorithm. To determine the optimal number of components, the Bayesian Information Criterion (BIC)~\citep{Schwarz1978} is employed to strictly select either a single-component or a two-component Gaussian mixture model.

    \item The outlier detection rule is then defined as follows:
    \begin{itemize}
        \item \textbf{Unimodal density:} If the estimated density is unimodal, no outliers are detected in the sample.
        \item \textbf{Bimodal density:} If the selected model is a mixture of two Gaussian distributions and the resulting density is bimodal,\footnote{Note that the number of modes of a mixture of two univariate Gaussian distributions is at most two; see~\citet{Amendola_etal2020}.} the presence of outliers is flagged. The cutoff threshold is set at the antimode, defined as the stationary point of the estimated density located between the two modes. Furthermore, to minimize false discoveries, we impose a strict constraint: the proportion of flagged outliers cannot exceed $20$\% of the total sample size. If the proportion of observations falling below the calculated threshold exceeds this limit, we assume that this group represents a distinct sub-population rather than rare anomalies; consequently, no outliers are detected in the sample. This $20$\% upper bound can, of course, be adjusted.
    \end{itemize}
\end{enumerate}

To evaluate the effectiveness of the proposed~\RPDOD{}, we benchmark its performance against three established outlier detection techniques commonly utilized in functional data analysis. 
Specifically, we compare our approach with:

\begin{itemize}
    \item the functional boxplot~\citep{SunGenton2011}, which constructs a central region using the extreme rank length index~\citep{Dai_etal2020} and inflates this region by a factor of~$1.5$, analogous to the standard univariate boxplot, flagging any functions that fall outside this inflated region as outliers (available in the \textsf{fdaoutlier} \textsf{R}~package~\citep{fdaoutlier_package});

    \item the outliergram~\citep{Arribas_Romo2014}, designed primarily to detect shape anomalies by relating \MBD{} to the modified epigraph index (available in the \textsf{roahd} \textsf{R}~package~\citep{roahd_package}); and

    \item the Magnitude-Shape plot (MS-plot)~\citep{DaiGenton2018}, which separates magnitude and shape outlyingness (also available in the \textsf{fdaoutlier} package).
\end{itemize}

The performance of the four considered outlier detection methods is subsequently compared within the simulation models~\ref{ModelD1}--\ref{ModelD6}. 
In all scenarios we consider a fixed total sample size of $n = 500$, while the number of outliers $m$ takes on various values $\set{0,5,25,50}$.
Analogous results with $n=100$ and $n=1\,000$ are presented in Tables~\ref{tab: outlierDetectionResults_n100} and~\ref{tab: outlierDetectionResults_n1000} in Section~\ref{Supplement: Simulations} of the online supplementary material.
To compare the accuracy of the individual approaches, we utilize two evaluation metrics: the False Discovery Rate ($\mathrm{FDR}$) and the Matthews Correlation Coefficient ($\mathrm{MCC}$)~\citep[also known as the $\varphi$-coefficient]{Cramer1946}, both estimated using $1\,000$ Monte Carlo runs.

Formally, using the elements of the confusion matrix, the $\mathrm{FDR}$ is calculated as $\mathrm{FDR} = \mathrm{FP}/(\mathrm{TP} + \mathrm{FP})$,
where $\mathrm{FP}$ denotes the number of false positives (clean curves incorrectly identified as outliers) and $\mathrm{TP}$ stands for true positives (correctly identified outliers).
In cases where a method does not declare any observations as anomalous ($\mathrm{TP} + \mathrm{FP} = 0$), the $\mathrm{FDR}$ is conventionally set to $0$. 
A lower $\mathrm{FDR}$ value indicates reduced tendency to detect false anomalies.
On the other hand, the $\mathrm{MCC}$ is a comprehensive metric for evaluating the quality of binary classifications.
Unlike standard accuracy, it takes into account all four categories of the confusion matrix. 
Formally, the $\mathrm{MCC}$ is defined as
$$\mathrm{MCC} = \frac{\mathrm{TP} \times \mathrm{TN} - \mathrm{FP} \times \mathrm{FN}}{\sqrt{(\mathrm{TP} + \mathrm{FP})(\mathrm{TP} + \mathrm{FN})(\mathrm{TN} + \mathrm{FP})(\mathrm{TN} + \mathrm{FN})}},$$
where $\mathrm{TN}$ is the number of true negatives (correctly identified clean curves) and $\mathrm{FN}$ represents false negatives (undetected outliers). 
The primary advantage of the~$\mathrm{MCC}$ in the context of anomaly detection is that it provides a reliable and balanced evaluation even in cases with extremely imbalanced classes (since $m \ll n$). 
$\mathrm{MCC}$ values range from $-1$ to $1$, where a value of $1$ represents perfect outlier identification, $0$ corresponds to a random classification, and $-1$ indicates total disagreement between the prediction and the actual observation.

The results of the simulation study are summarized in Table~\ref{tab: outlierDetectionResults_n500}. 
They demonstrate that the proposed \RPDOD{} provides highly competitive performance across a variety of contamination scenarios.
As for the $\mathrm{FDR}$, the functional boxplot achieves the lowest values across all scenarios except Model~\ref{ModelD6}. 
However, this apparent advantage is heavily offset by its general inability to detect the actual outliers. 
The \RPDOD{} exhibits exceptional control over false positives, consistently yielding values close to zero across all models and contamination levels. 
This contrasts sharply with the outliergram and the MS-plot, which suffer from a significantly inflated false discovery rate, particularly at lower contamination amounts.
Notably, in the uncontaminated scenarios ($m = 0$), both the \RPDOD{} and the functional boxplot correctly refrain from flagging regular observations as anomalies.
In contrast, the outliergram and the MS-plot suffer from severe false-positive inflation, frequently reaching an $\mathrm{FDR}$ of $1.000$ when no outliers are present.
It should be noted that for the \RPDOD{}, we imposed a structural constraint that no more than $20\%$ of all observations can be flagged as outliers. This limit is primarily a conservative safeguard for very small sample sizes ($n \ll 100$), preventing false discoveries when random noise in uncontaminated data ($m=0$) accidentally produces a bimodal density estimate.
Since this threshold is practically never triggered for sufficiently large sample sizes like those in our study, any comparative advantage it seemingly provides over the outliergram and the MS-plot is negligible in practice.

In terms of overall classification quality, measured by the $\mathrm{MCC}$, the \RPDOD{} outperforms the competing approaches in the majority of the simulation settings. 
In Model~\ref{ModelD1} with pure magnitude outliers, both the \RPDOD{} and the functional boxplot achieve perfect or near-perfect identification ($\mathrm{MCC} \approx 1.000$).
However, as the outlier structures become more complex in Models~\ref{ModelD2}--\ref{ModelD5}, the performance of the functional boxplot deteriorates, dropping near zero in Models~\ref{ModelD2} and \ref{ModelD5}.
Meanwhile, the \RPDOD{} maintains excellent classification accuracy ($\mathrm{MCC} > 0.85$ for $m=5$ and approaching $1.000$ for higher contamination levels). 
Model~\ref{ModelD5} is particularly noteworthy, as the \RPDOD{} successfully identifies the anomalies while all three benchmark methods completely fail to detect the outliers ($\mathrm{MCC} \le 0.080$). 
An exception to this trend is observed in Model~\ref{ModelD6} at the lowest contamination level ($m = 5$). 
Here, the MS-plot outperforms the \RPDOD{} approach, which struggles to identify the anomalies ($\mathrm{MCC} = 0.060$).
This stems from our thresholding mechanism failing to detect a distinct outlier component at only $1\%$ contamination.
For $m = 25,50$, the bimodal structure becomes sufficiently pronounced for the \RPDOD{} to successfully identify anomalies, ultimately outperforming the MS-plot ($\mathrm{MCC} = 0.975$).

These results confirm that modeling the logit-transformed depth values via a Gaussian mixture provides an effective outlier detection rule. 
The \RPDOD{} strikes an optimal balance: it achieves state-of-the-art detection accuracy for various types of functional outliers while strictly safeguarding against false discoveries. 
We acknowledge, however, that our detection mechanism may fail when the absolute number of outliers is extremely small. 
For instance, in the presence of only a single outlying function, the EM algorithm will inevitably fail to capture a bimodal distribution. 
In such extreme cases, we highly recommend a direct visual inspection of the depth values.

\input{Tables_applications/OutlierDetection_n500}

\subsection{Classification}\label{sec: Classification}

As a next application, we consider a classical binary supervised classification task. For each of the three models considered below, we generate a training sample of size $n_{\textrm{train}} \in \set{100, 500, 1\,000}$ functions, drawn independently from two groups denoted by $X$ and $Y$.

\begin{enumerate}[label=\textbf{Model~(C$_\arabic*$)}, ref=(C$_\arabic*$), leftmargin=*]
    \item\label{ModelC1} \emph{Location difference model:} $X$ and $Y$ are Gaussian processes with mean functions $\mu_X(t) = 30\, t^{1.2}(1-t)$ and $\mu_Y(t) = 30\, t (1-t)^{1.2}$, respectively, both with covariance $\Sigma(s, t) = 0.2 \exp\left( - \abs{s - t}/0.3 \right)$, $s, t \in [0,1]$. This is the standard location-difference model from \citet{Cuevas_etal2007}; see also the left-hand panel of Figure~\ref{fig: ClassificationDatasets}. This model is a slight modification of~\ref{ModelD2}.

    \item\label{ModelC2} \emph{Central jump model:} For $\phi_1, \dots, \phi_7$ the first seven orthogonal polynomials on $[0,1]$ as in Model~\ref{ModelD6} (maximum degree $6$), we have $X = \mu + \sum_{j=1}^7 c_j \phi_j$, where $\mu(t) = 10 \, t (1-t)$ and $(c_1, \dots, c_7)\tr$ is centered Gaussian with covariance matrix $\Sigma \in \R^{7 \times 7}$ with unit diagonal and $0.9$ as off-diagonal terms, and $Y = \mu + \sum_{j=1}^5 d_j \phi_j + \sum_{j=6}^7 d_j \xi_j$, where $(d_1, \dots, d_7)\tr$ is Gaussian with mean $(0, \dots, 0, 1, 1)\tr \in \R^7$ and covariance matrix $\Sigma$ as for $X$. Here, $\xi_6, \xi_7$ are the $11$th and $12$th B-spline basis functions of degree $3$ with $21$ equispaced knots on $[0,1]$. The B-spline basis functions contribute to a significant jump in the center of the domain for $Y$, distinguishing these functions from $X$, see the middle panel of Figure~\ref{fig: ClassificationDatasets}.

    \item\label{ModelC3} \emph{Dimension difference model:} For $X$ as in Model~\ref{ModelC2}, $Y = \sum_{j=1}^5 d_j \phi_j$ with $(d_1, \dots, d_5)\tr$ Gaussian with mean the first eigenvector of $\Sigma_2 = \Sigma_{(5, 5)}^{-1}/10$, where $\Sigma_{(5, 5)} \in \R^{5 \times 5}$ is the first $5 \times 5$ sub-matrix of $\Sigma$ from Model~\ref{ModelC2}, and variance matrix $\Sigma_2$. In this model, the dimensions of $X$ and $Y$ differ, and the functions differ in shape similarly to those in Model~\ref{ModelD6}; see the right-hand panel of Figure~\ref{fig: ClassificationDatasets}.
\end{enumerate}

\begin{figure}
\centering
    \includegraphics[width=0.3\linewidth]{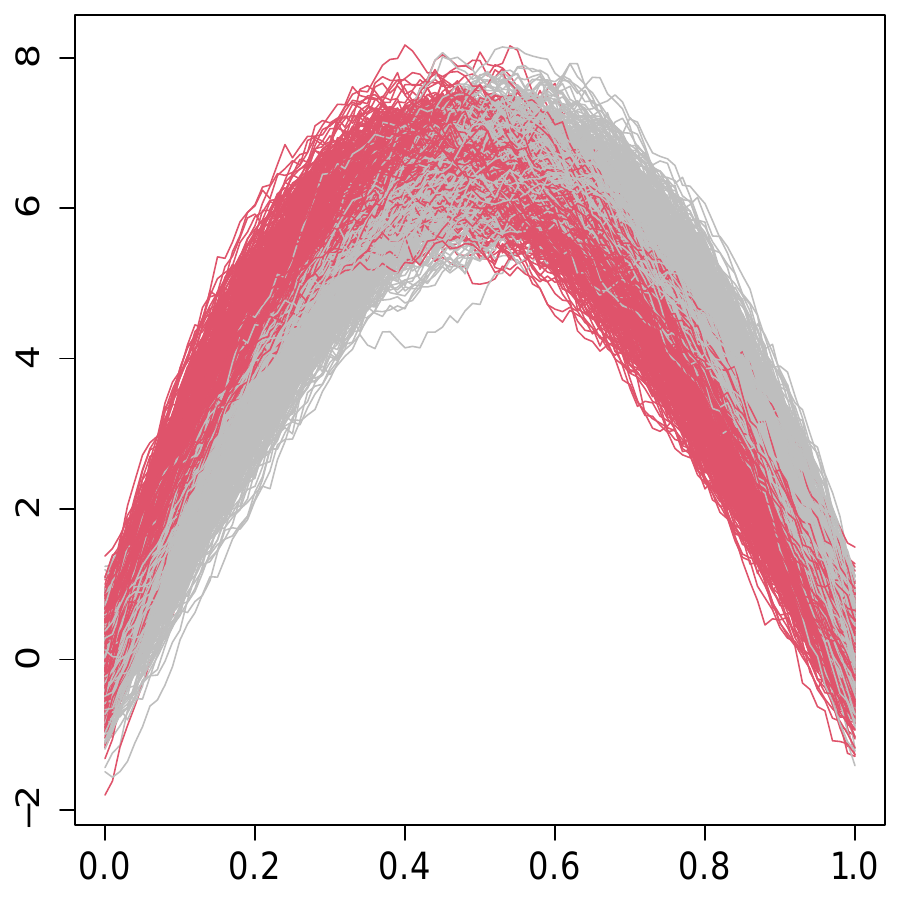}
    \includegraphics[width=0.3\linewidth]{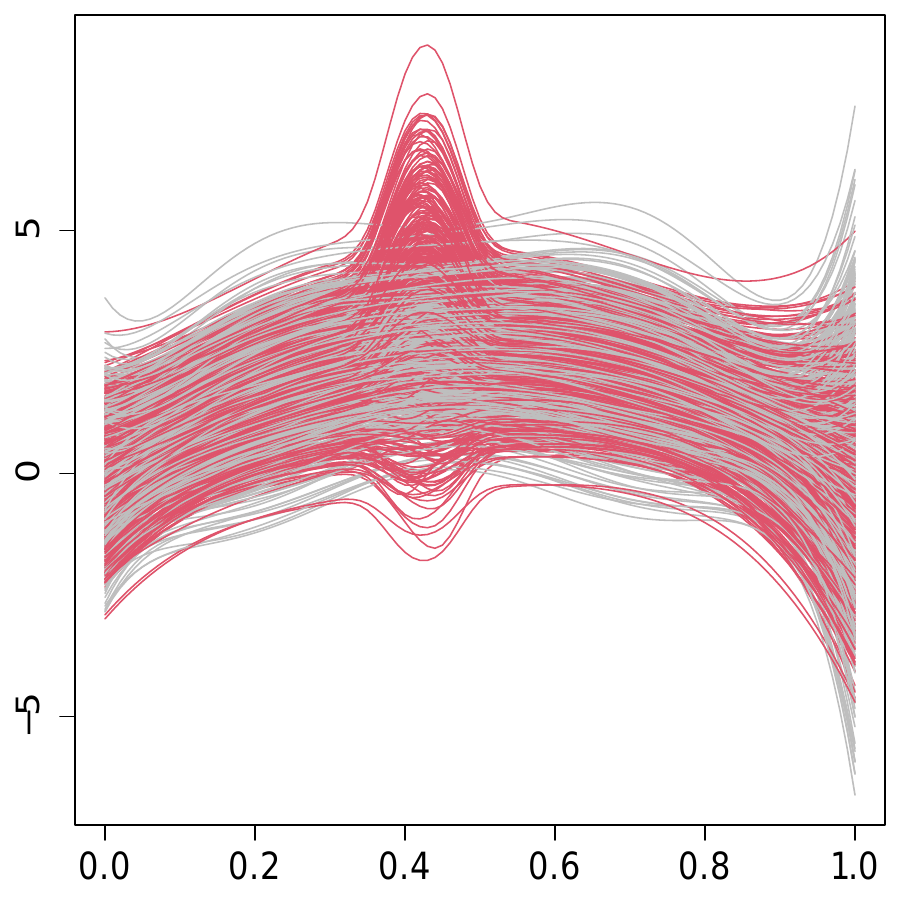}
    \includegraphics[width=0.3\linewidth]{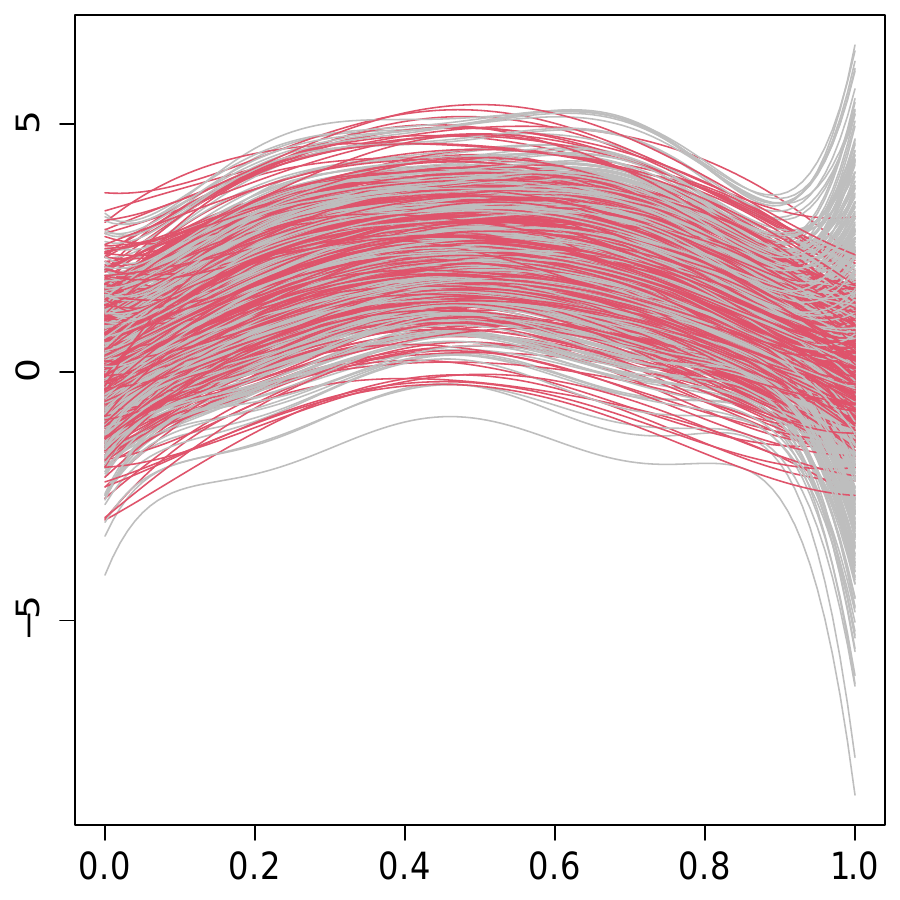}
    \caption{Supervised classification: Simulated datasets generated under Models~\ref{ModelC1}--\ref{ModelC3} detailed in Section~\ref{sec: Classification}. 
    Functions belonging to class~$X$ are depicted in gray, while those from class~$Y$ are red.}
    \label{fig: ClassificationDatasets}
\end{figure}

For a given function $z$ and a specific depth measure $D$, the depth-depth (DD) transform is defined as the bivariate vector $(D(z; X), D(z; Y))\tr \in [0,1]^2$. 
Here, $D(z; X)$ represents the sample $D$-depth of $z$ evaluated with respect to the sample from $X$, and analogously for $Y$. 
Following \citet{Li_etal2012}, we consider two DD-classifiers:
    \begin{itemize}
        \item The \textbf{max-depth} classifier (\textbf{max}), assigning a new observation $z$ to the group where it achieves a higher $D$-depth value.
        \item The \textbf{linear DD-classifier} (\textbf{DD}), establishing a decision boundary using a linear function (passing through the origin) that optimally separates the point clouds $\left\{ (D(X_i; X), D(X_i; Y))\tr \right\}_{i=1}^{n_{\textrm{train}}}$ and $\left\{ (D(Y_i; X), D(Y_i; Y))\tr \right\}_{i=1}^{n_{\textrm{train}}}$ in the $[0,1]^2$ space.
    \end{itemize} 

The predictive performance of each classifier and depth combination is assessed using the out-of-sample empirical misclassification rate.
This is computed on an independent test set comprising $n_{\textrm{test}} = 1\,000$ observations per group (ties are resolved by a fair coin toss, assigning a misclassification of $1/2$). 
Table~\ref{tab: Classification_n500} reports the means and standard deviations (in parentheses) of the misclassification rates across $1\,000$ independent simulation runs for $n_{\textrm{train}}=500$. 
The corresponding results for $n_{\textrm{train}}=100$ and $n_{\textrm{train}}=1\,000$ are provided in Tables~\ref{tab: Classification_n100} and~\ref{tab: Classification_n1000} within Section~\ref{Supplement: Simulations} of the online supplementary material.

\input{Tables_applications/Classification_n500}

The empirical findings are unambiguous: \RPD{} consistently outperforms all competing methods across all evaluated scenarios. 
This superiority is particularly striking in Models~\ref{ModelC2} and~\ref{ModelC3}, which are characterized by shape differences. 
In these complex settings, only \RPD{}---and to a markedly lesser extent, \ID{} and \RHD{}---demonstrate the capability to reliably distinguish between the two data groups. 
It is important to note that, for the DD-plot based on  \RPD{}, a simple linear classifier is sufficient to effectively classify the two groups of complex functional data, such as those in Model~\ref{ModelC3}.
This is explained by the analysis of the resulting DD-plots for Model~\ref{ModelC3} in Figure~\ref{fig: DD}, where the DD-transforms of the training data are displayed. Analogous diagnostic plots for Models~\ref{ModelC1} and~\ref{ModelC2} are in Section~\ref{Supplement: Simulations} of the online supplementary material.

From Table~\ref{tab: Classification_n500}, we observe that \RPD{} with regularization parameter $u = 0.1$ performs worse than \RPD{} with $u = 0.001$. Therefore, also for classification, we generally recommend using a very small regularization parameter, such as $u = 0.001$, consistently with the outlier detection task.

\begin{figure}
    \centering
    \includegraphics[width=0.9\linewidth]{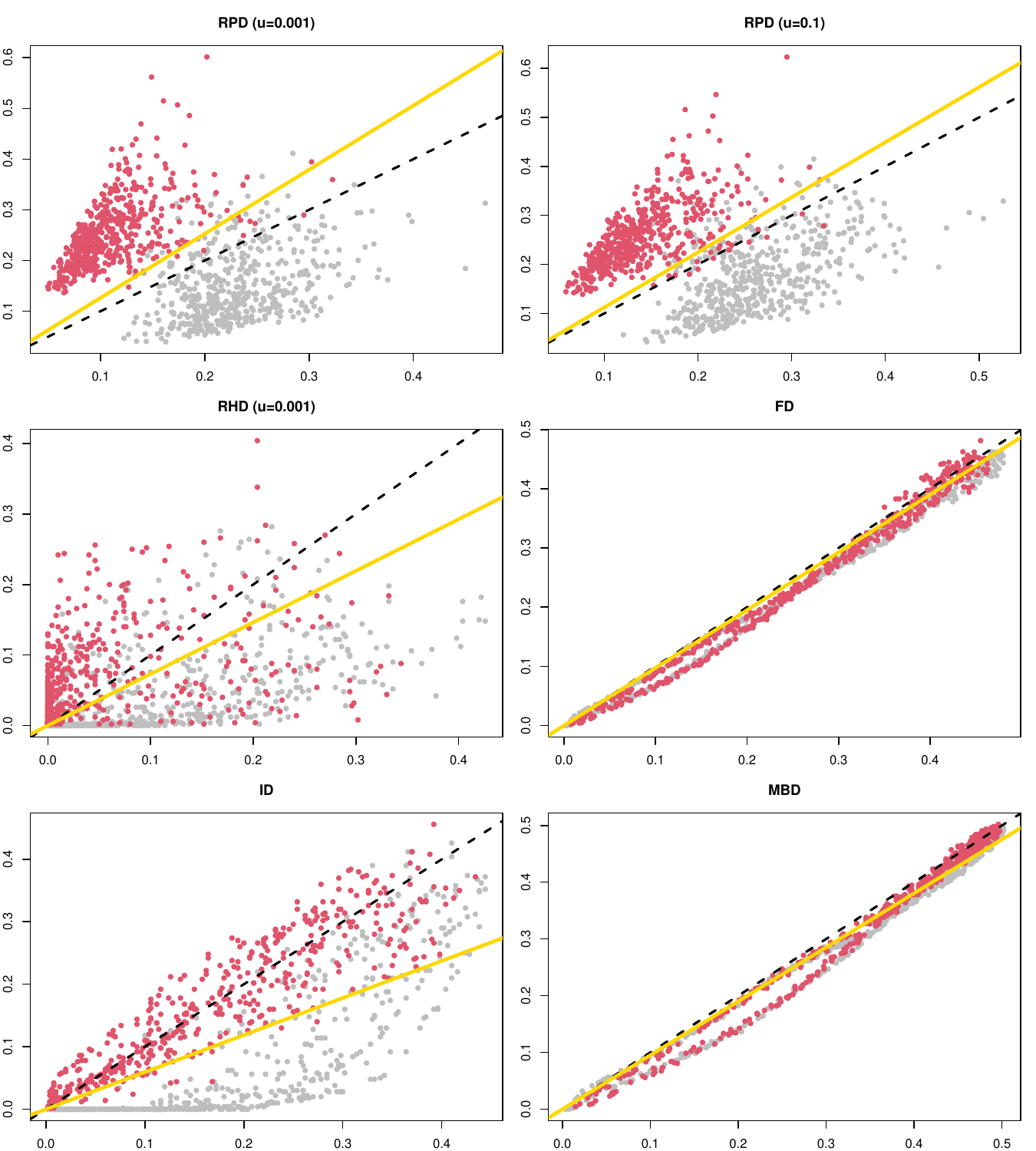}
    \caption{Supervised classification: DD-plots of the training data in a single run of Model~\ref{ModelC3} (gray points for $X_i$, red points for $Y_i$) with (a) \RPD{} (top left, $u = 0.001$), (b) \RPD{} (top right, $u = 0.1$), (c) \RHD{} (middle left, $u = 0.001$), (d) \FD{} (middle right), (e) \ID{} (bottom left), and (f) \MBD{} (bottom right). In the plots, the dashed line represents the max-depth classifier and the yellow line the linear DD-classifier best separating the two groups. The results are remarkable---while \RPD{} with DD-classifier achieves almost perfect separation of the clouds, \FD{} and \MBD{} are unable to cope with the shape difference between the clusters. \RHD{} and \ID{} perform slightly better, but the overlap of the two DD-clouds is substantial.}
\label{fig: DD}
\end{figure}

\subsection{Hypothesis testing}\label{sec: HypothesisTesting}

In our third application, we demonstrate the practical utility of the \RPD{} by employing it in rank-based tests to assess the equality of distributions.
This serves primarily as an empirical demonstration that \RPD{}-based rankings are statistically sensible.
Specifically, we construct the Kruskal–Wallis (KW) test statistic following the methodology of \cite{CS12}, utilizing rankings derived from the \RPD{}. 
When computing the KW statistic, standard tie corrections are applied, as implemented in the \textsf{kruskal.test} function in \textsf{R}.
For comparison, we perform identical KW test procedures, substituting the \RPD{} with other functional depths under consideration.\footnote{To prevent self-inclusion bias---which artificially inflates the depth of an observation when evaluated with respect to its own sample---we apply finite-sample corrections. For halfspace-based depths (\RHD{}, \FD{}, \ID{}) and \MBD{}, the adjusted depths are given by $(D - 1/n_k)\frac{n_k}{n_k-1}$ and $(D - 2/n_k)\frac{n_k}{n_k-2}$, respectively, where $n_k$ is the sample size and $D$ is the uncorrected in-sample depth. No continuity correction needs to be applied for \RPD{}.}
All tests are conducted at a 5\% significance level.

Since multi-sample extensions are straightforward, we restrict attention to two-sample observations $\set{X_{k, i}}_{i=1}^{n_k}$ for $k=1,2$, where $n_1=n_2=n/2$ and $n \in \set{50, 100, 500, 1\,000}$.
The functional observations are generated using the following expansion:
\begin{equation*}
    X_{k, i} = \mu_k + \sum_{j=1}^{J_{\mathrm{true}}} \gamma_{k, j}^{1/2} \xi_{k, i, j} \phi_{k, j},
\end{equation*}
for $i=1,\dots,n_k$ and $k=1,2$, setting $J_{\mathrm{true}} = 10$. 
For both groups, the scores $\xi_{k, i, j}$ are drawn independently from a standard normal distribution. 
The eigenvalues are determined by the eigengaps $\gamma_{k, j}-\gamma_{k,j+1} = 2j^{-a_k}$ with a decay rate $a_k \in (2,\infty)$, where the first eigenvalue is defined as $\gamma_{k, 1} = 2\sum_{j=1}^\infty j^{-a_k}$.

To investigate various types of group differences, we fix the parameters for the first group, $\{X_{1, i}\}_{i=1}^{n_1}$, and vary only those of the second group, $\{X_{2, i}\}_{i=1}^{n_2}$. For the first group, the mean function is set to zero, $\mu_1(t) = 0$ for $t \in [0,1]$, and the eigengap decay rate is $a_1 = 5$. The eigenfunctions correspond to the first $J_{\mathrm{true}}=10$ trigonometric functions, meaning $\phi_{1, j} = \phi_{\mathrm{tri},j}$ for $j=1,\dots,J_{\mathrm{true}}$. The trigonometric basis $\{\phi_{\mathrm{tri},j}\}_{j=1}^\infty$ is defined as
\begin{equation*}
    \phi_{\mathrm{tri},1}(t) = 1, \ \ 
    \phi_{\mathrm{tri},2m}(t) = \sqrt{2}\sin (2m\pi t), \ \
    \phi_{\mathrm{tri},2m+1}(t) = \sqrt{2}\cos (2m\pi t), \ \ t \in [0,1], \ \ m\in\N.
\end{equation*}

To construct scenarios with different alternatives, we define the following quantities:
(i) $\mu_{\mathrm{mag}}(t) = 3$ and
(ii) $\mu_{\sin}(t) = 1.2 \sin(2\pi t)$ for $t \in [0,1]$,
(iii) a decreasing sequence $\{\gamma_{\mathrm{rough},j}\}_{j=1}^\infty$ defined by  gap $\gamma_{\mathrm{rough},j}-\gamma_{\mathrm{rough},j+1} = 2j^{-a_{\mathrm{rough}}}$ with decay rate $a_{\mathrm{rough}}=2.5$ where $\gamma_{\mathrm{rough},1} = 2\sum_{j=1}^\infty j^{-a_{\mathrm{rough}}}$,
and (iv) an orthonormal subset with high frequency $\{\phi_{\mathrm{hfq},j}\}_{j=1}^{J_{\mathrm{true}}}$ where $\phi_{\mathrm{hfq},j} = \phi_{\mathrm{tri},j+J_{\mathrm{true}}}$ for $j=1,\dots,J_{\mathrm{true}}$.
Using a parameter $c \in \{0, 0.2, 0.4, 0.6, 0.8, 1\}$ to control the severity of the alternatives, we consider the following four models:

\begin{enumerate}[label=\textbf{Model~(T$_\arabic*$)}, ref=(T$_\arabic*$), leftmargin=*]
\item \label{ModelT1} \emph{Magnitude difference in means:}
$\mu_2 = c\mu_{\mathrm{mag}}$ but $\gamma_{2, j} = \gamma_{1, j}$ and $\phi_{2, j} = \phi_{1, j}$.
\item \label{ModelT2} \emph{Shape difference in means:}
$\mu_2 = c\mu_{\mathrm{sin}}$ but $\gamma_{2, j} = \gamma_{1, j}$ and $\phi_{2, j} = \phi_{1, j}$.
\item \label{ModelT3} \emph{Difference in eigenvalues:}
$a_2 = (1-c)a_1 + c \, a_{\mathrm{rough}}$ but $\mu_2=\mu_1$ and $\phi_{2, j} = \phi_{1, j}$.
\item \label{ModelT4} \emph{Difference in eigenfunctions:}
$\{\phi_{2, j}\}_{j=1}^{J_{\mathrm{true}}}$ is the Gram-Schmidt orthonormalized set of the functions $\{(1-c)\phi_{1, j} + c\, \phi_{\mathrm{hfq},j}\}_{j=1}^{J_{\mathrm{true}}}$
 but $\mu_2=\mu_1$ and $\gamma_{2, j} = \gamma_{1, j}$.
\end{enumerate}

Setting $c=0$ recovers the exact same null scenario across all Models~\ref{ModelT1}--\ref{ModelT4}.
Note that in Models~\ref{ModelT1} and \ref{ModelT2}, the groups differ only in their means, whereas in~\ref{ModelT3} and~\ref{ModelT4}, the difference lies solely in their second-order structure.
One realization of these curves (with $c=1$ for the second group) is depicted in Figure~\ref{fig_2test_real} in Section~\ref{Supplement: Simulations} of the online supplementary material.

For brevity, we present the results for $n=100$; comprehensive results for other sample sizes are summarized in Table~\ref{tab: KW_test_full} and Figures~\ref{fig: KW_test_n50}--\ref{fig: KW_test_n1000} in Section~\ref{Supplement: Simulations} of the online supplementary material. 

The empirical sizes (evaluated using $1\,000$ independent Monte Carlo replications under the shared null hypothesis $c=0$) are presented in Table~\ref{tab: KW_test}. 
All depth-based KW tests successfully control the Type I error. The empirical sizes cluster tightly below the nominal $5\%$ level (ranging from $0.037$ to $0.046$), indicating slightly conservative tests.

Figure~\ref{fig: KW_test_n100} illustrates the empirical power of the tests under Models~\ref{ModelT1}--\ref{ModelT4}.
In Model~\ref{ModelT1} (magnitude differences), all depths perform similarly, although \RPD{} shows a slightly slower increase in power with $c$. Performance differences are most evident in Models~\ref{ModelT2}--\ref{ModelT4}. First, \FD{} and \MBD{} behave similarly due to their related definitions. 
These two depths are the weakest among the compared methods, exhibiting lower power in Models~\ref{ModelT2}--\ref{ModelT4} and completely failing to capture group differences in Model~\ref{ModelT4}.
In Model~\ref{ModelT2}, \RHD{} (with both values of $u \in \set{0.001, 0.1}$) demonstrates the best performance, with \RPD{} closely behind and \ID{} performing slightly worse.
The limitations of traditional depths become apparent in Models~\ref{ModelT3} and~\ref{ModelT4}. 
In Model~\ref{ModelT3}, \FD{} and \MBD{} fail to detect the alternatives until the signal intensity is extremely high. 
In contrast, \RPD{} maintains excellent sensitivity to this group difference, while \RHD{} and \ID{} exhibit slightly weaker power.
In the final Model~\ref{ModelT4}, \RPD{} achieves a power very close to $1$ even for a small signal intensity of $c=0.2$. 
Regarding the other depths, \ID{} picks up the group differences only at $c=0.4$, and \RHD{} is even less sensitive, detecting the signal only for $c\geq 0.6$.
\FD{} and \MBD{} completely fail to detect any differences in this model.

To summarize, \RPD{} exhibits highly competitive results in the simple difference scenarios (Models~\ref{ModelT1} and~\ref{ModelT2}).
In Models~\ref{ModelT3} and~\ref{ModelT4}, \RPD{} strictly dominates all competitors.
Notably, \RPD{} tends to yield better results for shape differences when a weaker regularization parameter ($u=0.001$) is applied.

\input{Tables_applications/KW_test}

\begin{figure}
    \centering
\includegraphics[width=0.8\textwidth]{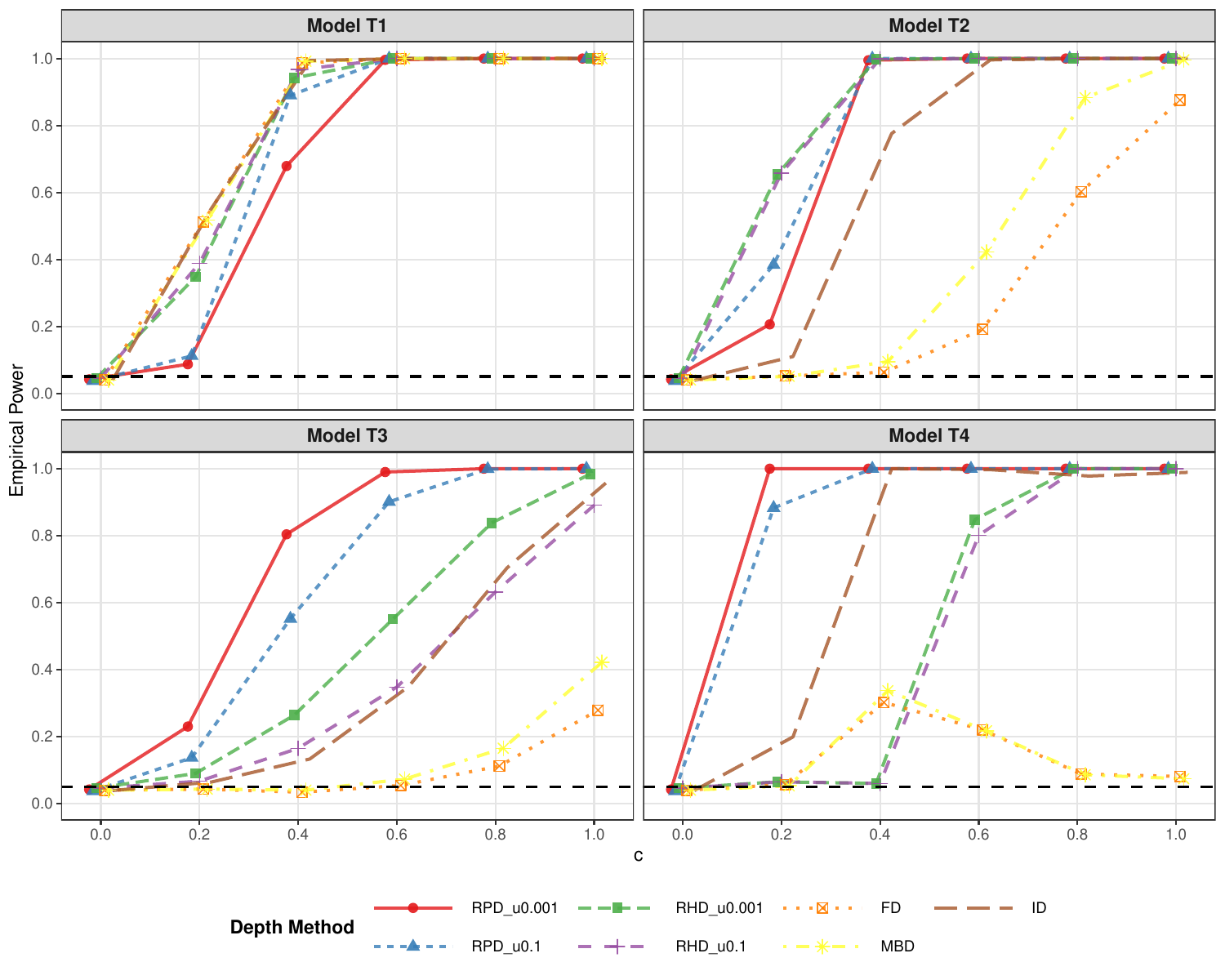}
    \caption{
        Hypothesis testing: Empirical power of the depth-based Kruskal--Wallis tests for sample size $n=100$ under Models~\ref{ModelT1}--\ref{ModelT4}. 
        The horizontal axis represents the intensity parameter $c$ controlling the magnitude of the distributional difference between two samples ($c=0$ for the null hypothesis), while the vertical axis shows the proportion of rejected null hypotheses over $1\,000$ Monte Carlo replications. 
        Higher curves (for $c>0$) indicate better ability to detect departures from the null hypothesis. The horizontal dashed line indicates the nominal type I error rate of $0.05$. 
    }
    \label{fig: KW_test_n100}
\end{figure}

\section{Real-world data analysis}\label{sec: RealData}
To conclude, we demonstrate the practical utility of the proposed \RPD{} in nonparametric hypothesis testing. 
Building upon the simulation results from Section~\ref{sec: HypothesisTesting}, we combine our method with the depth-based $k$-sample test of \cite{CS12} to capture structural differences among distributions in real-world data. Specifically, we analyze a dataset comprising two years of bike rental records (2011--2012) from the Capital Bikeshare system in Washington, D.C.~\citep{Fanaee_etal2014}.\footnote{Publicly available at \url{https://archive.ics.uci.edu/dataset/275/bike+sharing+dataset}.}
This dataset provides precise hourly rental aggregations, serving as an excellent proxy for analyzing temporal dynamics in human mobility.

In our analysis, each day is represented as a vector $X_{k,i} \in \mathbb{R}^{24}$ (treated as a functional random variable with $24$ equidistant discrete observations per curve) that contains the hourly aggregate number of rentals.
Here, the index $k \in \set{1, \dots, 7}$ denotes the day of the week (starting with Monday), and $i \in \set{1, \dots, n_k}$ corresponds to the specific occurrence of that day within the dataset.
We assume that $\set{X_{k,i}}_{i=1}^{n_k}$ is a random sample drawn from an underlying probability distribution $P_k$.
As illustrated in Figure~\ref{fig: DailyBikesProfiles}, these profiles exhibit strong daily seasonality.
However, identifying group differences remains challenging because variations often manifest in subtle shape patterns rather than obvious differences in magnitude, 
in which case a \RPD{}-based test proves particularly advantageous, see Section~\ref{sec: HypothesisTesting}.
Our primary objective is to test whether these distributions coincide across the different days of the week. 

\begin{figure}
    \centering
    \includegraphics[width=0.9\linewidth]{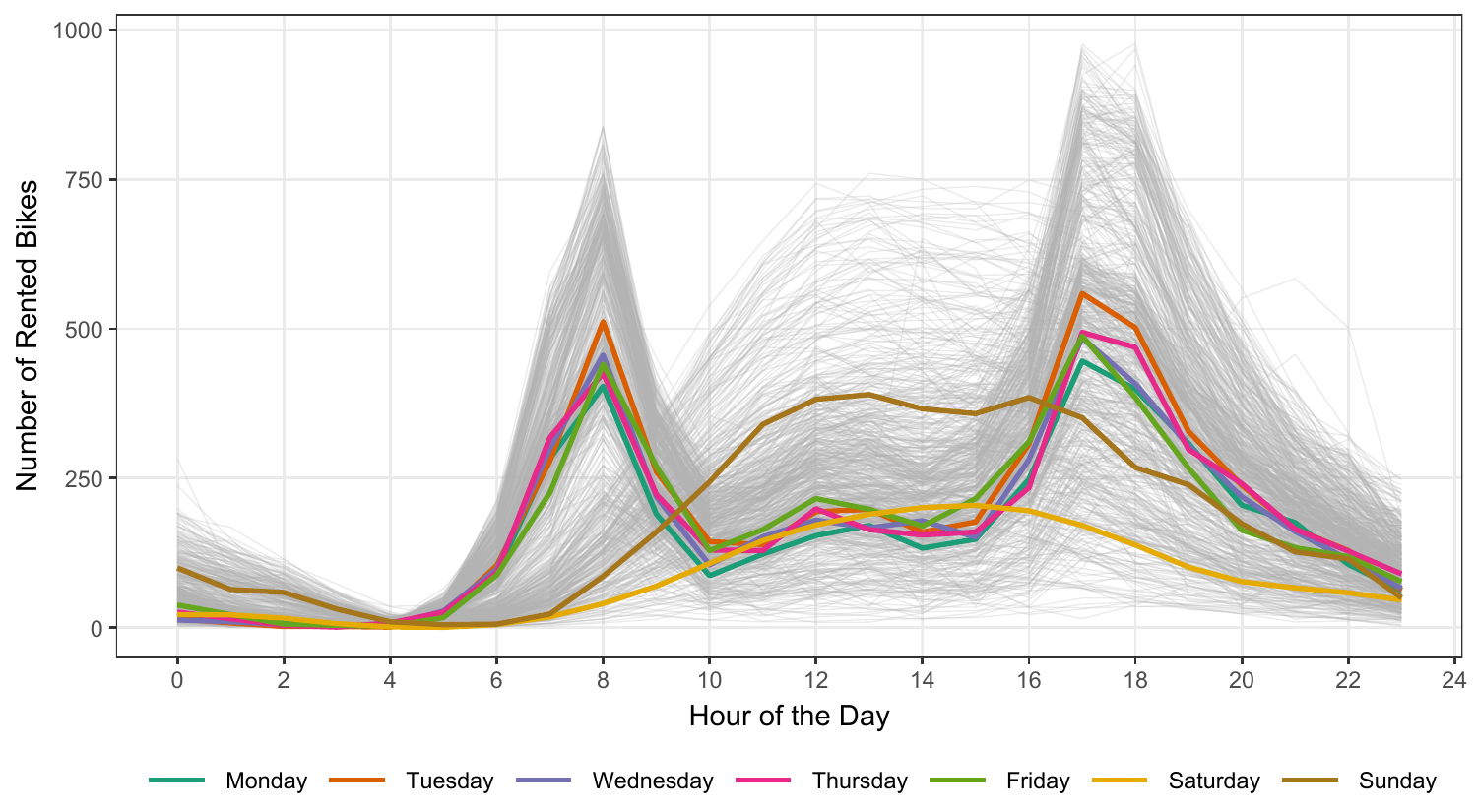}
    \caption{Real-world application: Gray curves represent all observed daily profiles, while colored curves correspond to the deepest functional profiles (with respect to $\RPD{}$ with $u=0.001$) for each day of the week. Workdays exhibit a distinct bimodal pattern with peaks occurring at 8 a.m. and 6 p.m., which corresponds to daily commuting to and from work. Notably, the volume of rentals during these peak hours shows a decreasing trend throughout the workweek, with Friday experiencing significantly fewer rentals than Monday. In contrast, weekends display a unimodal distribution with a single peak around 1 p.m., where the overall number of rented bikes is lower on Saturday compared to Sunday.}
    \label{fig: DailyBikesProfiles}
\end{figure}

Our fundamental question is whether the daily rental profiles differ significantly depending on the day of the week.
The null hypothesis of the global test is $$ H_0^{(1)}: P_1 = P_2 = \dots = P_7 $$ against the alternative that at least two distributions differ.
We calculated the KW test described in Section~\ref{sec: HypothesisTesting}, where the $\RPD{}$ with $u=0.001$ is used, and the resulting p-value $p < 0.001$ leads to an unambiguous rejection of $H_0^{(1)}$.\footnote{Given multiple tested hypotheses, we applied the Bonferroni procedure. All reported adjusted p-values reflect this correction.}
This is confirmed by a visual inspection Figure~\ref{fig: DailyBikesProfiles}, where the \RPD{}-deepest curves for the individual days indicate significant heterogeneity across the week. 

Given the decisive rejection of the global null hypothesis, we proceed with a hierarchical post-hoc analysis to pinpoint the specific sources of this heterogeneity.
Motivated by the clear visual distinction between the bimodal commuting patterns and the unimodal weekend profiles, we first contrast the working week and the weekend. 
We test the null hypothesis $$ H_0^{(2)}: P_{\text{workday}} = P_{\text{weekend}} $$ against the alternative that these two distributions differ. 
The resulting p-value of $p < 0.001$ confirms a highly significant structural difference between these two temporal regimes.
Focusing exclusively on the weekend, Figure~\ref{fig: DailyBikesProfiles} suggests that while the overall functional shape remains similar, the absolute rental volume is lower on Saturday compared to Sunday.
To formally assess this observation, we test the equality of the weekend distributions
$$ H_0^{(3)}: P_6 = P_7. $$
This test yields a p-value of $p = 0.001$, leading us to reject the null hypothesis and conclude that Saturdays and Sundays exhibit statistically distinct rental behaviors.
Next, we examine the intra-week dynamics to determine whether the rental patterns remain stable throughout the working week. We test the homogeneity of all five workday distributions
$$ H_0^{(4)}: P_1 = \ldots = P_5.$$
The corresponding p-value $p<0.001$ indicates significant variations among the workdays. As observed in the deepest profiles in Figure~\ref{fig: DailyBikesProfiles}, Friday appears to deviate from the rest of the week, showing a noticeable decrease in peak volumes. 
To verify if Friday is indeed the primary source of this intra-week variation, we exclude it and test the equivalence of the distributions from Monday through Thursday
$$ H_0^{(5)}: P_1 = \ldots = P_4.$$
For this subset, the test results in a p-value of $p=0.403$. 
Consequently, we fail to reject the null hypothesis, suggesting that the daily rental distributions are statistically indistinguishable from Monday to Thursday.
We conclude that the observed heterogeneity during the workweek is mainly driven by the distinct drop in bike rentals on Friday. Finally, we conducted the same hierarchical analysis using the alternative depth functions described in Section~\ref{sec: Simulations}. The corresponding p-values for all tested hypotheses $H_0^{(1)}, \dots, H_0^{(5)}$ are summarized in Table~\ref{tab: pvalues_realData} in Section~\ref{Supplement: Simulations} of the online supplementary material.

\section*{Online supplementary material}
\begin{itemize}
    \item A \textsf{PDF} document containing the proofs of all theoretical results (Section~\ref{Supplement: Proofs}), a simulation study regarding the robustness of the~\RPD{} median (Section~\ref{Supplement: LocationEstimation}), and supplements to the simulation studies~(Section~\ref{Supplement: Simulations}).
    \item A commented \textsf{R} script with the implementation of all our numerical experiments.
\end{itemize}

\printcredits

\putbib
\phantomsection\label{main:LastPage}
\end{bibunit}

\begin{bibunit}

\clearpage
\hypersetup{pageanchor=false}
\gdef\lastpage{\pageref*{supp:LastPage}}
\pagenumbering{arabic}
\setcounter{page}{1}
\setcounter{section}{0}
\setcounter{figure}{0}
\setcounter{table}{0}
\setcounter{equation}{0}
\renewcommand{\theHsection}{S.\arabic{section}}
\renewcommand{\theHfigure}{S.\arabic{figure}}
\renewcommand{\theHtable}{S.\arabic{table}}
\renewcommand{\theHequation}{S.\arabic{equation}}

\begin{center}
{\Large\bfseries Supplementary Material to ``Projection depth for functional data: Practical issues, computation and applications''\par}
\vspace{1em}
{Filip Bo\v{c}inec\textsuperscript{1}, Stanislav Nagy\textsuperscript{1}, Hyemin Yeon\textsuperscript{2}\par}
\vspace{0.5em}
{\small \textsuperscript{1}Faculty of Mathematics and Physics, Charles University, Prague, Czech Republic\par}
{\small \textsuperscript{2}Department of Mathematical Sciences, Kent State University, Kent, OH, USA\par}
\end{center}
\vspace{1em}

\renewcommand{\thesection}{S.\arabic{section}}
\renewcommand{\thefigure}{S.\arabic{figure}}
\renewcommand{\thetable}{S.\arabic{table}}
\renewcommand{\theequation}{S.\arabic{equation}}

\section{Proofs of theoretical results}\label{Supplement: Proofs}

\begin{proof}[\textbf{Proof of Lemma~\ref{lemma: betaPos}}]\mbox{}
    Note that it suffices to prove that $\nu\left(\set{ v \in \mathbb{S} : \MAD[\inner{X, v}] = 0 }\right)=0$. Assume, for a contradiction, that this is not true.  First note that for any $v \in \SS$ fixed, we have that $\MAD{[\inner{X,v}]}=0$ if and only if $\inner{X,v}=\med{[\inner{X,v}]}$ with probability at least $1/2$. 
    Consequently, if we denote 
    \begin{equation*}
        B=\set{v\in\SS\st P_X(\set{y \in \HH \st \inner{y,v}=\med{[\inner{X,v}]}})\geq 1/2},
    \end{equation*}
    then $\nu(B)>0$. Consider the set 
    \begin{equation*}
        E=\set{(y,v)\in\HH\times B\st \inner{y,v}=\med{[\inner{X,v}]}}.
    \end{equation*}
    By the Fubini-Tonelli theorem \citep[Theorem~4.4.5]{Dudley2002},
    \begin{align*}
        \int_\HH &\nu(\set{v\in B\st \inner{y,v}=\med{[\inner{X,v}]}})\dd P_X(y)
        =\int_\HH\int_B \indic_{E}(y,v)\dd\nu(v)\dd P_X(y)\\
        &=\int_B\int_\HH \indic_{E}(y,v)\dd P_X(y)\dd\nu(v)
        =\int_B P_X(\set{y\in \HH\st \inner{y,v}=\med[\inner{X,v}]}) \dd \nu(v)\\
        &\geq \frac{1}{2}\int_B \dd \nu(v)=\nu(B)/2>0,
    \end{align*}
    where $\indic_{E}(y,v) = 1$ if $(y, v) \in E$ and $0$ elsewhere, and the inequality above follows from the definition of $B$.
    As a consequence, there must exist $x\in\HH$ such that 
    \begin{equation}\label{eqLemma11a}
        \nu(\set{v\in B \st \inner{x,v}=\med[\inner{X,v}]})>0,
    \end{equation}
    that is, the set of normal vectors $v \in \SS$ of hyperplanes of probability at least $1/2$ containing this particular point $x$ has positive $\nu$-measure. We will show that $x$ is indeed the desired atom with probability at least $1/2$. 

    Let $X'\eqdis X-x$ and 
    \begin{align*}
        B'&=\set{v\in B\st P_{X}(\set{y\in\HH\st \inner{y,v}=\inner{x,v}})\geq 1/2}\\
        &=\set{v\in B\st P_{X'}(\set{y\in\HH\st \inner{y,v}=0})\geq 1/2}.
    \end{align*}
    By~\eqref{eqLemma11a} we know that $\nu(B')>0$. 
    We will use the Fubini-Tonelli theorem again for the set $G=\set{(y,v)\in\HH\times B'\st \inner{y,v}=0}$. We obtain
    \begin{align}
        \int_\HH& \nu(\set{v\in {B'}\st \inner{y,v}=0})\dd P_{X'}(y)=\int_\HH\int_{B'} \indic_{G}(y,v)\dd\nu(v)\dd P_{X'}(y)\nonumber\\
        &=\int_{B'}\int_\HH \indic_{G}(y,v)\dd P_{X'}(y)\dd\nu(v)    =\int_{B'} P_{X'}(\set{y\in \HH\st \inner{y,v}=0}) \dd \nu(v)\nonumber\\
        &\geq \frac{1}{2}\int_{B'} \dd \nu(v)=\nu(B')/2. \label{eqLemma11b}
    \end{align}
    Note that $\nu(\set{v\in {B'}\st \inner{y,v}=0})$ can be non-zero only for $y=0$,
    because $\nu$ is smooth. 
    As a consequence, 
    \begin{equation*}
        \int_\HH \nu(\set{v\in {B'}\st \inner{y,v}=0})\dd P_{X'}(y) = \nu(B')P_{X'}(\set{0}),
    \end{equation*}
    which together with~\eqref{eqLemma11b} implies $P_{X'}(\set{0})\geq 1/2$.
    This proves that $X'$ has an atom at $0\in\HH$ with probability at least $1/2$, which is a contradiction.
\end{proof}

\begin{proof}[\textbf{Proof of Theorem~\ref{thm: invariance}}]\mbox{}
    Using the shift-equivariance of the median and shift-invariance of the MAD, we obtain for all $v \in \SS$ and $x\in\HH$ that
    \begin{align*}
        O_v(\mathcal{T}x; P_{\mathcal{T}X})
        &= \frac{\abs{\inner{\mathcal{S}x + e, \mathcal{S}\mathcal{S}^*v} - \med[\inner{\mathcal{S}X + e, \mathcal{S}\mathcal{S}^*v}]}}{\MAD[\inner{\mathcal{S}X + e, \mathcal{S}\mathcal{S}^*v}]} 
        =\frac{\abs{\inner{\mathcal{S}x, \mathcal{S}\mathcal{S}^*v} - \med[\inner{\mathcal{S}X, \mathcal{S}\mathcal{S}^*v}]}}{\MAD[\inner{\mathcal{S}X, \mathcal{S}\mathcal{S}^*v}]} \\
        &= \frac{\abs{\inner{x, \mathcal{S}^*v} - \med[\inner{X, \mathcal{S}^*v}]}}{\MAD[\inner{X, \mathcal{S}^*v}]}
        = O_{\mathcal{S}^*v}(x; P_X).
    \end{align*}
    This directly implies that
    \begin{equation}\label{eq: proofInv}
        \frac{1}{1+O_v(\mathcal{T}x; P_{\mathcal{T}X})} = \frac{1}{1+ O_{\mathcal{S}^*v}(x; P_X)} \quad\text{for all } v\in\SS,x\in\HH.
    \end{equation}
    To conclude the proof, observe that the regularized set of directions transforms as 
    \begin{align*}
        &\set{v\in\SS\st \MAD[\inner{\mathcal{T}X,v}]\geq q_u[\MAD[\inner{\mathcal{T}X,V}]]}\\
        &\quad=\set{v\in\SS\st \MAD[\inner{\mathcal{S}X,v}]\geq q_u[\MAD[\inner{\mathcal{S}X,V}]]}\\
        &\quad=\set{v\in\SS\st \MAD[\inner{X,\mathcal{S}^*v}]\geq q_u[\MAD[\inner{X,\mathcal{S}^*V}]]}\\
        &\quad=\set{\mathcal{S}w\st w\in\SS, \MAD[\inner{X,w}]\geq q_u[\MAD[\inner{X,\mathcal{S}^*V}]]}.
    \end{align*}

    Taking the infimum of the left-hand side of~\eqref{eq: proofInv} over this set yields $D_{(u)}(\mathcal{T}x; P_{\mathcal{T}X})$. Similarly, minimizing the right-hand side of~\eqref{eq: proofInv} over the transformed set yields $D'_{(u)}(x; P_X)$, where $D'_{(u)}$ denotes the~\RPD{} with $\beta(u)$ determined using the distribution of $\mathcal{S}^*V$.
\end{proof}

\begin{proof}[\textbf{Proof of Lemma~\ref{lemma: betaConv}}]\mbox{}
    By~\citet[Lemma~12]{Bocinec_etal2026}, it holds that
    \begin{equation*}
        \sup_{v\in\SS}\abs{\widehat{\MAD}\left[\set{\inner{X_i, v}}_{i=1}^n\right]-\MAD[\inner{X,v}]} \xasto[n\to\infty] 0.
    \end{equation*}
    This means that $\widehat{\MAD}\left[\set{\inner{X_i, V(\omega)}}_{i=1}^n\right]$ converges to $\MAD[\inner{X,V(\omega)}]$ for every $\omega\in\Omega$. Consequently $\widehat{\MAD}\left[\set{\inner{X_i, V}}_{i=1}^n\right]$ converges to $\MAD[\inner{X,V}]$ in distribution. The claim follows from~\citet[Section~2.3]{Serfling1980} using the continuity assumption.
\end{proof}

\begin{proof}[\textbf{Proof of Theorem~\ref{thm: ConsistRegQuantile}}]\mbox{}
    Using the $1$-Lipschitz continuity of the function $t\mapsto 1/(1+t), t\geq 0$, we can write
    \begin{multline*}
        \sup_{x\in\mathcal{F}}\abs{D_{(u)}(x; \widehat{P}_n) - D_{(u)}(x; P_X)} 
        \leq \sup_{x\in\mathcal{F}} \abs{\sup_{v \in \widehat{\VV}_{n(u)}} O_v(x; \widehat{P}_n) - \sup_{v \in \VV_{(u)}} O_v(x; P_X)}\\
        \leq \sup_{x\in\mathcal{F}}\abs{\sup_{v \in \widehat{\VV}_{n(u)}} O_v(x;  \widehat{P}_n) - \sup_{v \in \VV_{\widehat{\beta}_n(u)}} O_v(x; P_X)}
        + \sup_{x\in\mathcal{F}}\abs{\sup_{v \in \VV_{\widehat{\beta}_n(u)}} O_v(x; P_X) - \sup_{v \in \VV_{\beta(u)}} O_v(x; P_X)}.
    \end{multline*}
    The first term converges to zero a.s.~\citep[Theorem~7]{Bocinec_etal2026}. Consequently, the only remaining step is to show that
    \begin{equation*} 
        \abs{\sup_{v \in \VV_{\widehat{\beta}_n(u)}} O_v(x; P_X) - \sup_{v \in \VV_{\beta(u)}} O_v(x; P_X)} \xasto[n\to\infty] 0.
    \end{equation*}
    This convergence follows immediately from Lemma~\ref{lemma: betaConv} together with assumption~\eqref{consAssump} from the main manuscript.
\end{proof}

\begin{proof}[\textbf{Proof of Theorem~\ref{thm: BP}}]\mbox{}
    Fix $\varepsilon \in (0,1/2)$. Because condition~\eqref{technAssBP} in the main manuscript holds, choose $\gamma > 0$ such that $\nu(A) \geq 1 - u$, where
    \begin{equation*}
        A = \set{v \in \SS \st 
        \omega_v(\gamma) \leq \frac{1/2 - \varepsilon}{1 - \varepsilon}}.
    \end{equation*}
    For any $Q \in \P{\HH}$, $X' \sim P_{(Q,\varepsilon)}$, $v \in A$ and $a\in\R$, we have
    \begin{multline*}
        \pr\!\left(\inner{X', v} \in (a - \gamma/2, a + \gamma/2)\right)
        \leq (1 - \varepsilon)\pr\!\left(\inner{X, v} \in (a - \gamma/2, a + \gamma/2)\right) + \varepsilon \\
        \leq (1-\varepsilon)\omega_v(\gamma) + \varepsilon 
        \leq \frac{1}{2} - \varepsilon + \varepsilon = \frac{1}{2}.
    \end{multline*}
    Hence, for $v \in A$ and any $Q \in \P{\HH}$, it follows that $\MAD[\inner{X', v}] \geq \gamma$.  
    Since $\nu(A) \geq 1 - u$, we obtain 
    \begin{equation}\label{eq: beta bound}
        \inf_Q\beta_{(Q,\varepsilon)}(u) \geq \gamma.
    \end{equation}
    Next, as in~\citet[Theorem~10]{Bocinec_etal2026}, for $X' \sim P_{(Q,\varepsilon)}$ we can bound
    \begin{equation}\label{eq: BPmed}
        \med[\inner{X', v}] \leq \med[\norm{X'}] \leq q,
    \end{equation}
    where $q > 0$ denotes the $(1/2 + \varepsilon)$-quantile of $\norm{X}$.
    The upper bound~\eqref{eq: BPmed} follows from the fact that the $\varepsilon$-contaminated distribution 
    $P_{(Q,\varepsilon)}$ differs from $P_X$ by at most an $\varepsilon$-fraction of its mass. 
    Therefore, the median of $\norm{X'}$ under $P_{(Q,\varepsilon)}$ cannot exceed the $(1/2 + \varepsilon)$-quantile of $\norm{X}$ under $P_X$.
    Similarly
    \begin{equation}\label{eq: BPMAD}
        \MAD[\inner{X', v}] \leq 2\,\med[\norm{X'}] \leq 2q.
    \end{equation}

    Since $\theta(P_{(Q,\varepsilon)})$ maximizes $D_{(u)}(\cdot; P_{(Q,\varepsilon)})$ 
    over $\Clo{\spn{\VV_{(u)}(Q,\varepsilon)}}$, its outlyingness cannot exceed that of 
    $0 \in \Clo{\spn{\VV_{(u)}(Q,\varepsilon)}}$. Therefore, by~\eqref{eq: BPmed} and~\eqref{eq: BPMAD}
    \begin{align}
        \sup_{v \in \VV_{(u)}(Q,\varepsilon)} O_v(0; P_{(Q,\varepsilon)})
        &\geq \sup_{v \in \VV_{(u)}(Q,\varepsilon)} O_v(\theta(P_{(Q,\varepsilon)}); P_{(Q,\varepsilon)}) \nonumber\\
        &\geq \frac{
            \sup_{v \in \VV_{(u)}(Q,\varepsilon)} \abs{\inner{\theta(P_{(Q,\varepsilon)}), v}} 
            - \sup_{v \in \SS} \abs{\med[\inner{X', v}]}
        }{\sup_{v \in \SS} \MAD[\inner{X', v}]} \nonumber\\
        &\geq \frac{
            \sup_{v \in \VV_{(u)}(Q,\varepsilon)} \abs{\inner{\theta(P_{(Q,\varepsilon)}), v}} - q
        }{2q}.\label{eq: BP1}
    \end{align}
    On the other hand, by~\eqref{eq: beta bound} and~\eqref{eq: BPmed}
    \begin{equation}
        \sup_{v \in \VV_{(u)}(Q,\varepsilon)} O_v(0; P_{(Q,\varepsilon)}) 
        \leq \frac{\sup_{v \in \SS} \abs{\med[\inner{X', v}]}}{\beta_{(Q,\varepsilon)}(u)} 
        \leq q / \gamma.\label{eq: BP2}
    \end{equation}
    Combining the inequalities~\eqref{eq: BP1} and~\eqref{eq: BP2} yields
    \begin{equation*}
        \sup_{v \in \VV_{(u)}(Q,\varepsilon)} 
        \abs{\inner{\theta(P_{(Q,\varepsilon)}), v}} 
        \leq \frac{2q^2}{\gamma} + q < \infty,
    \end{equation*}
    proving that the \RPD{} median cannot break down under the considered contamination.
\end{proof}

\begin{proof}[\textbf{Proof of Theorem~\ref{thm: consistency}}]\mbox{}
First, because $\beta(u)$ is uniquely defined, $\beta^{(L)}(u)$ consistently estimates $\beta(u)$ in the sense that $\beta^{(L)}(u)\to\beta(u)$ a.s. as $L\to\infty$~\citep[Section~2.3]{Serfling1980}. Next, note that
\begin{multline*}
    \abs{D_{(u)}^{(M)}(x; P_X)-D_{(u)}(x; P_X)}=\abs{D_{(u)}^{(M)}(x; P_X)-D_{\beta(u)}(x; P_X)}\\
    \leq \abs{D_{(u)}^{(M)}(x; P_X)-D_{\beta^{(L)}(u)}(x; P_X)}+\abs{D_{\beta^{(L)}(u)}(x; P_X)-D_{(u)}(x; P_X)}.
\end{multline*}
The second summand equals
\begin{equation*}
    \abs{\Gamma_x(\beta^{(L)}(u))-\Gamma_x(\beta(u))},
\end{equation*}
which converges to $0$ a.s. as $L\to\infty$ since $\Gamma_x(\cdot)$ is continuous at $\beta(u)$ and by the consistency of $\beta^{(L)}(u)$. The rest of the proof follows directly from the arguments of \citet[Proposition~11]{Dyckerhoff2004}. The only difference is that we are taking infimum over $\VV_{\beta^{(L) }(u)}$, but since $\nu$ does not vanish on $\SS$, it also does not vanish on its closed non-empty subset $\VV_{\beta^{(L)}(u)}$.
\end{proof}

\newpage

\section{Robust location estimation}\label{Supplement: LocationEstimation}

In this supplementary section, we present an additional simulation study concerning robust location estimation in a functional setting. 
Specifically, we demonstrate that the median induced by \RPD{} is robust---as theoretically established in Theorem \ref{thm: BP} of the main manuscript---and achieves competitive empirical performance when compared to medians induced by alternative depth functions.

Motivated by contamination schemes commonly used in the functional data literature (see, e.g., \citet{Sinova_etal2018} and the references therein), we independently generate $n$ random curves. 
Among these, $\lceil (1-\varepsilon) n \rceil$ curves are sampled from a non-contaminated (clean) reference distribution $X$, while the remaining $\lfloor \varepsilon n \rfloor$ curves are drawn from a contaminating distribution $Y$, according to the models described below. 
We consider $\varepsilon\in\set{0.01,0.05,0.1}$ and $n\in\set{100,500,1\,000}$.
To ensure a meaningful comparison of robustness properties across different depth notions, identifying a unique population median is essential. 
Because this quantity is unambiguously defined only under symmetry assumptions, we restrict our analysis to symmetric, non-contaminated reference distributions for $X$.
Throughout all models, $X$ is defined as a Gaussian process on $[0,1]$ with mean function $\mu_X(t) = \sin(2\pi t)$ and covariance function $\Sigma(s,t) = 0.5 \exp\!\left(-|s-t|/10\right)$ for $s,t \in [0,1]$.

\begin{enumerate}[label=\textbf{Model~(L$_\arabic*$)}, ref=(L$_\arabic*$), leftmargin=*]
    \item\label{ModelL1} \emph{Shift contamination model:} 
    $Y$ is a Gaussian process with mean function $\mu_Y(t) = \sin(2\pi t) + 1$ and covariance $\Sigma$.

    \item\label{ModelL2} \emph{Amplitude contamination model:} 
    $Y \eqdis 2X$.

    \item\label{ModelL3} \emph{Phase contamination model:} 
    $Y$ is a Gaussian process with mean function $\mu_Y(t) = \sin\!\big(2\pi (t + 1/8)\big)$ and covariance $\Sigma$.

    \item\label{ModelL4} \emph{Shape contamination model:} 
    $Y$ is a Gaussian process with mean function 
    $\mu_Y(t) = \sin(2\pi t) + \sin(8\pi t)/6$ and covariance $\Sigma$.

    \item\label{ModelL5} \emph{Peak contamination model:} 
    $Y$ is a Gaussian process with mean function 
    $\mu_Y(t) = \sin(2\pi t) + 2\exp\!\big(-1\,000(t-0.8)^2\big)$ and covariance $\Sigma$.
\end{enumerate}
Illustrations of simulated datasets generated under these models are provided in Figure~\ref{fig: LocationEstimationDatasets} in Section~\ref{Supplement: Simulations}.

Our objective is to assess the robustness of various depth-induced medians under different contamination scenarios. 
Recall that by Theorem \ref{thm: BP} and the accompanying discussion, stronger regularization (a larger $u$) yields more robust location estimators. 
Therefore, in contrast to Sections~\ref{sec: OutlierDetection}, \ref{sec: Classification}, and \ref{sec: HypothesisTesting}, where $u\in\set{0.001, 0.1}$ was used, here we consider $u\in\set{0.1,0.5}$.
For each depth measure, we define the location estimator $\widehat{\mu}$ as the deepest curve in the generated sample. 
Its performance is evaluated using the mean integrated squared error ($\mathrm{MISE}$), which can be decomposed into an integrated variance term ($\mathrm{IVAR}$) and an integrated squared bias term ($\mathrm{ISB}$) as follows

\begin{equation}\label{eq: MISEiSB}
\begin{aligned}
    \mathrm{MISE}(\widehat{\mu}) & =\E\norm{\widehat{\mu}-\mu_X}^2 = \E \int_0^1\left(\widehat{\mu}(t)-\mu_X(t)\right)^2\dd t=\int_0^1\E\left(\widehat{\mu}(t)-\mu_X(t)\right)^2\dd t\\
    & =\int_0^1\var\left(\widehat{\mu}(t)-\mu_X(t)\right)\dd t+\int_0^1\left(\E\,\widehat{\mu}(t)-\mu_X(t)\right)^2\dd t=\mathrm{IVAR}(\widehat{\mu})+\mathrm{ISB}(\widehat{\mu}).
    \end{aligned}
\end{equation}

For Models \ref{ModelL1}--\ref{ModelL5}, these quantities are estimated using Monte Carlo simulations with $1\,000$ replications. 
A comprehensive summary of the results across all models and depth measures is provided in Table~\ref{tab: Location_n500} for $n=500$, and in Tables~\ref{tab: Location_n100} and \ref{tab: Location_n1000} in Section~\ref{Supplement: Simulations} for $n=100$ and $n=1\,000$, respectively.

\input{Tables_applications/Location_n500}

Here, we focus our discussion on the shape contamination scenario (Model \ref{ModelL4}) for $n=500$ to highlight the behavior of the \RPD{}‑induced median.
Contrary to classical depth measures like \MBD{} and \FD{}, which primarily achieve lower $\mathrm{MISE}$ by minimizing the $\mathrm{ISB}$, the \RPD{}‑median is characterized by an exceptionally low $\mathrm{IVAR}$.
In fact, across most contamination levels, $\RPD_{0.5}$ achieves the lowest variance among all compared methods.
While its squared bias is generally higher---leading to an overall $\mathrm{MISE}$ that is slightly larger than the best-performing \MBD{}---the \RPD{}‑median remains highly competitive.
Most importantly, the results clearly illustrate the critical role of the tuning parameter $u$.
Increasing the regularization from $u=0.1$ to $u=0.5$ substantially mitigates the bias and notably improves the overall $\mathrm{MISE}$.
This empirical behavior perfectly aligns with the theoretical findings presented in Section \ref{subsec: robustness} of the main manuscript (see Theorem \ref{thm: BP}), confirming that a larger $u$ yields a more robust location estimator.

Taken together with our previous simulation results in Section~\ref{sec: Simulations} of the main manuscript, we assert that classical depth functions often overlook shape-related features within functional data.
In contrast, the \RPD{}‑median continues to yield robust estimates by simultaneously accounting for centrality and various types of outlyingness.
Consequently, we recommend selecting the quantile level $u$ based on the underlying data structure and the primary analytical goal.
When detecting severe outlyingness is the main objective, smaller values (e.g., $u=0.001$) are preferable, as demonstrated in Sections \ref{sec: OutlierDetection}--\ref{sec: HypothesisTesting}.
Conversely, for simpler functional models with lower intrinsic complexity, larger quantile levels (such as $u=0.1$) may be more appropriate, as evidenced by the present location estimation study.

\clearpage
\newpage

\section{Simulation study: Supplementary results}\label{Supplement: Simulations}
\suppressfloats[t]
\input{Tables_applications/OutlierDetectionRanks_n100}
\input{Tables_applications/OutlierDetectionRanks_n1000}

\clearpage

\begin{figure}[htpb]
    \centering
    \includegraphics[width=\textwidth]{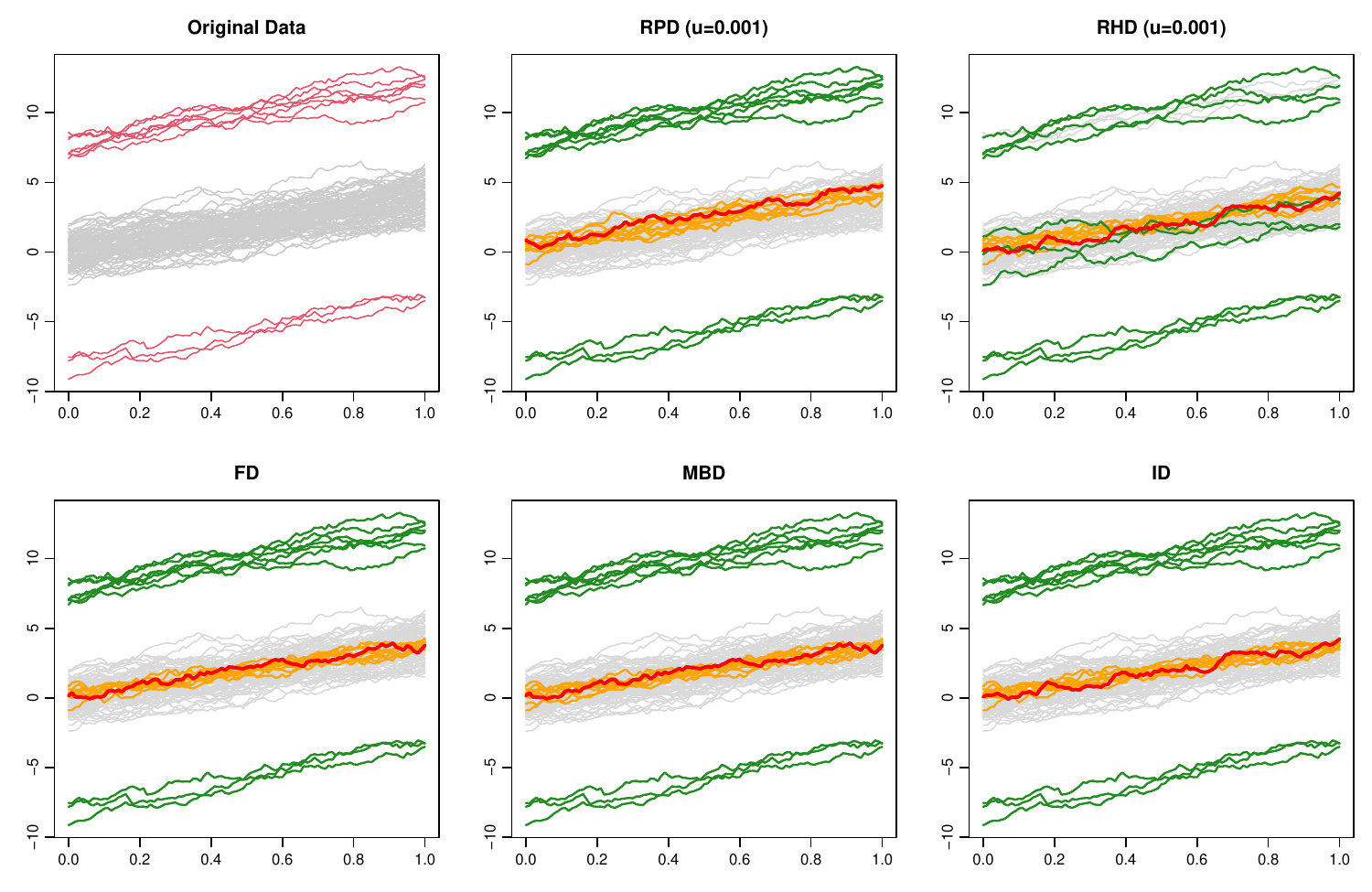}
    \caption{Outlier detection (Section~\ref{sec: OutlierDetection}): A single random sample of functional observations (gray) generated from Model~\ref{ModelD1} with sample size $n = 100$, containing $m = 10$ outlying curves (red) (top left). 
    The same sample is shown with the sample median curve highlighted in red, the $10\%$ deepest observations in orange, and the $10\%$ least deep observations in green, based on \RPD{} and \RHD{} with $u = 0.001$, \FD{}, \MBD{} and \ID{}.
    }
    \label{fig: OutlierDetDepths_D1}
\end{figure}

\begin{figure}[htpb]
    \centering
    \includegraphics[width=\textwidth]{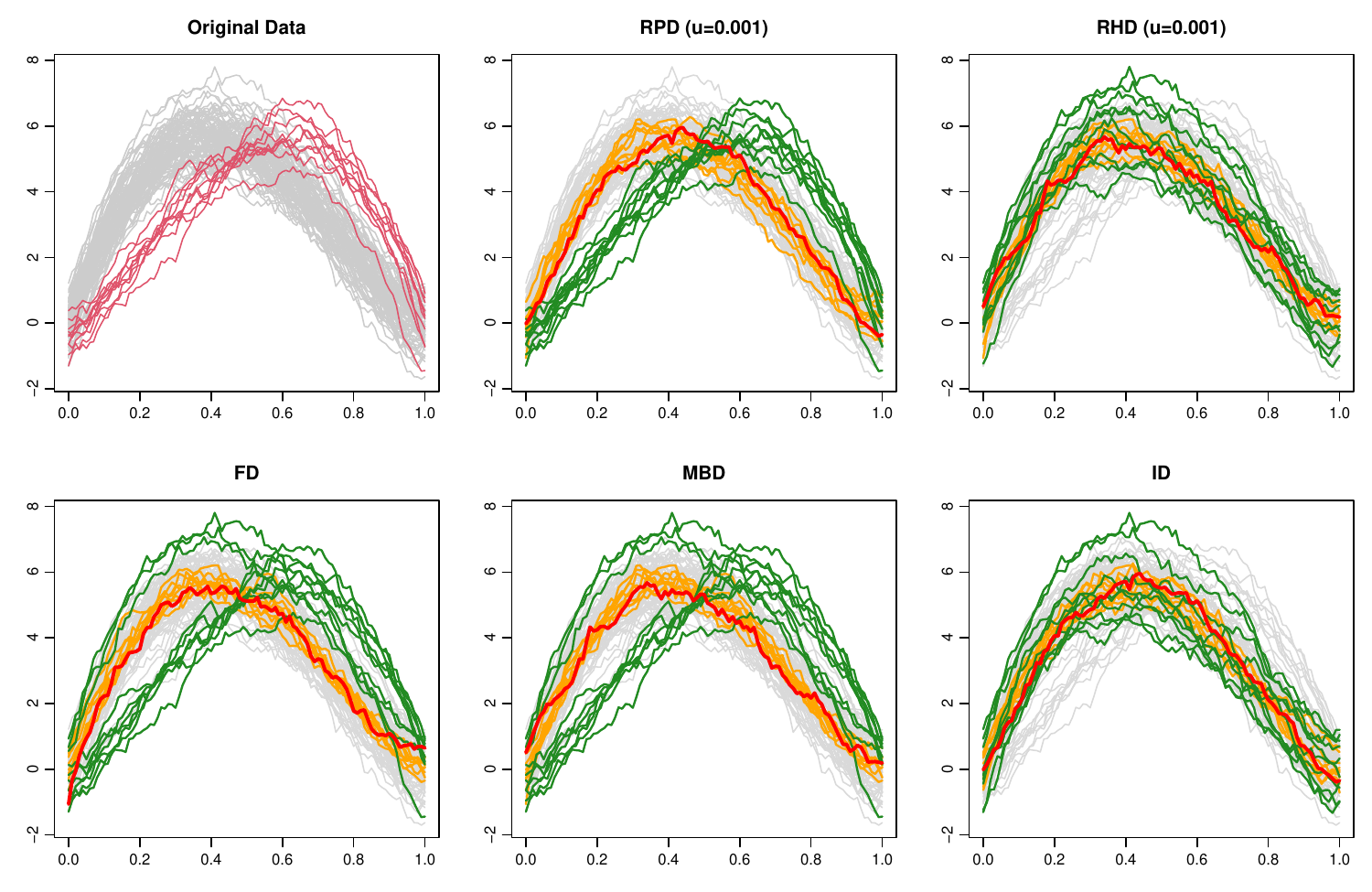}
    \caption{Outlier detection (Section~\ref{sec: OutlierDetection}): A single random sample of functional observations (gray) generated from Model~\ref{ModelD2} with sample size $n = 100$, containing $m = 10$ outlying curves (red) (top left). 
    The same sample is shown with the sample median curve highlighted in red, the $10\%$ deepest observations in orange, and the $10\%$ least deep observations in green, based on \RPD{} and \RHD{} with $u = 0.001$, \FD{}, \MBD{} and \ID{}.
    }
    \label{fig: OutlierDetDepths_D2}
\end{figure}

\begin{figure}[htpb]
    \centering
    \includegraphics[width=\textwidth]{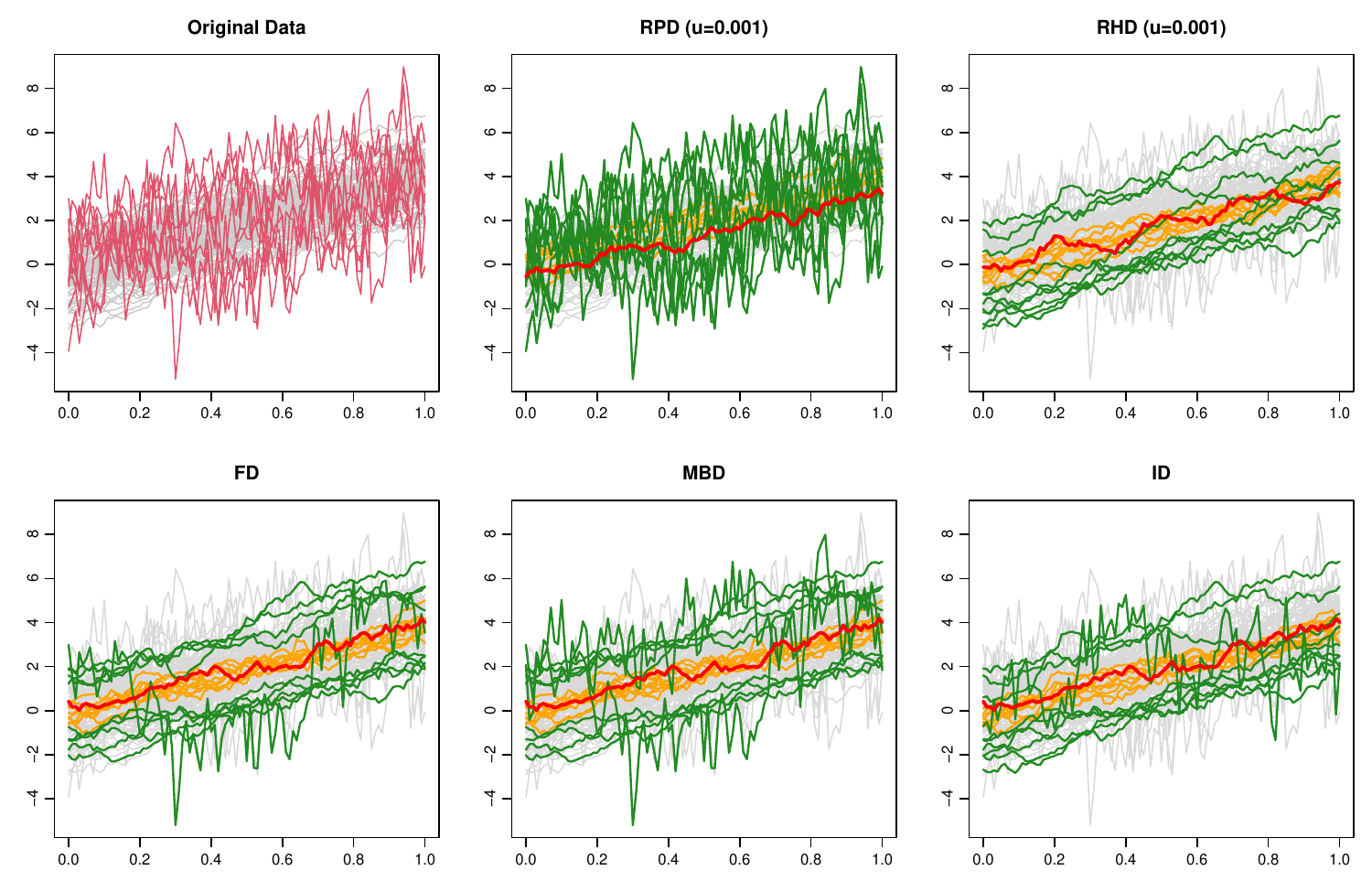}
    \caption{Outlier detection (Section~\ref{sec: OutlierDetection}): A single random sample of functional observations (gray) generated from Model~\ref{ModelD3} with sample size $n = 100$, containing $m = 10$ outlying curves (red) (top left). 
    The same sample is shown with the sample median curve highlighted in red, the $10\%$ deepest observations in orange, and the $10\%$ least deep observations in green, based on \RPD{} and \RHD{} with $u = 0.001$, \FD{}, \MBD{} and \ID{}.
    }
    \label{fig: OutlierDetDepths_D3}
\end{figure}

\begin{figure}[htpb]
    \centering
    \includegraphics[width=\textwidth]{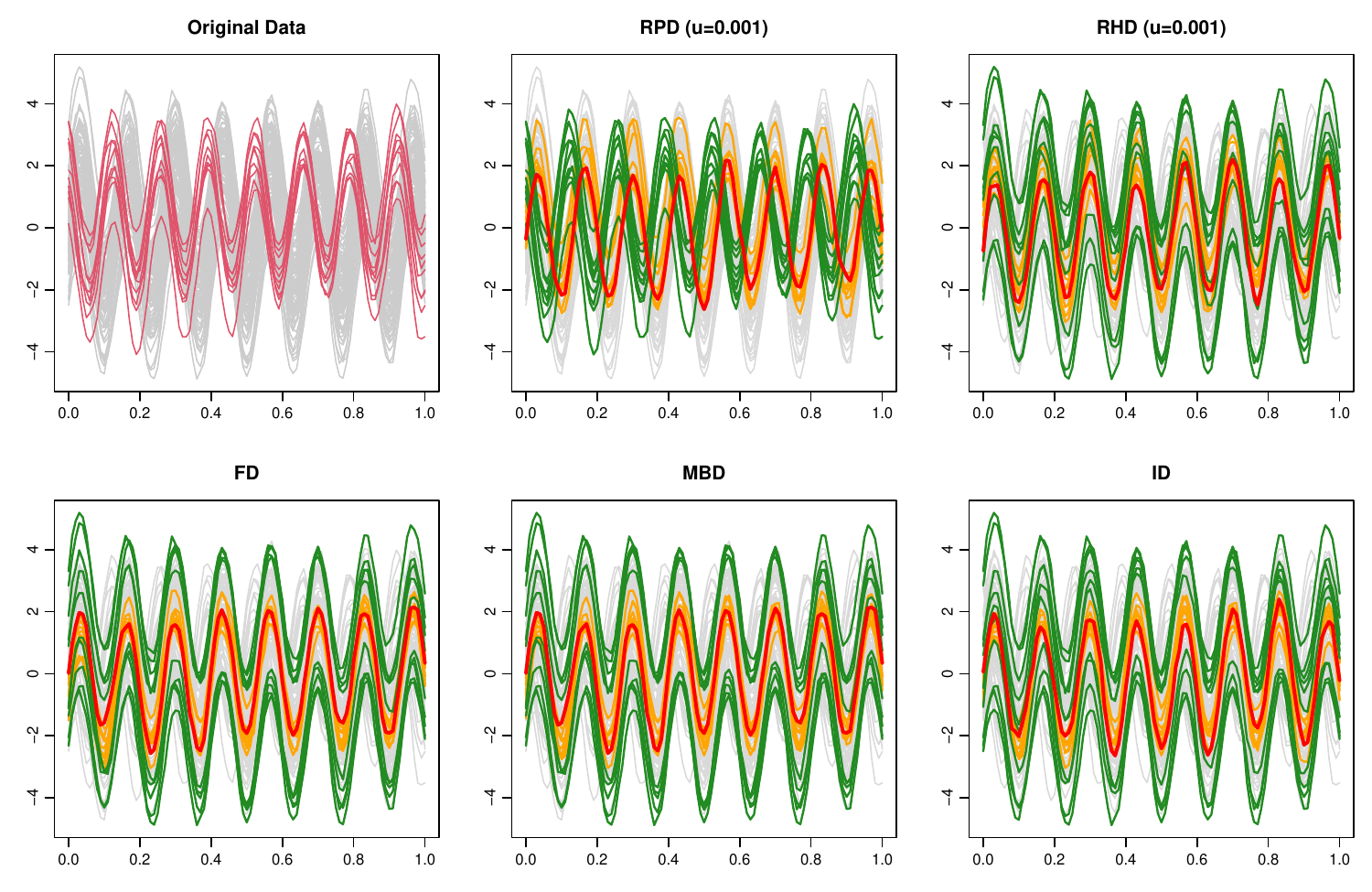}
    \caption{Outlier detection (Section~\ref{sec: OutlierDetection}): A single random sample of functional observations (gray) generated from Model~\ref{ModelD4} with sample size $n = 100$, containing $m = 10$ outlying curves (red) (top left). 
    The same sample is shown with the sample median curve highlighted in red, the $10\%$ deepest observations in orange, and the $10\%$ least deep observations in green, based on \RPD{} and \RHD{} with $u = 0.001$, \FD{}, \MBD{} and \ID{}.
    }
    \label{fig: OutlierDetDepths_D4}
\end{figure}

\begin{figure}[htpb]
    \centering
    \includegraphics[width=\textwidth]{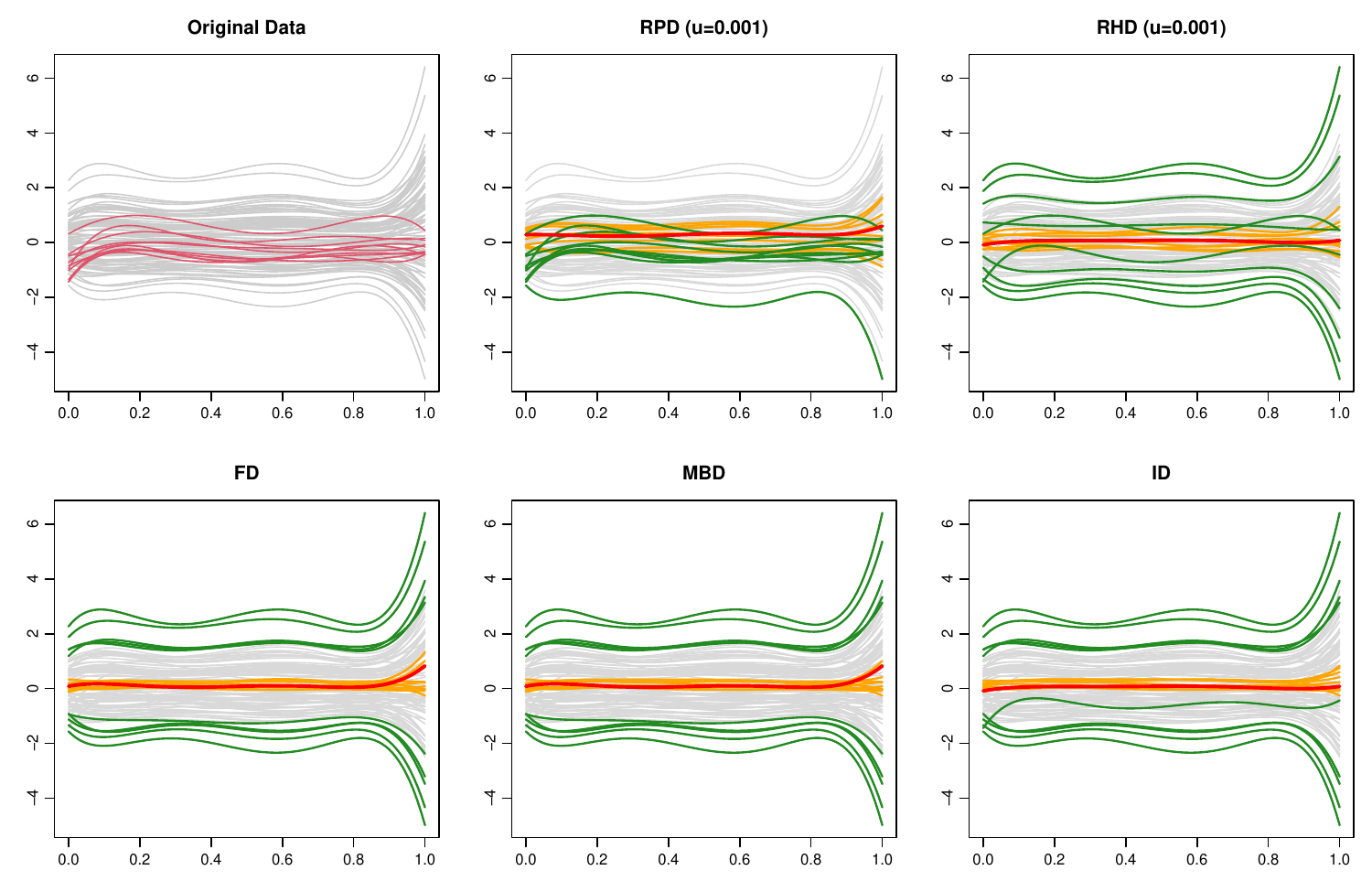}
    \caption{Outlier detection (Section~\ref{sec: OutlierDetection}): A single random sample of functional observations (gray) generated from Model~\ref{ModelD6} with sample size $n = 100$, containing $m = 10$ outlying curves (red) (top left). 
    The same sample is shown with the sample median curve highlighted in red, the $10\%$ deepest observations in orange, and the $10\%$ least deep observations in green, based on \RPD{} and \RHD{} with $u = 0.001$, \FD{}, \MBD{} and \ID{}.
    }
    \label{fig: OutlierDetDepths_D6}
\end{figure}

\clearpage

\input{Tables_applications/OutlierDetection_n100}
\input{Tables_applications/OutlierDetection_n1000}

\input{Tables_applications/Classification_n100}
\input{Tables_applications/Classification_n1000}

\clearpage

\begin{figure}
    \centering
    \includegraphics[width=0.9\linewidth]{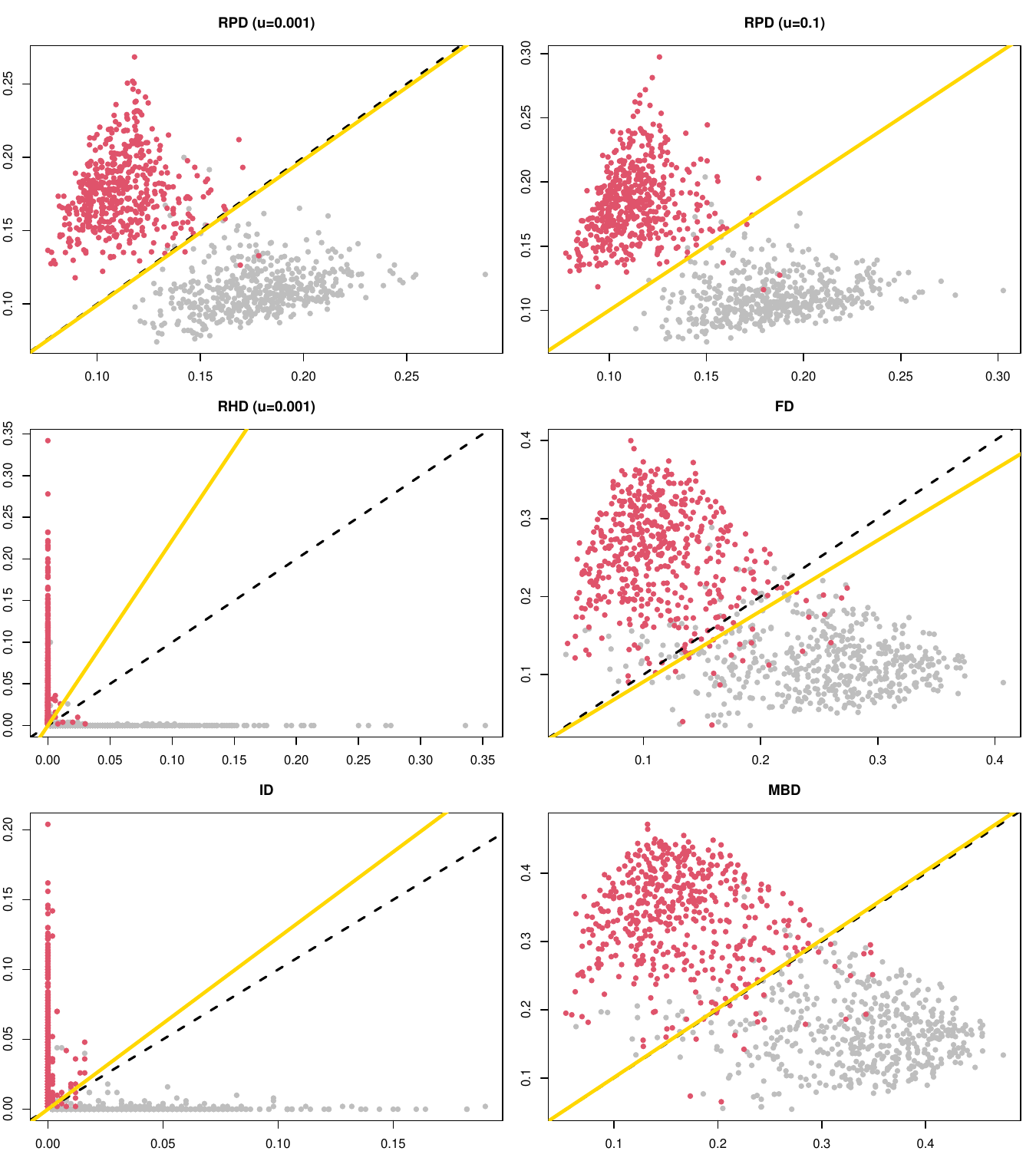}
    \caption{Supervised classification (Section~\ref{sec: Classification}): DD-plots of the training data in a single run of Model~\ref{ModelC1} (gray points for $X_i$, red points for $Y_i$) with (a) \RPD{} (top left, $u = 0.001$), (b) \RPD{} (top right, $u = 0.1$), (c) \RHD{} (middle left, $u = 0.001$), (d) \FD{} (middle right), (e) \ID{} (bottom left), and (f) \MBD{} (bottom right). In the plots, the dashed line represents the max-depth classifier and the yellow line the linear DD-classifier best separating the two groups.}
\end{figure}

\begin{figure}
    \centering
    \includegraphics[width=0.9\linewidth]{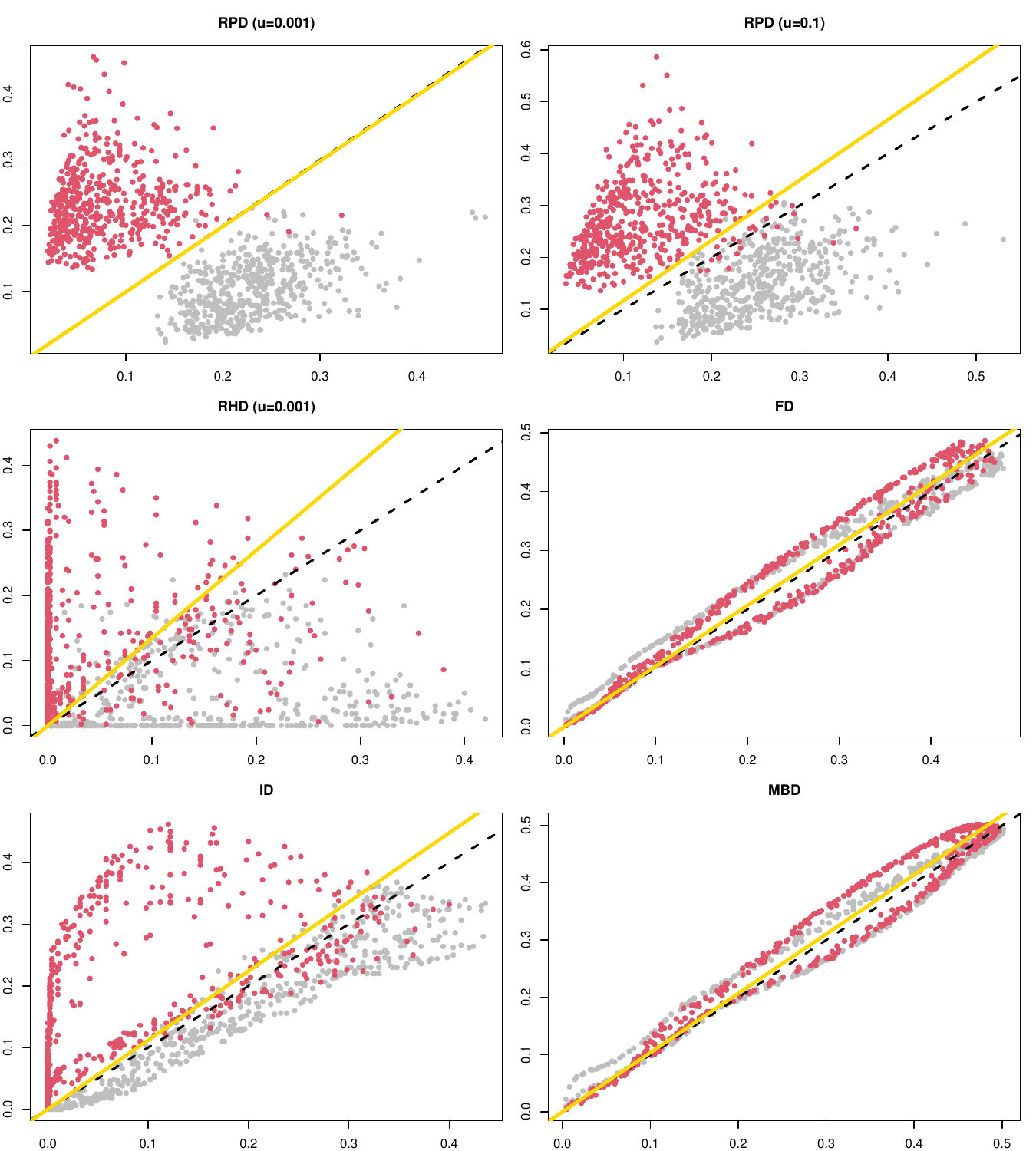}
    \caption{Supervised classification (Section~\ref{sec: Classification}): DD-plots of the training data in a single run of Model~\ref{ModelC2} (gray points for $X_i$, red points for $Y_i$) with (a) \RPD{} (top left, $u = 0.001$), (b) \RPD{} (top right, $u = 0.1$), (c) \RHD{} (middle left, $u = 0.001$), (d) \FD{} (middle right), (e) \ID{} (bottom left), and (f) \MBD{} (bottom right). In the plots, the dashed line represents the max-depth classifier and the yellow line the linear DD-classifier best separating the two groups.}
\end{figure}

\clearpage

\begin{figure}
    \centering
\includegraphics[width=0.99\textwidth]{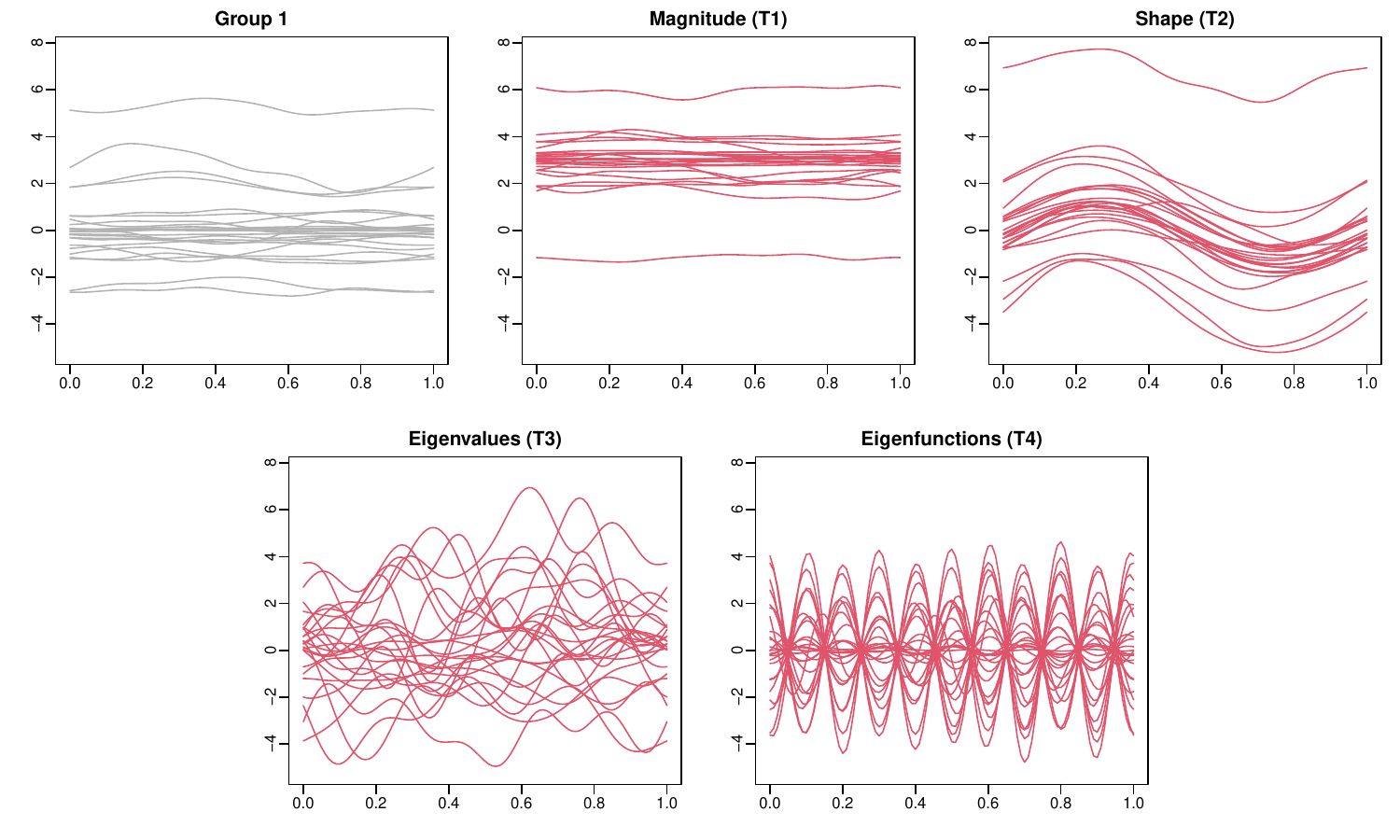}
    \caption{
        Hypothesis testing (Section~\ref{sec: HypothesisTesting}): Simulated datasets generated under Models~\ref{ModelT1}–\ref{ModelT4}, as described in Section~\ref{sec: HypothesisTesting} of the main manuscript. Each panel contains $50$ functional observations. The top left panel represents the reference group, while the remaining panels display the second group of functions differing in magnitude, shape, eigenvalues, and eigenfunctions, respectively.
    }
    \label{fig_2test_real}
\end{figure}

\begin{figure}
    \centering
\includegraphics[width=0.8\textwidth]{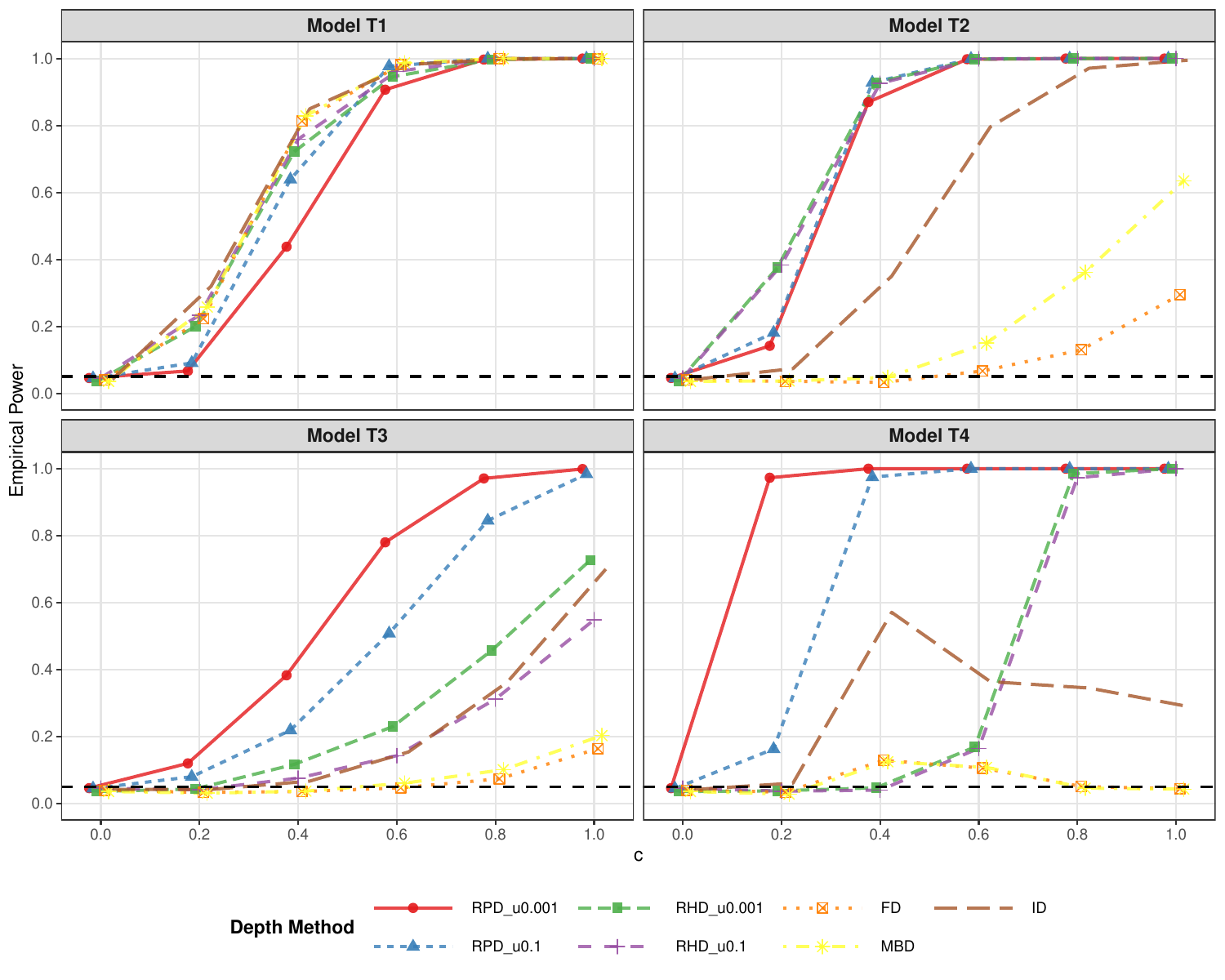}
    \caption{
        Hypothesis testing (Section~\ref{sec: HypothesisTesting}): Empirical power of the depth-based Kruskal--Wallis tests for sample size $n=50$ under Models~\ref{ModelT1}--\ref{ModelT4}. 
        The horizontal axis represents the intensity parameter $c$ controlling the magnitude of the distributional difference between two samples, while the vertical axis shows the proportion of rejected null hypotheses over $1\,000$ Monte Carlo replications. 
        Higher curves indicate greater ability to detect departures from the null hypothesis. The horizontal dashed line indicates the nominal type I error rate of $0.05$. 
    }
    \label{fig: KW_test_n50}
\end{figure}

\begin{figure}
    \centering
\includegraphics[width=0.8\textwidth]{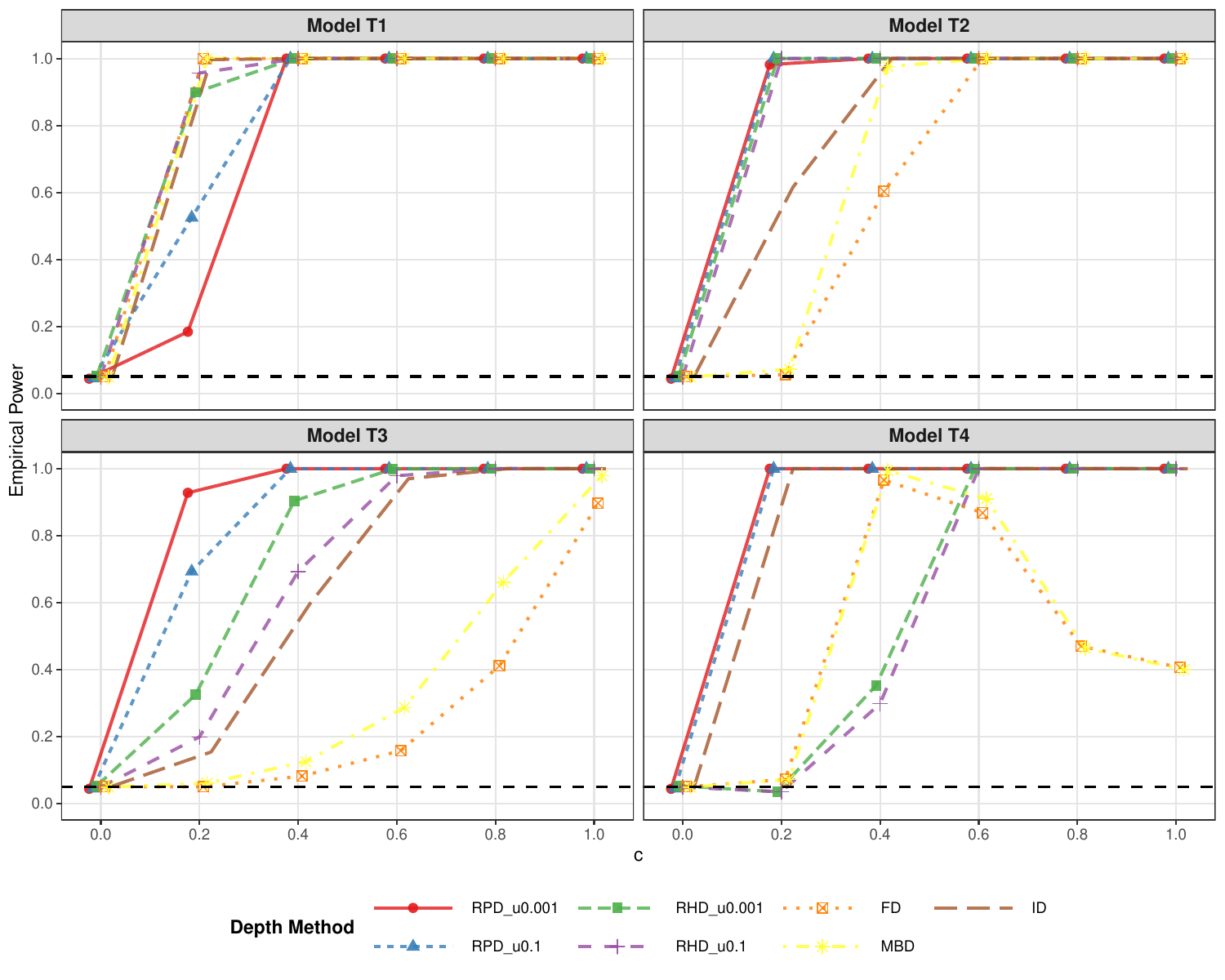}
    \caption{
        Hypothesis testing (Section~\ref{sec: HypothesisTesting}): Empirical power of the depth-based Kruskal--Wallis tests for sample size $n=500$ under Models~\ref{ModelT1}--\ref{ModelT4}. 
        The horizontal axis represents the intensity parameter $c$ controlling the magnitude of the distributional difference between two samples, while the vertical axis shows the proportion of rejected null hypotheses over $1\,000$ Monte Carlo replications. 
        Higher curves indicate greater ability to detect departures from the null hypothesis. The horizontal dashed line indicates the nominal type I error rate of $0.05$. 
    }
    \label{fig: KW_test_n500}
\end{figure}

\begin{figure}
    \centering
    \includegraphics[width=0.8\textwidth]{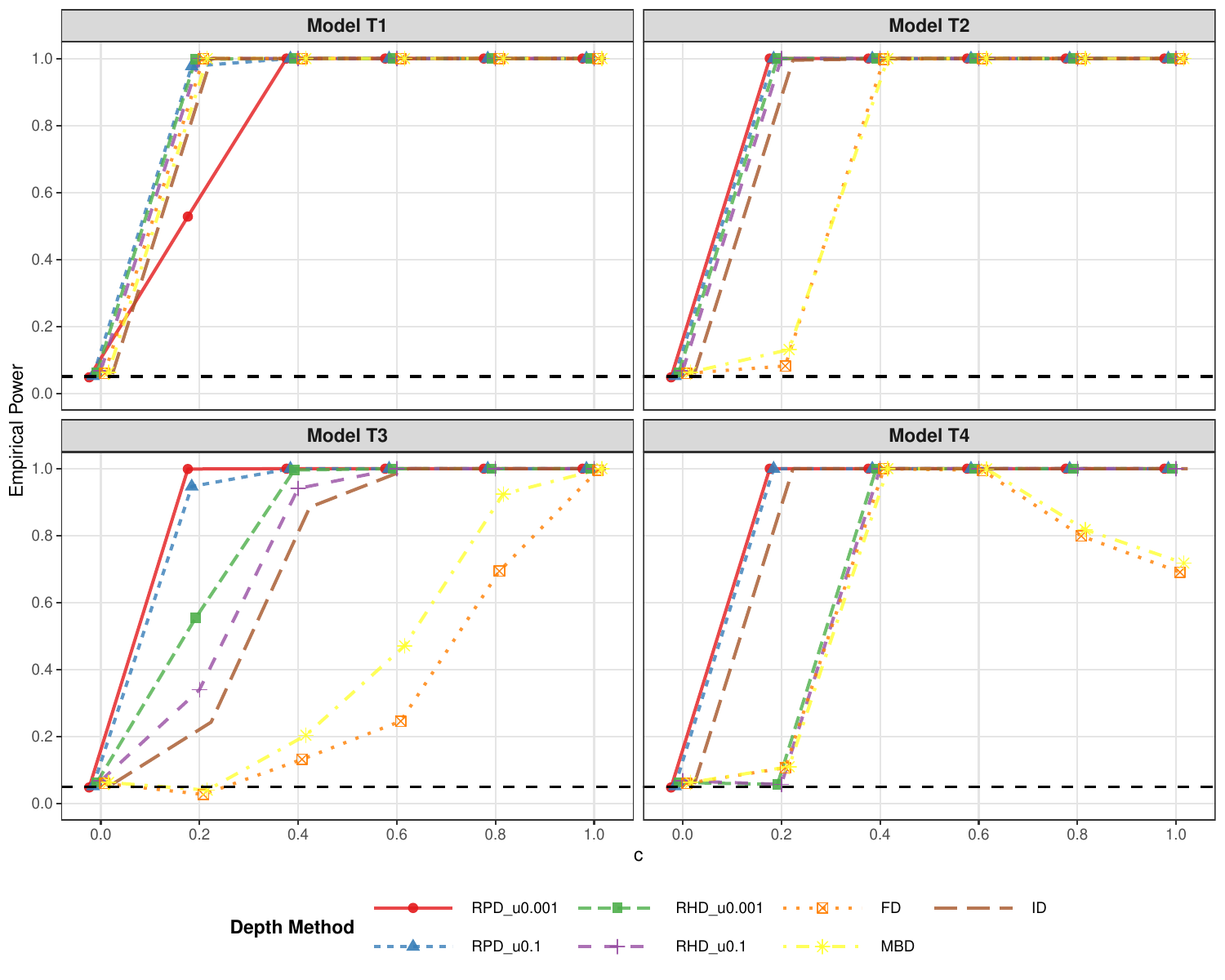}
    \caption{
        Hypothesis testing (Section~\ref{sec: HypothesisTesting}): Empirical power of the depth-based Kruskal--Wallis tests for sample size $n=1\,000$ under Models~\ref{ModelT1}--\ref{ModelT4}. 
        The horizontal axis represents the intensity parameter $c$ controlling the magnitude of the distributional difference between two samples, while the vertical axis shows the proportion of rejected null hypotheses over $1\,000$ Monte Carlo replications. 
        Higher curves indicate greater ability to detect departures from the null hypothesis. The horizontal dashed line indicates the nominal type I error rate of $0.05$. 
    }
    \label{fig: KW_test_n1000}
\end{figure}

\clearpage

\input{Tables_applications/KW_test_full}

\input{Tables_applications/Real_World_Application}

\clearpage

\input{Tables_applications/Location_n100}
\input{Tables_applications/Location_n1000}

\begin{figure}[htbp]
    \centering
    \includegraphics[width=0.98\textwidth]{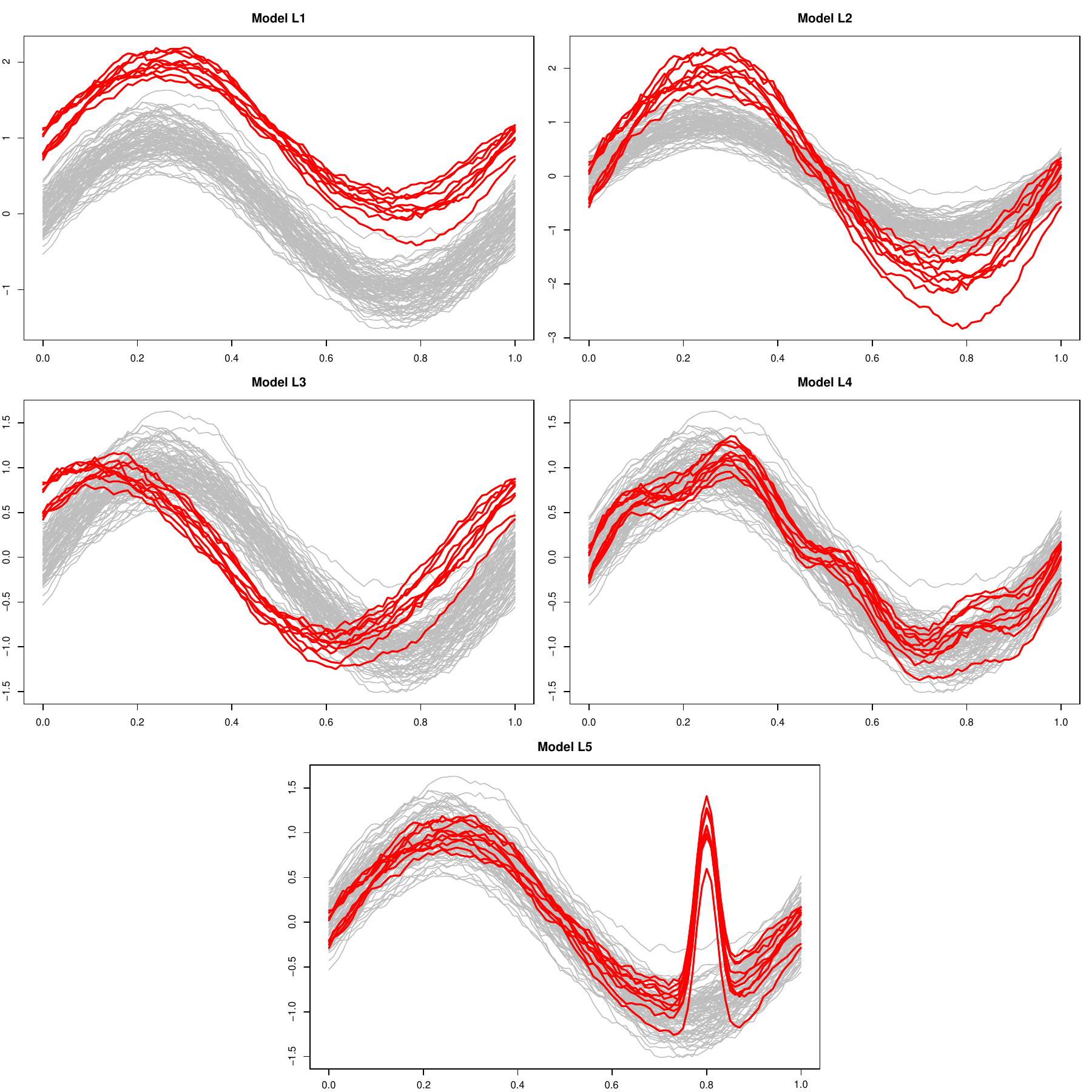}
    \caption{Robust location estimation (Section~\ref{Supplement: LocationEstimation}): Simulated datasets generated under the contamination Models~\ref{ModelL1}--\ref{ModelL5} described in Section~\ref{Supplement: LocationEstimation}. Each panel contains $100$ functions, of which $10$ are contaminated. The non-contaminated functions are shown in gray, while the contaminating functions are highlighted in red.
    \label{fig: LocationEstimationDatasets}}
\end{figure}

\clearpage
\putbib
\phantomsection\label{supp:LastPage}
\end{bibunit}
\end{document}

%% file: Tables_applications/OutlierDetectionRanks_n500.tex
\begin{table}
\centering
\resizebox{0.9\textwidth}{!}{%
\begin{tabular}{cccccccc}
\toprule
Model & $\RPD_{0.001}$ & $\RPD_{0.1}$ & $\RHD_{0.001}$ & $\RHD_{0.1}$ & \FD{} & \MBD{} & \ID{} \\
\midrule

\multicolumn{8}{c}{\textbf{Contamination level $\varepsilon = 0.01$ ($m = 5$, $n = 500$)}} \\
\midrule
\ref{ModelD1} & \textbf{0.006} \,(0.000) & \textbf{0.006} \,(0.000) & 0.019 \,(0.006) & 0.015 \,(0.005) & \textbf{0.006} \,(0.000) & \textbf{0.006} \,(0.000) & 0.009 \,(0.006) \\ 
\ref{ModelD2} & \textbf{0.006} \,(0.001) & \textbf{0.006} \,(0.000) & 0.060 \,(0.014) & 0.051 \,(0.013) & 0.013 \,(0.005) & 0.011 \,(0.004) & 0.050 \,(0.013) \\ 
\ref{ModelD3} & \textbf{0.006} \,(0.000) & \textbf{0.006} \,(0.000) & 0.047 \,(0.032) & 0.075 \,(0.054) & 0.163 \,(0.034) & 0.134 \,(0.030) & 0.023 \,(0.004) \\ 
\ref{ModelD4} & \textbf{0.006} \,(0.000) & \textbf{0.006} \,(0.000) & 0.036 \,(0.006) & 0.030 \,(0.006) & 0.079 \,(0.010) & 0.060 \,(0.009) & 0.028 \,(0.006) \\ 
\ref{ModelD5} & \textbf{0.006} \,(0.000) & 0.007 \,(0.002) & 0.986 \,(0.011) & 0.980 \,(0.015) & 0.510 \,(0.024) & 0.517 \,(0.029) & 0.571 \,(0.061) \\ 
\ref{ModelD6} & \textbf{0.006} \,(0.003) & 0.014 \,(0.014) & 0.221 \,(0.179) & 0.720 \,(0.083) & 0.716 \,(0.068) & 0.683 \,(0.065) & 0.349 \,(0.082) \\
\midrule

\multicolumn{8}{c}{\textbf{Contamination level $\varepsilon = 0.05$ ($m = 25$, $n = 500$)}} \\
\midrule
\ref{ModelD1} & \textbf{0.026} \,(0.000) & \textbf{0.026} \,(0.000) & 0.044 \,(0.005) & 0.039 \,(0.004) & \textbf{0.026} \,(0.000) & \textbf{0.026} \,(0.000) & 0.027 \,(0.003) \\ 
\ref{ModelD2} & \textbf{0.026} \,(0.000) & \textbf{0.026} \,(0.000) & 0.135 \,(0.015) & 0.124 \,(0.013) & 0.035 \,(0.004) & 0.033 \,(0.003) & 0.127 \,(0.016) \\ 
\ref{ModelD3} & \textbf{0.026} \,(0.000) & \textbf{0.026} \,(0.000) & 0.062 \,(0.016) & 0.089 \,(0.025) & 0.173 \,(0.017) & 0.144 \,(0.015) & 0.034 \,(0.003) \\ 
\ref{ModelD4} & \textbf{0.026} \,(0.000) & \textbf{0.026} \,(0.000) & 0.096 \,(0.010) & 0.090 \,(0.009) & 0.101 \,(0.009) & 0.084 \,(0.008) & 0.084 \,(0.009) \\ 
\ref{ModelD5} & \textbf{0.026} \,(0.000) & 0.028 \,(0.002) & 0.374 \,(0.107) & 0.820 \,(0.039) & 0.510 \,(0.021) & 0.517 \,(0.022) & 0.568 \,(0.031) \\ 
\ref{ModelD6} & \textbf{0.026} \,(0.002) & 0.035 \,(0.008) & 0.183 \,(0.104) & 0.689 \,(0.040) & 0.709 \,(0.032) & 0.681 \,(0.031) & 0.378 \,(0.036) \\
\midrule

\multicolumn{8}{c}{\textbf{Contamination level $\varepsilon = 0.10$ ($m = 50$, $n = 500$)}} \\
\midrule
\ref{ModelD1} & \textbf{0.051} \,(0.000) & \textbf{0.051} \,(0.000) & 0.077 \,(0.007) & 0.067 \,(0.007) & \textbf{0.051} \,(0.000) & \textbf{0.051} \,(0.000) & 0.052 \,(0.002) \\ 
\ref{ModelD2} & \textbf{0.051} \,(0.000) & \textbf{0.051} \,(0.000) & 0.208 \,(0.016) & 0.194 \,(0.014) & 0.064 \,(0.004) & 0.063 \,(0.003) & 0.204 \,(0.017) \\ 
\ref{ModelD3} & \textbf{0.051} \,(0.000) & \textbf{0.051} \,(0.000) & 0.086 \,(0.012) & 0.111 \,(0.018) & 0.188 \,(0.014) & 0.160 \,(0.012) & 0.055 \,(0.002) \\ 
\ref{ModelD4} & \textbf{0.051} \,(0.000) & \textbf{0.051} \,(0.000) & 0.167 \,(0.012) & 0.161 \,(0.011) & 0.129 \,(0.009) & 0.115 \,(0.008) & 0.148 \,(0.011) \\ 
\ref{ModelD5} & \textbf{0.051} \,(0.000) & 0.055 \,(0.003) & 0.180 \,(0.023) & 0.532 \,(0.055) & 0.510 \,(0.021) & 0.516 \,(0.021) & 0.565 \,(0.024) \\ 
\ref{ModelD6} & \textbf{0.051} \,(0.001) & 0.063 \,(0.008) & 0.226 \,(0.068) & 0.657 \,(0.042) & 0.701 \,(0.024) & 0.678 \,(0.023) & 0.415 \,(0.026) \\
\bottomrule
\end{tabular}}
\caption{Outlier detection: Means and standard deviations (in parentheses) of the average rank of the outlying curves for a sample size of $n=500$ across three contamination levels. Lower values indicate better performance. The smallest average ranks achieved are highlighted in bold. In the ideal case, the $m$ curves with the smallest depth values coincide exactly with the true outliers, yielding a minimum possible average outlier rank of $0.006$ for $m=5$, $0.026$ for $m=25$, and $0.051$ for $m=50$.}
\label{tab: OutlierDetection_n500}
\end{table}

%% file: Tables_applications/OutlierDetection_n500.tex
\begin{table}
    \centering
    \resizebox{0.6\textheight}{!}{%
    \begin{subtable}{\textwidth}
        \centering
        \begin{tabular}{cc rrrr}
            \toprule
            \textbf{Model} & $\boldsymbol{m}$ & \RPDOD{} & Functional boxplot & Outliergram & MS-plot \\
            \midrule
            \multirow{4}{*}{\ref{ModelD1}} 
                & 0  & \textbf{0.000 (0.000)} & \textbf{0.000 (0.000)} & 1.000 (0.000) & 1.000 (0.000) \\
                & 5  & \textbf{0.000 (0.005)} & \textbf{0.000 (0.005)} & 1.000 (0.000) & 0.786 (0.055) \\
                & 25 & \textbf{0.000 (0.002)} & \textbf{0.000 (0.000)} & 1.000 (0.000) & 0.401 (0.074) \\
                & 50 & \textbf{0.000 (0.000)} & \textbf{0.000 (0.000)} & 1.000 (0.000) & 0.220 (0.055) \\
            \midrule
            \multirow{4}{*}{\ref{ModelD2}} 
                & 0  & 0.001 (0.032) & \textbf{0.000 (0.000)} & 1.000 (0.000) & 0.999 (0.032) \\
                & 5  & 0.007 (0.036) & \textbf{0.000 (0.000)} & 0.652 (0.089) & 0.627 (0.107) \\
                & 25 & 0.010 (0.021) & \textbf{0.000 (0.000)} & 0.209 (0.072) & 0.237 (0.081) \\
                & 50 & 0.015 (0.020) & \textbf{0.000 (0.000)} & 0.078 (0.036) & 0.119 (0.049) \\
            \midrule
            \multirow{4}{*}{\ref{ModelD3}} 
                & 0  & \textbf{0.000 (0.000)} & \textbf{0.000 (0.000)} & 1.000 (0.000) & 1.000 (0.000) \\
                & 5  & \textbf{0.000 (0.005)} & \textbf{0.000 (0.000)} & 0.831 (0.046) & 0.790 (0.057) \\
                & 25 & \textbf{0.000 (0.000)} & \textbf{0.000 (0.000)} & 0.420 (0.070) & 0.405 (0.079) \\
                & 50 & \textbf{0.000 (0.000)} & \textbf{0.000 (0.000)} & 0.179 (0.053) & 0.223 (0.059) \\
            \midrule
            \multirow{4}{*}{\ref{ModelD4}} 
                & 0  & \textbf{0.002 (0.045)} & 0.003 (0.055) & 1.000 (0.000) & 1.000 (0.000) \\
                & 5  & \textbf{0.000 (0.007)} & 0.002 (0.039) & 0.803 (0.040) & 0.786 (0.060) \\
                & 25 & \textbf{0.000 (0.000)} & 0.001 (0.019) & 0.364 (0.065) & 0.398 (0.078) \\
                & 50 & \textbf{0.000 (0.000)} & \textbf{0.000 (0.000)} & 0.133 (0.044) & 0.219 (0.058) \\
            \midrule
            \multirow{4}{*}{\ref{ModelD5}} 
                & 0  & \textbf{0.000 (0.000)} & 0.001 (0.032) & 1.000 (0.000) & 1.000 (0.000) \\
                & 5  & 0.004 (0.027) & \textbf{0.001 (0.032)} & 0.956 (0.054) & 1.000 (0.000) \\
                & 25 & 0.005 (0.015) & \textbf{0.001 (0.032)} & 0.975 (0.069) & 1.000 (0.000) \\
                & 50 & 0.010 (0.017) & \textbf{0.002 (0.045)} & 0.997 (0.055) & 1.000 (0.000) \\
            \midrule
            \multirow{4}{*}{\ref{ModelD6}} 
                & 0  & \textbf{0.012 (0.109)} & 0.858 (0.349) & 1.000 (0.000) & 0.973 (0.162) \\
                & 5  & \textbf{0.000 (0.000)} & 0.846 (0.360) & 0.791 (0.047) & 0.429 (0.156) \\
                & 25 & \textbf{0.000 (0.004)} & 0.796 (0.402) & 0.351 (0.073) & 0.129 (0.074) \\
                & 50 & \textbf{0.003 (0.008)} & 0.782 (0.413) & 0.115 (0.045) & 0.070 (0.051) \\
            \bottomrule
        \end{tabular}
        \caption{False Discovery Rate (FDR) --- Lower is better}
    \end{subtable}%
    }
    \resizebox{0.6\textheight}{!}{%
    \begin{subtable}{\textwidth}
        \centering
        \begin{tabular}{cc rrrr}
            \toprule
            \textbf{Model} & $\boldsymbol{m}$ & \RPDOD{} & Functional boxplot & Outliergram & MS-plot \\
            \midrule
            \multirow{3}{*}{\ref{ModelD1}} 
                & 5  & 0.969 (0.173) & \textbf{1.000 (0.003)} & -0.022 (0.003) & 0.450 (0.058) \\
                & 25 & \textbf{1.000 (0.001)} & \textbf{1.000 (0.001)} & -0.049 (0.006) & 0.758 (0.051) \\
                & 50 & \textbf{1.000 (0.001)} & \textbf{1.000 (0.001)} & -0.070 (0.009) & 0.869 (0.036) \\
            \midrule
            \multirow{3}{*}{\ref{ModelD2}} 
                & 5  & \textbf{0.859 (0.327)} & 0.018 (0.088) & 0.578 (0.075) & 0.599 (0.087) \\
                & 25 & \textbf{0.985 (0.019)} & 0.002 (0.020) & 0.879 (0.043) & 0.864 (0.050) \\
                & 50 & \textbf{0.979 (0.016)} & 0.000 (0.004) & 0.935 (0.024) & 0.931 (0.030) \\
            \midrule
            \multirow{3}{*}{\ref{ModelD3}} 
                & 5  & \textbf{0.988 (0.109)} & 0.543 (0.238) & 0.363 (0.081) & 0.445 (0.061) \\
                & 25 & \textbf{1.000 (0.000)} & 0.552 (0.090) & 0.668 (0.064) & 0.756 (0.055) \\
                & 50 & \textbf{1.000 (0.000)} & 0.504 (0.068) & 0.778 (0.046) & 0.867 (0.038) \\
            \midrule
            \multirow{3}{*}{\ref{ModelD4}} 
                & 5  & \textbf{1.000 (0.004)} & 0.242 (0.259) & 0.433 (0.045) & 0.450 (0.063) \\
                & 25 & \textbf{1.000 (0.000)} & 0.270 (0.131) & 0.784 (0.044) & 0.760 (0.054) \\
                & 50 & \textbf{1.000 (0.000)} & 0.231 (0.092) & 0.921 (0.027) & 0.869 (0.037) \\
            \midrule
            \multirow{3}{*}{\ref{ModelD5}} 
                & 5  & \textbf{0.852 (0.351)} & 0.000 (0.000) & 0.080 (0.127) & -0.020 (0.003) \\
                & 25 & \textbf{0.997 (0.009)} & 0.000 (0.000) & -0.015 (0.070) & -0.042 (0.007) \\
                & 50 & \textbf{0.994 (0.010)} & 0.000 (0.001) & -0.041 (0.010) & -0.053 (0.012) \\
            \midrule
            \multirow{3}{*}{\ref{ModelD6}} 
                & 5  & 0.060 (0.234) & -0.005 (0.013) & 0.445 (0.051) & \textbf{0.744 (0.104)} \\
                & 25 & 0.899 (0.268) & -0.011 (0.009) & 0.792 (0.049) & \textbf{0.922 (0.044)} \\
                & 50 & \textbf{0.975 (0.071)} & -0.016 (0.011) & 0.922 (0.029) & 0.913 (0.041) \\
            \bottomrule
        \end{tabular}
        \caption{Matthews Correlation Coefficient (MCC) --- Higher is better}
        \vspace{1em}
    \end{subtable}
    }
    \caption{Outlier detection: Means and standard deviations (in parentheses) of~$\mathrm{MCC}$ and $\mathrm{FDR}$ of outlier detection methods described in Section~\ref{subsec: threshold}. Sample size $n=500$ across four contamination levels $m\in\set{0,5,25,50}$ in models~\ref{ModelD1}--\ref{ModelD6} is considered. Best results are highlighted in bold.}
    \label{tab: outlierDetectionResults_n500}
\end{table}

%% file: Tables_applications/Classification_n500.tex
\begin{table}
\centering
\resizebox{0.9\textwidth}{!}{%
\begin{tabular}{c c c c c c c c c}
\toprule
Model & Classifier & $\RPD_{0.001}$ & $\RPD_{0.1}$ & $\RHD_{0.001}$ & $\RHD_{0.1}$ & \FD{} & \MBD{} & \ID{} \\
\midrule
\multirow{2}{*}{\ref{ModelC1}} 
& max & \textbf{0.013} \,(0.003) & 0.014 \,(0.003) & 0.069 \,(0.007) & 0.061 \,(0.006) & 0.069 \,(0.006) & 0.058 \,(0.005) & 0.067 \,(0.007) \\ 
& DD & \textbf{0.014} \,(0.003) & 0.015 \,(0.003) & 0.070 \,(0.007) & 0.061 \,(0.006) & 0.070 \,(0.006) & 0.059 \,(0.006) & 0.068 \,(0.007) \\ 
\hline 
\multirow{2}{*}{\ref{ModelC2}} 
& max & \textbf{0.005} \,(0.002) & 0.021 \,(0.005) & 0.196 \,(0.035) & 0.312 \,(0.057) & 0.456 \,(0.044) & 0.401 \,(0.051) & 0.122 \,(0.043) \\ 
& DD & \textbf{0.005} \,(0.002) & 0.020 \,(0.005) & 0.203 \,(0.039) & 0.327 \,(0.080) & 0.433 \,(0.037) & 0.381 \,(0.045) & 0.102 \,(0.034) \\ 
\hline 
\multirow{2}{*}{\ref{ModelC3}} 
& max & \textbf{0.077} \,(0.012) & 0.087 \,(0.012) & 0.242 \,(0.038) & 0.423 \,(0.057) & 0.496 \,(0.027) & 0.441 \,(0.040) & 0.211 \,(0.040) \\ 
& DD & \textbf{0.028} \,(0.005) & 0.060 \,(0.008) & 0.245 \,(0.044) & 0.427 \,(0.059) & 0.483 \,(0.024) & 0.428 \,(0.036) & 0.153 \,(0.017) \\ 
\bottomrule
\end{tabular}}
\caption{Supervised classification: Mean empirical misclassification rates, with standard deviations in parentheses, for the max-depth and linear DD-classifiers under Models~\ref{ModelC1}--\ref{ModelC3}. Evaluated on a test set of size $n_{\textrm{test}} = 500$. The lowest misclassification rates are highlighted in bold.}
\label{tab: Classification_n500}
\end{table}

%% file: Tables_applications/KW_test.tex
\begin{table}
\centering
\resizebox{0.6\textwidth}{!}{%
\begin{tabular}{ccccccc}
\toprule
$\RPD_{0.001}$ & $\RPD_{0.1}$ & $\RHD_{0.001}$ & $\RHD_{0.1}$ & \FD{} & \MBD{} & \ID{} \\
\midrule
 0.042 & 0.037 & 0.046 & 0.046 & 0.040 & 0.040 & 0.039 \\
\bottomrule
\end{tabular}}
\caption{Hypothesis testing: Empirical sizes of the KW tests using depth-based rankings with $n=100$ (all Models~\ref{ModelT1}--\ref{ModelT4} correspond to the same null hypothesis). All methods successfully achieve the nominal significance level $0.05$.
\label{tab: KW_test}}
\end{table}

%% file: Tables_applications/Location_n500.tex
\begin{table}
\centering
\resizebox{\textwidth}{!}{%
\begin{tabular}{cccccccccc}
\toprule
Metric & $\varepsilon$ & Model & $\RPD_{0.1}$ & $\RPD_{0.5}$ & $\RHD_{0.1}$ & $\RHD_{0.5}$ & \FD{} & \MBD{} & \ID{} \\
\midrule

\multirow{15}{*}{\rotatebox{90}{\textbf{$\mathrm{MISE}\cdot 10^3$}}} 
  & \multirow{5}{*}{0.01} 
    & \ref{ModelL1} & 7.645 \,(6.159) & 5.136 \,(2.686) & 5.189 \,(2.355) & 5.581 \,(2.308) & 4.485 \,(1.641) & \textbf{4.413} \,(1.540) & 5.081 \,(2.204) \\
  & & \ref{ModelL2} & 7.308 \,(5.604) & 5.012 \,(2.670) & 5.154 \,(2.461) & 5.617 \,(2.451) & 4.418 \,(1.691) & \textbf{4.314} \,(1.564) & 4.973 \,(2.196) \\
  & & \ref{ModelL3} & 7.383 \,(5.866) & 5.204 \,(3.116) & 5.331 \,(2.492) & 5.428 \,(2.234) & 4.486 \,(1.738) & \textbf{4.400} \,(1.654) & 4.862 \,(1.990) \\
  & & \ref{ModelL4} & 6.976 \,(5.603) & 5.020 \,(2.784) & 5.167 \,(2.513) & 5.508 \,(2.452) & 4.438 \,(1.755) & \textbf{4.346} \,(1.671) & 4.879 \,(2.078) \\
  & & \ref{ModelL5} & 7.260 \,(5.344) & 4.999 \,(2.674) & 5.139 \,(2.480) & 5.546 \,(2.281) & 4.430 \,(1.696) & \textbf{4.322} \,(1.599) & 4.930 \,(2.096) \\
  \cmidrule{2-10}
  & \multirow{5}{*}{0.05} 
    & \ref{ModelL1} & 7.796 \,(6.118) & 5.539 \,(3.241) & 5.606 \,(2.709) & 6.094 \,(2.594) & 4.798 \,(2.058) & \textbf{4.686} \,(1.949) & 5.458 \,(2.761) \\
  & & \ref{ModelL2} & 7.267 \,(5.702) & 5.314 \,(3.125) & 5.256 \,(2.336) & 5.715 \,(2.489) & 4.617 \,(1.842) & \textbf{4.488} \,(1.742) & 5.274 \,(2.407) \\
  & & \ref{ModelL3} & 7.491 \,(5.746) & 5.133 \,(2.736) & 5.469 \,(2.727) & 5.677 \,(2.524) & 4.576 \,(1.805) & \textbf{4.507} \,(1.672) & 5.129 \,(2.156) \\
  & & \ref{ModelL4} & 7.370 \,(5.442) & 5.333 \,(3.252) & 5.339 \,(2.674) & 5.790 \,(2.860) & 4.545 \,(1.743) & \textbf{4.447} \,(1.680) & 5.023 \,(2.147) \\
  & & \ref{ModelL5} & 7.404 \,(6.130) & 5.246 \,(2.836) & 5.913 \,(3.071) & 5.857 \,(2.530) & 4.707 \,(4.856) & \textbf{4.416} \,(1.532) & 5.037 \,(2.049) \\
  \cmidrule{2-10}
  & \multirow{5}{*}{0.10} 
    & \ref{ModelL1} & 9.382 \,(8.015) & 6.461 \,(4.345) & 6.802 \,(3.843) & 7.039 \,(3.079) & 5.643 \,(2.679) & \textbf{5.615} \,(2.581) & 6.679 \,(3.665) \\
  & & \ref{ModelL2} & 7.245 \,(5.129) & 5.522 \,(3.036) & 5.661 \,(2.567) & 6.055 \,(2.659) & 5.184 \,(2.191) & \textbf{5.056} \,(2.070) & 5.654 \,(2.588) \\
  & & \ref{ModelL3} & 7.293 \,(5.170) & 5.375 \,(2.930) & 5.658 \,(2.787) & 5.816 \,(2.607) & 5.203 \,(2.079) & \textbf{5.161} \,(2.116) & 6.025 \,(2.773) \\
  & & \ref{ModelL4} & 7.238 \,(5.755) & 5.309 \,(3.004) & 5.457 \,(2.708) & 5.866 \,(2.707) & 4.640 \,(1.818) & \textbf{4.525} \,(1.643) & 5.150 \,(2.192) \\
  & & \ref{ModelL5} & 7.491 \,(6.107) & 5.487 \,(3.057) & 6.085 \,(3.369) & 5.957 \,(2.613) & 4.763 \,(4.707) & \textbf{4.521} \,(1.753) & 5.236 \,(2.400) \\
\midrule

\multirow{15}{*}{\rotatebox{90}{\textbf{$\mathrm{IVAR}\cdot 10^3$}}} 
  & \multirow{5}{*}{0.01} 
    & \ref{ModelL1} & 3.225 \,(1.269) & \textbf{3.196} \,(1.022) & 3.694 \,(1.368) & 4.868 \,(2.118) & 3.582 \,(1.182) & 3.508 \,(1.083) & 3.863 \,(1.482) \\
  & & \ref{ModelL2} & 3.284 \,(1.352) & \textbf{3.171} \,(1.020) & 3.618 \,(1.281) & 4.718 \,(2.161) & 3.535 \,(1.196) & 3.461 \,(1.110) & 3.818 \,(1.423) \\
  & & \ref{ModelL3} & 3.169 \,(1.284) & \textbf{3.105} \,(0.975) & 3.608 \,(1.296) & 4.525 \,(1.888) & 3.565 \,(1.217) & 3.473 \,(1.128) & 3.816 \,(1.464) \\
  & & \ref{ModelL4} & 3.148 \,(1.213) & \textbf{3.100} \,(0.996) & 3.593 \,(1.437) & 4.790 \,(2.243) & 3.516 \,(1.207) & 3.428 \,(1.129) & 3.832 \,(1.492) \\
  & & \ref{ModelL5} & 3.227 \,(1.305) & \textbf{3.138} \,(0.980) & 3.631 \,(1.328) & 4.796 \,(2.103) & 3.559 \,(1.197) & 3.440 \,(1.092) & 3.836 \,(1.399) \\
  \cmidrule{2-10}
  & \multirow{5}{*}{0.05} 
    & \ref{ModelL1} & \textbf{3.142} \,(1.220) & 3.149 \,(1.024) & 3.851 \,(1.409) & 5.227 \,(2.337) & 3.566 \,(1.218) & 3.487 \,(1.138) & 3.800 \,(1.501) \\
  & & \ref{ModelL2} & 3.326 \,(1.441) & \textbf{3.231} \,(1.083) & 3.815 \,(1.383) & 4.709 \,(2.048) & 3.708 \,(1.360) & 3.647 \,(1.290) & 4.057 \,(1.691) \\
  & & \ref{ModelL3} & 3.201 \,(1.224) & \textbf{3.182} \,(1.050) & 3.672 \,(1.260) & 4.376 \,(1.890) & 3.684 \,(1.367) & 3.596 \,(1.230) & 4.035 \,(1.653) \\
  & & \ref{ModelL4} & 3.209 \,(1.330) & \textbf{3.097} \,(0.973) & 3.607 \,(1.262) & 4.961 \,(2.651) & 3.605 \,(1.247) & 3.496 \,(1.110) & 3.800 \,(1.454) \\
  & & \ref{ModelL5} & 3.300 \,(1.331) & \textbf{3.224} \,(1.070) & 3.765 \,(1.429) & 4.946 \,(2.292) & 3.752 \,(4.551) & 3.531 \,(1.136) & 3.908 \,(1.478) \\
  \cmidrule{2-10}
  & \multirow{5}{*}{0.10} 
    & \ref{ModelL1} & \textbf{3.113} \,(1.164) & 3.177 \,(1.025) & 4.025 \,(1.621) & 5.662 \,(2.753) & 3.663 \,(1.313) & 3.577 \,(1.197) & 3.956 \,(1.562) \\
  & & \ref{ModelL2} & 3.511 \,(1.658) & \textbf{3.430} \,(1.276) & 4.146 \,(1.619) & 4.967 \,(2.223) & 4.259 \,(1.809) & 4.183 \,(1.760) & 4.510 \,(2.108) \\
  & & \ref{ModelL3} & 3.430 \,(1.432) & \textbf{3.410} \,(1.193) & 3.966 \,(1.479) & 4.508 \,(1.841) & 4.295 \,(1.687) & 4.243 \,(1.716) & 4.735 \,(2.195) \\
  & & \ref{ModelL4} & 3.260 \,(1.235) & \textbf{3.196} \,(1.019) & 3.731 \,(1.354) & 5.128 \,(2.581) & 3.676 \,(1.353) & 3.580 \,(1.202) & 3.971 \,(1.565) \\
  & & \ref{ModelL5} & \textbf{3.273} \,(1.328) & 3.305 \,(1.149) & 3.837 \,(1.460) & 4.866 \,(2.170) & 3.776 \,(4.340) & 3.574 \,(1.228) & 3.999 \,(1.645) \\
\midrule

\multirow{15}{*}{\rotatebox{90}{\textbf{$\mathrm{ISB}\cdot 10^3$}}} 
  & \multirow{5}{*}{0.01} 
    & \ref{ModelL1} & 4.420 \,(6.194) & 1.940 \,(2.591) & 1.495 \,(2.026) & \textbf{0.713} \,(1.032) & 0.903 \,(1.253) & 0.904 \,(1.253) & 1.218 \,(1.770) \\
  & & \ref{ModelL2} & 4.024 \,(5.533) & 1.840 \,(2.580) & 1.536 \,(2.213) & 0.899 \,(1.236) & 0.883 \,(1.207) & \textbf{0.853} \,(1.152) & 1.155 \,(1.705) \\
  & & \ref{ModelL3} & 4.213 \,(5.869) & 2.098 \,(3.103) & 1.723 \,(2.224) & \textbf{0.903} \,(1.224) & 0.921 \,(1.295) & 0.927 \,(1.285) & 1.046 \,(1.523) \\
  & & \ref{ModelL4} & 3.828 \,(5.505) & 1.920 \,(2.669) & 1.573 \,(2.146) & \textbf{0.717} \,(1.045) & 0.921 \,(1.264) & 0.918 \,(1.302) & 1.047 \,(1.548) \\
  & & \ref{ModelL5} & 4.032 \,(5.328) & 1.861 \,(2.585) & 1.508 \,(2.178) & \textbf{0.750} \,(0.992) & 0.870 \,(1.237) & 0.882 \,(1.213) & 1.093 \,(1.664) \\
  \cmidrule{2-10}
  & \multirow{5}{*}{0.05} 
    & \ref{ModelL1} & 4.654 \,(6.097) & 2.389 \,(3.186) & 1.754 \,(2.377) & \textbf{0.866} \,(1.185) & 1.231 \,(1.711) & 1.199 \,(1.647) & 1.657 \,(2.411) \\
  & & \ref{ModelL2} & 3.941 \,(5.572) & 2.083 \,(3.034) & 1.440 \,(1.940) & 1.006 \,(1.424) & 0.909 \,(1.249) & \textbf{0.841} \,(1.159) & 1.217 \,(1.766) \\
  & & \ref{ModelL3} & 4.290 \,(5.713) & 1.950 \,(2.687) & 1.797 \,(2.435) & 1.300 \,(1.795) & \textbf{0.892} \,(1.227) & 0.911 \,(1.295) & 1.093 \,(1.625) \\
  & & \ref{ModelL4} & 4.160 \,(5.435) & 2.236 \,(3.250) & 1.732 \,(2.370) & \textbf{0.829} \,(1.150) & 0.940 \,(1.321) & 0.951 \,(1.397) & 1.223 \,(1.723) \\
  & & \ref{ModelL5} & 4.104 \,(6.096) & 2.022 \,(2.719) & 2.148 \,(2.822) & 0.910 \,(1.213) & 0.954 \,(1.299) & \textbf{0.885} \,(1.156) & 1.128 \,(1.553) \\
  \cmidrule{2-10}
  & \multirow{5}{*}{0.10} 
    & \ref{ModelL1} & 6.268 \,(7.990) & 3.283 \,(4.275) & 2.776 \,(3.429) & \textbf{1.377} \,(1.594) & 1.980 \,(2.337) & 2.038 \,(2.304) & 2.722 \,(3.253) \\
  & & \ref{ModelL2} & 3.734 \,(5.001) & 2.092 \,(2.838) & 1.515 \,(2.095) & 1.087 \,(1.545) & 0.924 \,(1.298) & \textbf{0.873} \,(1.218) & 1.144 \,(1.584) \\
  & & \ref{ModelL3} & 3.863 \,(5.084) & 1.964 \,(2.786) & 1.692 \,(2.364) & 1.308 \,(1.949) & \textbf{0.907} \,(1.273) & 0.918 \,(1.337) & 1.289 \,(1.895) \\
  & & \ref{ModelL4} & 3.977 \,(5.675) & 2.112 \,(2.922) & 1.726 \,(2.396) & \textbf{0.738} \,(1.016) & 0.964 \,(1.292) & 0.945 \,(1.238) & 1.178 \,(1.630) \\
  & & \ref{ModelL5} & 4.218 \,(6.058) & 2.181 \,(2.931) & 2.247 \,(3.048) & 1.090 \,(1.646) & 0.987 \,(1.325) & \textbf{0.946} \,(1.292) & 1.236 \,(1.774) \\
\bottomrule
\end{tabular}}
\caption{Robust location estimation: Comparison of performance measures for Models~\ref{ModelL1}--\ref{ModelL5} with $n=500$. The table reports $\mathrm{MISE}\cdot 10^3$, $\mathrm{IVAR}\cdot 10^3$, and $\mathrm{ISB}\cdot 10^3$, as defined in~\eqref{eq: MISEiSB}, for various contamination levels $\varepsilon\in\set{0.01,0.05,0.1}$. For each experimental setting, the deepest function is identified and the corresponding quantities are evaluated. Reported values are Monte Carlo averages with standard deviations in parentheses, based on $1\,000$ simulations.}
\label{tab: Location_n500}
\end{table}

%% file: Tables_applications/OutlierDetectionRanks_n100.tex
\begin{table}[!h]
\centering
\resizebox{\textwidth}{!}{%
\begin{tabular}{cccccccc}
\toprule
Model & $\RPD_{0.001}$ & $\RPD_{0.1}$ & $\RHD_{0.001}$ & $\RHD_{0.1}$ & \FD{} & \MBD{} & \ID{} \\
\midrule

\multicolumn{8}{c}{\textbf{Contamination level $\varepsilon = 0.01$ ($m = 1$, $n = 100$)}} \\
\midrule
\ref{ModelD1} & \textbf{0.010} \,(0.000) & \textbf{0.010} \,(0.000) & 0.085 \,(0.015) & 0.061 \,(0.012) & \textbf{0.010} \,(0.000) & \textbf{0.010} \,(0.000) & 0.049 \,(0.013) \\ 
\ref{ModelD2} & \textbf{0.010} \,(0.001) & \textbf{0.010} \,(0.001) & 0.155 \,(0.020) & 0.127 \,(0.020) & 0.017 \,(0.011) & 0.015 \,(0.009) & 0.140 \,(0.022) \\ 
\ref{ModelD3} & \textbf{0.010} \,(0.000) & \textbf{0.010} \,(0.000) & 0.113 \,(0.017) & 0.093 \,(0.038) & 0.168 \,(0.078) & 0.140 \,(0.069) & 0.083 \,(0.019) \\ 
\ref{ModelD4} & \textbf{0.010} \,(0.000) & \textbf{0.010} \,(0.000) & 0.112 \,(0.016) & 0.088 \,(0.015) & 0.080 \,(0.023) & 0.062 \,(0.020) & 0.081 \,(0.017) \\ 
\ref{ModelD5} & \textbf{0.010} \,(0.001) & 0.015 \,(0.010) & 0.960 \,(0.077) & 0.971 \,(0.044) & 0.517 \,(0.057) & 0.522 \,(0.067) & 0.542 \,(0.147) \\ 
\ref{ModelD6} & \textbf{0.010} \,(0.004) & 0.023 \,(0.038) & 0.225 \,(0.251) & 0.715 \,(0.185) & 0.715 \,(0.149) & 0.684 \,(0.143) & 0.364 \,(0.192) \\
\midrule

\multicolumn{8}{c}{\textbf{Contamination level $\varepsilon = 0.05$ ($m = 5$, $n = 100$)}} \\
\midrule
\ref{ModelD1} & \textbf{0.030} \,(0.000) & \textbf{0.030} \,(0.000) & 0.068 \,(0.020) & 0.055 \,(0.017) & 0.031 \,(0.002) & 0.031 \,(0.003) & 0.042 \,(0.021) \\ 
\ref{ModelD2} & \textbf{0.030} \,(0.001) & \textbf{0.030} \,(0.001) & 0.174 \,(0.034) & 0.152 \,(0.033) & 0.039 \,(0.009) & 0.037 \,(0.007) & 0.168 \,(0.036) \\ 
\ref{ModelD3} & \textbf{0.030} \,(0.000) & \textbf{0.030} \,(0.000) & 0.099 \,(0.019) & 0.097 \,(0.027) & 0.181 \,(0.041) & 0.153 \,(0.036) & 0.088 \,(0.016) \\ 
\ref{ModelD4} & \textbf{0.030} \,(0.000) & \textbf{0.030} \,(0.000) & 0.134 \,(0.020) & 0.119 \,(0.020) & 0.103 \,(0.022) & 0.086 \,(0.019) & 0.108 \,(0.023) \\ 
\ref{ModelD5} & \textbf{0.030} \,(0.001) & 0.037 \,(0.008) & 0.336 \,(0.103) & 0.754 \,(0.092) & 0.516 \,(0.048) & 0.521 \,(0.049) & 0.539 \,(0.073) \\ 
\ref{ModelD6} & \textbf{0.030} \,(0.002) & 0.043 \,(0.018) & 0.191 \,(0.116) & 0.682 \,(0.091) & 0.707 \,(0.072) & 0.680 \,(0.070) & 0.384 \,(0.085) \\
\midrule

\multicolumn{8}{c}{\textbf{Contamination level $\varepsilon = 0.10$ ($m = 10$, $n = 100$)}} \\
\midrule
\ref{ModelD1} & \textbf{0.055} \,(0.000) & \textbf{0.055} \,(0.000) & 0.090 \,(0.016) & 0.077 \,(0.015) & 0.056 \,(0.002) & 0.056 \,(0.003) & 0.062 \,(0.013) \\ 
\ref{ModelD2} & \textbf{0.055} \,(0.002) & \textbf{0.055} \,(0.001) & 0.228 \,(0.036) & 0.209 \,(0.033) & 0.068 \,(0.009) & 0.067 \,(0.008) & 0.225 \,(0.039) \\ 
\ref{ModelD3} & \textbf{0.055} \,(0.000) & \textbf{0.055} \,(0.000) & 0.108 \,(0.020) & 0.115 \,(0.025) & 0.195 \,(0.032) & 0.168 \,(0.028) & 0.097 \,(0.013) \\ 
\ref{ModelD4} & \textbf{0.055} \,(0.000) & \textbf{0.055} \,(0.000) & 0.185 \,(0.026) & 0.175 \,(0.025) & 0.131 \,(0.022) & 0.117 \,(0.019) & 0.163 \,(0.028) \\ 
\ref{ModelD5} & \textbf{0.055} \,(0.001) & 0.068 \,(0.011) & 0.208 \,(0.038) & 0.481 \,(0.101) & 0.515 \,(0.046) & 0.520 \,(0.046) & 0.536 \,(0.058) \\ 
\ref{ModelD6} & 0.056 \,(0.003) & 0.072 \,(0.017) & 0.227 \,(0.075) & 0.643 \,(0.096) & 0.700 \,(0.054) & 0.678 \,(0.054) & 0.423 \,(0.060) \\
\bottomrule
\end{tabular}}
\caption{Outlier detection (Section~\ref{sec: OutlierDetection}): Means and standard deviations (in parentheses) of the average rank of the outlying curves for a sample size of $n=100$ across three contamination levels. Lower values indicate better performance. The smallest average ranks achieved are highlighted in bold. In the ideal case, the $m$ curves with the smallest depth values coincide exactly with the true outliers, yielding a minimum possible average outlier rank of $0.010$ for $m=1$, $0.030$ for $m=5$, and $0.055$ for $m=10$.}
\label{tab: OutlierDetection_n100}
\end{table}

%% file: Tables_applications/OutlierDetectionRanks_n1000.tex
\begin{table}
\centering
\resizebox{\textwidth}{!}{%
\begin{tabular}{cccccccc}
\toprule
Model & $\RPD_{0.001}$ & $\RPD_{0.1}$ & $\RHD_{0.001}$ & $\RHD_{0.1}$ & \FD{} & \MBD{} & \ID{} \\
\midrule

\multicolumn{8}{c}{\textbf{Contamination level $\varepsilon = 0.01$ ($m = 10$, $n = 1000$)}} \\
\midrule
\ref{ModelD1} & \textbf{0.006} \,(0.000) & \textbf{0.006} \,(0.000) & 0.014 \,(0.003) & 0.011 \,(0.002) & \textbf{0.006} \,(0.000) & \textbf{0.006} \,(0.000) & \textbf{0.006} \,(0.002) \\
\ref{ModelD2} & \textbf{0.006} \,(0.000) & \textbf{0.006} \,(0.000) & 0.050 \,(0.010) & 0.044 \,(0.009) & 0.013 \,(0.003) & 0.010 \,(0.002) & 0.041 \,(0.009) \\
\ref{ModelD3} & \textbf{0.006} \,(0.000) & \textbf{0.006} \,(0.000) & 0.045 \,(0.027) & 0.081 \,(0.044) & 0.163 \,(0.024) & 0.133 \,(0.021) & 0.014 \,(0.002) \\
\ref{ModelD4} & \textbf{0.006} \,(0.000) & \textbf{0.006} \,(0.000) & 0.028 \,(0.005) & 0.025 \,(0.004) & 0.079 \,(0.007) & 0.060 \,(0.006) & 0.023 \,(0.004) \\
\ref{ModelD5} & \textbf{0.006} \,(0.000) & \textbf{0.006} \,(0.001) & 0.987 \,(0.006) & 0.981 \,(0.010) & 0.510 \,(0.017) & 0.517 \,(0.020) & 0.575 \,(0.042) \\
\ref{ModelD6} & \textbf{0.006} \,(0.002) & 0.013 \,(0.010) & 0.224 \,(0.171) & 0.722 \,(0.057) & 0.717 \,(0.046) & 0.685 \,(0.044) & 0.348 \,(0.058) \\
\midrule

\multicolumn{8}{c}{\textbf{Contamination level $\varepsilon = 0.05$ ($m = 50$, $n = 1000$)}} \\
\midrule
\ref{ModelD1} & \textbf{0.026} \,(0.000) & \textbf{0.026} \,(0.000) & 0.043 \,(0.004) & 0.038 \,(0.003) & \textbf{0.026} \,(0.000) & \textbf{0.026} \,(0.000) & \textbf{0.026} \,(0.001) \\
\ref{ModelD2} & \textbf{0.026} \,(0.000) & \textbf{0.026} \,(0.000) & 0.133 \,(0.012) & 0.122 \,(0.010) & 0.035 \,(0.003) & 0.033 \,(0.002) & 0.125 \,(0.011) \\
\ref{ModelD3} & \textbf{0.026} \,(0.000) & \textbf{0.026} \,(0.000) & 0.062 \,(0.013) & 0.093 \,(0.020) & 0.173 \,(0.013) & 0.145 \,(0.011) & 0.029 \,(0.001) \\
\ref{ModelD4} & \textbf{0.026} \,(0.000) & \textbf{0.026} \,(0.000) & 0.094 \,(0.007) & 0.089 \,(0.007) & 0.101 \,(0.007) & 0.084 \,(0.006) & 0.082 \,(0.006) \\
\ref{ModelD5} & \textbf{0.026} \,(0.000) & 0.027 \,(0.001) & 0.359 \,(0.119) & 0.824 \,(0.035) & 0.510 \,(0.015) & 0.517 \,(0.016) & 0.572 \,(0.022) \\
\ref{ModelD6} & \textbf{0.026} \,(0.001) & 0.034 \,(0.006) & 0.183 \,(0.109) & 0.690 \,(0.030) & 0.710 \,(0.022) & 0.682 \,(0.022) & 0.377 \,(0.027) \\
\midrule

\multicolumn{8}{c}{\textbf{Contamination level $\varepsilon = 0.10$ ($m = 100$, $n = 1000$)}} \\
\midrule
\ref{ModelD1} & \textbf{0.051} \,(0.000) & \textbf{0.051} \,(0.000) & 0.076 \,(0.006) & 0.066 \,(0.007) & \textbf{0.051} \,(0.000) & \textbf{0.051} \,(0.000) & \textbf{0.051} \,(0.001) \\
\ref{ModelD2} & \textbf{0.051} \,(0.000) & \textbf{0.051} \,(0.000) & 0.207 \,(0.012) & 0.193 \,(0.010) & 0.063 \,(0.003) & 0.063 \,(0.003) & 0.203 \,(0.011) \\
\ref{ModelD3} & \textbf{0.051} \,(0.000) & \textbf{0.051} \,(0.000) & 0.085 \,(0.009) & 0.112 \,(0.013) & 0.187 \,(0.010) & 0.159 \,(0.009) & 0.052 \,(0.001) \\
\ref{ModelD4} & \textbf{0.051} \,(0.000) & \textbf{0.051} \,(0.000) & 0.166 \,(0.009) & 0.161 \,(0.009) & 0.129 \,(0.007) & 0.115 \,(0.006) & 0.147 \,(0.007) \\
\ref{ModelD5} & \textbf{0.051} \,(0.000) & 0.054 \,(0.002) & 0.175 \,(0.022) & 0.539 \,(0.046) & 0.509 \,(0.015) & 0.516 \,(0.015) & 0.568 \,(0.017) \\
\ref{ModelD6} & \textbf{0.051} \,(0.001) & 0.061 \,(0.006) & 0.219 \,(0.066) & 0.658 \,(0.031) & 0.701 \,(0.017) & 0.678 \,(0.017) & 0.414 \,(0.019) \\
\bottomrule
\end{tabular}}
\caption{Outlier detection (Section~\ref{sec: OutlierDetection}): Means and standard deviations (in parentheses) of the average rank of the outlying curves for a sample size of $n=1\,000$ across three contamination levels. Lower values indicate better performance. The smallest average ranks achieved are highlighted in bold. In the ideal case, the $m$ curves with the smallest depth values coincide exactly with the true outliers, yielding a minimum possible average outlier rank of $0.006$ for $m=10$, $0.026$ for $m=50$, and $0.051$ for $m=100$.}
\label{tab: OutlierDetection_n1000}
\end{table}

%% file: Tables_applications/OutlierDetection_n100.tex
\begin{table}
    \centering
    \resizebox{0.6\textheight}{!}{%
    \begin{subtable}{\textwidth}
        \centering
        \begin{tabular}{cc rrrr}
            \toprule
            \textbf{Model} & $m$ & \RPDOD{} & Funct. boxplot & Outliergram & MS-plot \\
            \midrule
            \multirow{4}{*}{\ref{ModelD1}} 
                & 0  & \textbf{0.018} (0.133) & 0.033 (0.179) & 0.989 (0.104) & 0.894 (0.308) \\
                & 1  & \textbf{0.008} (0.062) & 0.014 (0.083) & 0.990 (0.100) & 0.621 (0.260) \\
                & 5  & \textbf{0.001} (0.016) & 0.003 (0.023) & 0.987 (0.113) & 0.277 (0.174) \\
                & 10 & \textbf{0.001} (0.012) & \textbf{0.001} (0.010) & 0.985 (0.122) & 0.148 (0.112) \\
            \midrule
            \multirow{4}{*}{\ref{ModelD2}} 
                & 0  & 0.023 (0.150) & \textbf{0.001} (0.032) & 0.882 (0.323) & 0.670 (0.470) \\
                & 1  & 0.019 (0.105) & \textbf{0.002} (0.035) & 0.578 (0.263) & 0.400 (0.306) \\
                & 5  & 0.027 (0.072) & \textbf{0.000} (0.011) & 0.206 (0.150) & 0.152 (0.143) \\
                & 10 & 0.029 (0.056) & \textbf{0.000} (0.006) & 0.080 (0.081) & 0.074 (0.082) \\
            \midrule
            \multirow{4}{*}{\ref{ModelD3}} 
                & 0  & \textbf{0.019} (0.137) & 0.032 (0.176) & 0.987 (0.113) & 0.902 (0.297) \\
                & 1  & \textbf{0.004} (0.051) & 0.024 (0.136) & 0.807 (0.158) & 0.633 (0.248) \\
                & 5  & \textbf{0.000} (0.000) & 0.008 (0.057) & 0.403 (0.170) & 0.287 (0.170) \\
                & 10 & \textbf{0.000} (0.006) & 0.002 (0.019) & 0.177 (0.121) & 0.153 (0.110) \\
            \midrule
            \multirow{4}{*}{\ref{ModelD4}} 
                & 0  & \textbf{0.014} (0.118) & 0.038 (0.191) & 0.986 (0.118) & 0.890 (0.313) \\
                & 1  & \textbf{0.000} (0.000) & 0.022 (0.117) & 0.770 (0.142) & 0.619 (0.261) \\
                & 5  & \textbf{0.000} (0.000) & 0.005 (0.038) & 0.348 (0.144) & 0.283 (0.170) \\
                & 10 & \textbf{0.000} (0.000) & 0.002 (0.019) & 0.131 (0.092) & 0.146 (0.109) \\
            \midrule
            \multirow{4}{*}{\ref{ModelD5}} 
                & 0  & \textbf{0.009} (0.094) & 0.038 (0.191) & 0.982 (0.133) & 0.884 (0.320) \\
                & 1  & \textbf{0.016} (0.097) & 0.044 (0.205) & 0.918 (0.175) & 0.876 (0.330) \\
                & 5  & 0.035 (0.084) & \textbf{0.034} (0.181) & 0.834 (0.291) & 0.801 (0.399) \\
                & 10 & 0.051 (0.082) & \textbf{0.031} (0.173) & 0.761 (0.407) & 0.679 (0.467) \\
            \midrule
            \multirow{4}{*}{\ref{ModelD6}} 
                & 0  & \textbf{0.088} (0.283) & 0.590 (0.492) & 0.985 (0.122) & 0.590 (0.492) \\
                & 1  & \textbf{0.005} (0.064) & 0.569 (0.495) & 0.767 (0.158) & 0.369 (0.324) \\
                & 5  & \textbf{0.003} (0.023) & 0.518 (0.499) & 0.345 (0.160) & 0.175 (0.174) \\
                & 10 & \textbf{0.010} (0.036) & 0.517 (0.499) & 0.123 (0.098) & 0.133 (0.140) \\
            \bottomrule
        \end{tabular}
        \caption{False Discovery Rate (FDR) --- Lower is better}
    \end{subtable}%
    }
    \resizebox{0.6\textheight}{!}{%
    \begin{subtable}{\textwidth}
        \centering
        \begin{tabular}{cc rrrr}
            \toprule
            \textbf{Model} & $m$ & \RPDOD{} & Funct. boxplot & Outliergram & MS-plot \\
            \midrule
            \multirow{3}{*}{\ref{ModelD1}} 
                & 1  & 0.639 (0.477) & \textbf{0.992} (0.050) & -0.021 (0.006) & 0.577 (0.197) \\
                & 5  & \textbf{0.999} (0.009) & 0.998 (0.013) & -0.048 (0.014) & 0.831 (0.112) \\
                & 10 & \textbf{0.999} (0.008) & \textbf{0.999} (0.006) & -0.068 (0.020) & 0.907 (0.070) \\
            \midrule
            \multirow{3}{*}{\ref{ModelD2}} 
                & 1  & 0.378 (0.478) & 0.194 (0.395) & 0.618 (0.188) & \textbf{0.745} (0.203) \\
                & 5  & \textbf{0.944} (0.172) & 0.163 (0.237) & 0.878 (0.092) & 0.912 (0.086) \\
                & 10 & 0.951 (0.132) & 0.101 (0.167) & 0.927 (0.059) & \textbf{0.956} (0.049) \\
            \midrule
            \multirow{3}{*}{\ref{ModelD3}} 
                & 1  & 0.503 (0.499) & \textbf{0.582} (0.489) & 0.376 (0.205) & 0.571 (0.185) \\
                & 5  & \textbf{1.000} (0.000) & 0.715 (0.199) & 0.672 (0.145) & 0.828 (0.108) \\
                & 10 & \textbf{1.000} (0.003) & 0.683 (0.140) & 0.775 (0.109) & 0.908 (0.069) \\
            \midrule
            \multirow{3}{*}{\ref{ModelD4}} 
                & 1  & 0.044 (0.205) & 0.534 (0.492) & 0.455 (0.124) & \textbf{0.581} (0.194) \\
                & 5  & \textbf{1.000} (0.000) & 0.649 (0.250) & 0.790 (0.095) & 0.831 (0.108) \\
                & 10 & \textbf{1.000} (0.000) & 0.612 (0.195) & 0.920 (0.057) & 0.913 (0.068) \\
            \midrule
            \multirow{3}{*}{\ref{ModelD5}} 
                & 1  & \textbf{0.205} (0.396) & 0.000 (0.002) & 0.118 (0.212) & -0.014 (0.013) \\
                & 5  & \textbf{0.903} (0.261) & -0.001 (0.004) & 0.078 (0.215) & -0.029 (0.020) \\
                & 10 & \textbf{0.932} (0.187) & -0.001 (0.006) & -0.008 (0.112) & -0.034 (0.027) \\
            \midrule
            \multirow{3}{*}{\ref{ModelD6}} 
                & 1  & 0.030 (0.168) & -0.007 (0.033) & 0.453 (0.142) & \textbf{0.751} (0.234) \\
                & 5  & 0.761 (0.405) & -0.013 (0.034) & 0.787 (0.106) & \textbf{0.871} (0.120) \\
                & 10 & 0.878 (0.285) & -0.021 (0.028) & \textbf{0.895} (0.074) & 0.792 (0.161) \\
            \bottomrule
        \end{tabular}
        \caption{Matthews Correlation Coefficient (MCC) --- Higher is better}
        \vspace{1em}
    \end{subtable}
    }
    \caption{Outlier detection (Section~\ref{sec: OutlierDetection}): Means and standard deviations (in parentheses) of~$\mathrm{MCC}$ and $\mathrm{FDR}$ of outlier detection methods described in Section~\ref{subsec: threshold}. Sample size $n=100$ across four contamination levels $m\in\{0,1,5,10\}$ in models~\ref{ModelD1}--\ref{ModelD6} is considered. Best results are highlighted in bold.}
    \label{tab: outlierDetectionResults_n100}
\end{table}

%% file: Tables_applications/OutlierDetection_n1000.tex
\begin{table}
    \centering
    \resizebox{0.6\textheight}{!}{%
    \begin{subtable}{\textwidth}
        \centering
        \begin{tabular}{cc rrrr}
            \toprule
            \textbf{Model} & $m$ & \RPDOD{} & Funct. boxplot & Outliergram & MS-plot \\
            \midrule
            \multirow{4}{*}{\ref{ModelD1}} 
                & 0   & \textbf{0.000} (0.000) & \textbf{0.000} (0.000) & 1.000 (0.000) & 1.000 (0.000) \\
                & 10  & 0.000 (0.004) & \textbf{0.000} (0.000) & 1.000 (0.000) & 0.892 (0.016) \\
                & 50  & \textbf{0.000} (0.000) & \textbf{0.000} (0.000) & 1.000 (0.000) & 0.597 (0.039) \\
                & 100 & \textbf{0.000} (0.000) & \textbf{0.000} (0.000) & 1.000 (0.000) & 0.389 (0.039) \\
            \midrule
            \multirow{4}{*}{\ref{ModelD2}} 
                & 0   & \textbf{0.000} (0.000) & \textbf{0.000} (0.000) & 1.000 (0.000) & 1.000 (0.000) \\
                & 10  & 0.005 (0.022) & \textbf{0.000} (0.000) & 0.659 (0.061) & 0.833 (0.031) \\
                & 50  & 0.006 (0.013) & \textbf{0.000} (0.000) & 0.211 (0.050) & 0.475 (0.054) \\
                & 100 & 0.010 (0.011) & \textbf{0.000} (0.000) & 0.077 (0.025) & 0.283 (0.044) \\
            \midrule
            \multirow{4}{*}{\ref{ModelD3}} 
                & 0   & \textbf{0.000} (0.000) & 0.001 (0.032) & 1.000 (0.000) & 1.000 (0.000) \\
                & 10  & \textbf{0.000} (0.000) & 0.000 (0.008) & 0.834 (0.031) & 0.894 (0.015) \\
                & 50  & \textbf{0.000} (0.000) & \textbf{0.000} (0.000) & 0.422 (0.050) & 0.602 (0.038) \\
                & 100 & \textbf{0.000} (0.000) & \textbf{0.000} (0.000) & 0.181 (0.037) & 0.395 (0.039) \\
            \midrule
            \multirow{4}{*}{\ref{ModelD4}} 
                & 0   & \textbf{0.000} (0.000) & 0.002 (0.045) & 1.000 (0.000) & 1.000 (0.000) \\
                & 10  & \textbf{0.000} (0.000) & 0.002 (0.035) & 0.804 (0.027) & 0.891 (0.016) \\
                & 50  & \textbf{0.000} (0.000) & 0.000 (0.016) & 0.365 (0.048) & 0.595 (0.040) \\
                & 100 & \textbf{0.000} (0.000) & 0.000 (0.004) & 0.134 (0.031) & 0.389 (0.041) \\
            \midrule
            \multirow{4}{*}{\ref{ModelD5}} 
                & 0   & \textbf{0.000} (0.000) & 0.001 (0.032) & 1.000 (0.000) & 1.000 (0.000) \\
                & 10  & 0.004 (0.018) & \textbf{0.001} (0.032) & 0.963 (0.041) & 0.995 (0.012) \\
                & 50  & 0.003 (0.008) & \textbf{0.001} (0.032) & 0.991 (0.034) & 0.990 (0.027) \\
                & 100 & 0.007 (0.010) & \textbf{0.000} (0.000) & 1.000 (0.000) & 0.993 (0.024) \\
            \midrule
            \multirow{4}{*}{\ref{ModelD6}} 
                & 0   & \textbf{0.000} (0.000) & 0.950 (0.218) & 1.000 (0.000) & 1.000 (0.000) \\
                & 10  & \textbf{0.000} (0.000) & 0.941 (0.235) & 0.790 (0.034) & 0.747 (0.052) \\
                & 50  & \textbf{0.000} (0.003) & 0.924 (0.265) & 0.349 (0.051) & 0.350 (0.064) \\
                & 100 & \textbf{0.002} (0.005) & 0.905 (0.292) & 0.115 (0.032) & 0.196 (0.054) \\
            \bottomrule
        \end{tabular}
        \caption{False Discovery Rate (FDR) --- Lower is better}
    \end{subtable}%
    }
    \resizebox{0.6\textheight}{!}{%
    \begin{subtable}{\textwidth}
        \centering
        \begin{tabular}{cc rrrr}
            \toprule
            \textbf{Model} & $m$ & \RPDOD{} & Funct. boxplot & Outliergram & MS-plot \\
            \midrule
            \multirow{3}{*}{\ref{ModelD1}} 
                & 10  & 0.976 (0.153) & \textbf{1.000} (0.002) & -0.022 (0.002) & 0.314 (0.025) \\
                & 50  & \textbf{1.000} (0.001) & \textbf{1.000} (0.001) & -0.049 (0.004) & 0.609 (0.033) \\
                & 100 & \textbf{1.000} (0.001) & \textbf{1.000} (0.001) & -0.070 (0.006) & 0.753 (0.029) \\
            \midrule
            \multirow{3}{*}{\ref{ModelD2}} 
                & 10  & \textbf{0.875} (0.305) & 0.007 (0.045) & 0.576 (0.052) & 0.396 (0.039) \\
                & 50  & \textbf{0.986} (0.012) & 0.000 (0.006) & 0.880 (0.030) & 0.706 (0.040) \\
                & 100 & \textbf{0.981} (0.011) & 0.000 (0.000) & 0.937 (0.017) & 0.828 (0.030) \\
            \midrule
            \multirow{3}{*}{\ref{ModelD3}} 
                & 10  & \textbf{0.996} (0.063) & 0.522 (0.155) & 0.361 (0.056) & 0.311 (0.024) \\
                & 50  & \textbf{1.000} (0.000) & 0.502 (0.065) & 0.667 (0.045) & 0.604 (0.033) \\
                & 100 & \textbf{1.000} (0.000) & 0.459 (0.049) & 0.778 (0.034) & 0.749 (0.029) \\
            \midrule
            \multirow{3}{*}{\ref{ModelD4}} 
                & 10  & \textbf{1.000} (0.000) & 0.177 (0.187) & 0.432 (0.031) & 0.315 (0.025) \\
                & 50  & \textbf{1.000} (0.000) & 0.186 (0.092) & 0.784 (0.032) & 0.610 (0.034) \\
                & 100 & \textbf{1.000} (0.000) & 0.156 (0.063) & 0.921 (0.019) & 0.753 (0.030) \\
            \midrule
            \multirow{3}{*}{\ref{ModelD5}} 
                & 10  & \textbf{0.898} (0.298) & 0.000 (0.000) & 0.061 (0.092) & -0.015 (0.041) \\
                & 50  & \textbf{0.998} (0.005) & 0.000 (0.000) & -0.031 (0.032) & -0.051 (0.041) \\
                & 100 & \textbf{0.996} (0.006) & 0.000 (0.000) & -0.042 (0.006) & -0.079 (0.023) \\
            \midrule
            \multirow{3}{*}{\ref{ModelD6}} 
                & 10  & 0.005 (0.069) & -0.005 (0.010) & 0.448 (0.037) & \textbf{0.492} (0.052) \\
                & 50  & \textbf{0.939} (0.197) & -0.011 (0.005) & 0.794 (0.034) & 0.792 (0.042) \\
                & 100 & \textbf{0.982} (0.011) & -0.015 (0.007) & 0.924 (0.021) & 0.866 (0.034) \\
            \bottomrule
        \end{tabular}
        \caption{Matthews Correlation Coefficient (MCC) --- Higher is better}
        \vspace{1em}
    \end{subtable}
    }
    \caption{Outlier detection (Section~\ref{sec: OutlierDetection}): Means and standard deviations (in parentheses) of~$\mathrm{MCC}$ and $\mathrm{FDR}$ of outlier detection methods described in Section~\ref{subsec: threshold}. Sample size $n=1000$ across four contamination levels $m\in\{0,10,50,100\}$ in models~\ref{ModelD1}--\ref{ModelD6} is considered. Best results are highlighted in bold.}
    \label{tab: outlierDetectionResults_n1000}
\end{table}

%% file: Tables_applications/Classification_n100.tex
\begin{table}
\centering
\resizebox{\textwidth}{!}{%
\begin{tabular}{c c c c c c c c c}
\toprule
Model & Classifier & $\RPD_{0.001}$ & $\RPD_{0.1}$ & $\RHD_{0.001}$ & $\RHD_{0.1}$ & \FD{} & \MBD{} & \ID{} \\
\midrule
\multirow{2}{*}{\ref{ModelC1}} 
& max & \textbf{0.018} \,(0.004) & \textbf{0.018} \,(0.004) & 0.167 \,(0.017) & 0.141 \,(0.017) & 0.071 \,(0.006) & 0.060 \,(0.006) & 0.168 \,(0.018) \\ 
& DD & \textbf{0.020} \,(0.005) & \textbf{0.020} \,(0.005) & 0.167 \,(0.017) & 0.142 \,(0.016) & 0.074 \,(0.009) & 0.062 \,(0.008) & 0.168 \,(0.018) \\ 
\hline 
\multirow{2}{*}{\ref{ModelC2}} 
& max & \textbf{0.007} \,(0.002) & 0.028 \,(0.010) & 0.226 \,(0.039) & 0.357 \,(0.062) & 0.479 \,(0.041) & 0.426 \,(0.049) & 0.197 \,(0.063) \\ 
& DD & \textbf{0.008} \,(0.003) & 0.026 \,(0.008) & 0.231 \,(0.043) & 0.350 \,(0.070) & 0.463 \,(0.042) & 0.421 \,(0.046) & 0.155 \,(0.046) \\ 
\hline 
\multirow{2}{*}{\ref{ModelC3}} 
& max & \textbf{0.085} \,(0.018) & 0.090 \,(0.018) & 0.277 \,(0.042) & 0.458 \,(0.050) & 0.496 \,(0.023) & 0.467 \,(0.033) & 0.264 \,(0.065) \\ 
& DD & \textbf{0.038} \,(0.007) & 0.071 \,(0.012) & 0.277 \,(0.046) & 0.457 \,(0.056) & 0.492 \,(0.025) & 0.461 \,(0.031) & 0.196 \,(0.032) \\ 
\bottomrule
\end{tabular}}
\caption{Supervised classification (Section~\ref{sec: Classification}): Mean empirical misclassification rates, with standard deviations in parentheses, for the max-depth and linear DD-classifiers under Models~\ref{ModelC1}--\ref{ModelC3}. Evaluated on a test set of size $n_{\textrm{test}} = 100$. The lowest misclassification rates are highlighted in bold.}
\label{tab: Classification_n100}
\end{table}

%% file: Tables_applications/Classification_n1000.tex
\begin{table}
\centering
\resizebox{\textwidth}{!}{%
\begin{tabular}{c c c c c c c c c}
\toprule
Model & Classifier & $\RPD_{0.001}$ & $\RPD_{0.1}$ & $\RHD_{0.001}$ & $\RHD_{0.1}$ & \FD{} & \MBD{} & \ID{} \\
\midrule
\multirow{2}{*}{\ref{ModelC1}} 
& max & \textbf{0.013} \,(0.002) & \textbf{0.013} \,(0.003) & 0.049 \,(0.005) & 0.045 \,(0.005) & 0.069 \,(0.006) & 0.058 \,(0.005) & 0.051 \,(0.005) \\ 
& DD & \textbf{0.013} \,(0.003) & 0.014 \,(0.003) & 0.050 \,(0.005) & 0.046 \,(0.005) & 0.070 \,(0.006) & 0.059 \,(0.005) & 0.051 \,(0.005) \\ 
\hline 
\multirow{2}{*}{\ref{ModelC2}} 
& max & \textbf{0.004} \,(0.002) & 0.020 \,(0.004) & 0.194 \,(0.035) & 0.304 \,(0.054) & 0.446 \,(0.040) & 0.398 \,(0.048) & 0.104 \,(0.033) \\ 
& DD & \textbf{0.005} \,(0.002) & 0.020 \,(0.004) & 0.201 \,(0.040) & 0.327 \,(0.080) & 0.425 \,(0.036) & 0.377 \,(0.044) & 0.089 \,(0.027) \\ 
\hline 
\multirow{2}{*}{\ref{ModelC3}} 
& max & \textbf{0.076} \,(0.011) & 0.086 \,(0.011) & 0.236 \,(0.038) & 0.419 \,(0.057) & 0.495 \,(0.031) & 0.427 \,(0.042) & 0.198 \,(0.030) \\ 
& DD & \textbf{0.027} \,(0.004) & 0.058 \,(0.007) & 0.236 \,(0.043) & 0.425 \,(0.059) & 0.480 \,(0.025) & 0.414 \,(0.038) & 0.146 \,(0.014) \\ 
\bottomrule
\end{tabular}}
\caption{Supervised classification (Section~\ref{sec: Classification}): Mean empirical misclassification rates, with standard deviations in parentheses, for the max-depth and linear DD-classifiers under Models~\ref{ModelC1}--\ref{ModelC3}. Evaluated on a test set of size $n_{\textrm{test}} = 1000$. The lowest misclassification rates are highlighted in bold.}
\label{tab: Classification_n1000}
\end{table}

%% file: Tables_applications/KW_test_full.tex
\begin{table}
\centering
\resizebox{0.8\textwidth}{!}{%
\begin{tabular}{cccccccc}
\toprule
& $\RPD_{0.001}$ & $\RPD_{0.1}$ & $\RHD_{0.001}$ & $\RHD_{0.1}$ & \FD{} & \MBD{} & \ID{} \\
\midrule
$n=50$   & 0.046 & 0.045 & 0.036 & 0.046 & 0.040 & 0.036 & 0.042 \\
$n=100$  & 0.042 & 0.037 & 0.046 & 0.046 & 0.040 & 0.040 & 0.039 \\
$n=500$  & 0.044 & 0.045 & 0.053 & 0.049 & 0.050 & 0.049 & 0.052 \\
$n=1000$ & 0.048 & 0.052 & 0.063 & 0.068 & 0.060 & 0.063 & 0.058 \\
\bottomrule
\end{tabular}}
\caption{Hypothesis testing (Section~\ref{sec: HypothesisTesting}): Empirical sizes of the Kruskal-Wallis tests using depth-based rankings for all sample sizes $n \in \{50,100,500,1000\}$. The empirical sizes are generally close to the nominal significance level $0.05$, with only minor deviations observed for larger sample sizes.
\label{tab: KW_test_full}}
\end{table}

%% file: Tables_applications/Real_World_Application.tex
\begin{table}
    \centering
    \begin{tabular}{c ccccccc}
        \toprule
        \multirow{2}{*}{\textbf{Hypothesis}} & \multicolumn{7}{c}{\textbf{Depth}} \\
        \cmidrule(lr){2-8}
        & $\RPD{}_{0.001}$ & $\RPD{}_{0.1}$ & $\RHD{}_{0.001}$ & $\RHD{}_{0.1}$ & \FD{} & \MBD{} & \ID{} \\
        \midrule
        $H_0^{(1)}$ & $\mathbf{<0.001}$ & $\mathbf{<0.001}$ & $\mathbf{<0.001}$ & $\mathbf{<0.001}$ & $\mathbf{<0.001}$ & $\mathbf{<0.001}$ & $\mathbf{<0.001}$ \\
        $H_0^{(2)}$ & $\mathbf{<0.001}$ & $\mathbf{<0.001}$ & $\mathbf{<0.001}$ & $\mathbf{<0.001}$ & $\mathbf{<0.001}$ & $\mathbf{<0.001}$ & $\mathbf{<0.001}$ \\
        $H_0^{(3)}$ & \textbf{0.001} & \textbf{0.014} & 1.000 & 0.714 & 0.406 & 0.248 & $\mathbf{0.002}$ \\
        $H_0^{(4)}$ & $\mathbf{<0.001}$ & $\mathbf{<0.001}$ & $\mathbf{<0.001}$ & $\mathbf{<0.001}$ & 0.511 & 0.194 & $\mathbf{<0.001}$ \\
        $H_0^{(5)}$ & 0.403 & 0.745 & 0.699 & 1.000 & 1.000 & 1.000 & 0.223 \\
        \bottomrule
    \end{tabular}
    \caption{Real world application (Section~\ref{sec: RealData}): Results of the multisample Kruskal-Wallis test (described in Section~\ref{sec: HypothesisTesting}) for the null hypotheses $H_0^{(1)}, \dots, H_0^{(5)}$ defined in Section~\ref{sec: RealData}. The table reports p-values, with statistically significant results ($p < 0.05$) highlighted in bold. The results obtained using \RPD{} are broadly consistent with those produced by the competing depth notions. All considered depths strongly reject $H_0^{(1)}$ and $H_0^{(2)}$, indicating clear differences among days. The main discrepancies arise for $H_0^{(3)}$ and $H_0^{(4)}$. For $H_0^{(3)}$, only \RPD{} and \ID{} detect significant differences, whereas \RHD{}, \FD{}, and \MBD{} do not. For $H_0^{(4)}$, both versions of \RPD{}, both versions of \RHD{}, and \ID{} reject the null hypothesis, while \FD{} and \MBD{} fail to do so. Finally, none of the considered depths provide evidence against $H_0^{(5)}$.
    }
    \label{tab: pvalues_realData}
\end{table}

%% file: Tables_applications/Location_n100.tex
\begin{table}
\centering
\resizebox{\textwidth}{!}{%
\begin{tabular}{cccccccccc}
\toprule
Metric & $\varepsilon$ & Model & $\RPD_{0.1}$ & $\RPD_{0.5}$ & $\RHD_{0.1}$ & $\RHD_{0.5}$ & \FD{} & \MBD{} & \ID{} \\
\midrule

\multirow{15}{*}{\rotatebox{90}{\textbf{$\mathrm{MISE}\cdot 10^3$}}} 
  & \multirow{5}{*}{0.01} 
    & \ref{ModelL1} & 7.645 \,(6.159) & 5.136 \,(2.686) & 5.189 \,(2.355) & 5.581 \,(2.308) & 4.485 \,(1.641) & \textbf{4.413} \,(1.540) & 5.081 \,(2.204) \\
  & & \ref{ModelL2} & 7.308 \,(5.604) & 5.012 \,(2.670) & 5.154 \,(2.461) & 5.617 \,(2.451) & 4.418 \,(1.691) & \textbf{4.314} \,(1.564) & 4.973 \,(2.196) \\
  & & \ref{ModelL3} & 7.383 \,(5.866) & 5.204 \,(3.116) & 5.331 \,(2.492) & 5.428 \,(2.234) & 4.486 \,(1.738) & \textbf{4.400} \,(1.654) & 4.862 \,(1.990) \\
  & & \ref{ModelL4} & 6.976 \,(5.603) & 5.020 \,(2.784) & 5.167 \,(2.513) & 5.508 \,(2.452) & 4.438 \,(1.755) & \textbf{4.346} \,(1.671) & 4.879 \,(2.078) \\
  & & \ref{ModelL5} & 7.260 \,(5.344) & 4.999 \,(2.674) & 5.139 \,(2.480) & 5.546 \,(2.281) & 4.430 \,(1.696) & \textbf{4.322} \,(1.599) & 4.930 \,(2.096) \\
  \cmidrule{2-10}
  & \multirow{5}{*}{0.05} 
    & \ref{ModelL1} & 7.796 \,(6.118) & 5.539 \,(3.241) & 5.606 \,(2.709) & 6.094 \,(2.594) & 4.798 \,(2.058) & \textbf{4.686} \,(1.949) & 5.458 \,(2.761) \\
  & & \ref{ModelL2} & 7.267 \,(5.702) & 5.314 \,(3.125) & 5.256 \,(2.336) & 5.715 \,(2.489) & 4.617 \,(1.842) & \textbf{4.488} \,(1.742) & 5.274 \,(2.407) \\
  & & \ref{ModelL3} & 7.491 \,(5.746) & 5.133 \,(2.736) & 5.469 \,(2.727) & 5.677 \,(2.524) & 4.576 \,(1.805) & \textbf{4.507} \,(1.672) & 5.129 \,(2.156) \\
  & & \ref{ModelL4} & 7.370 \,(5.442) & 5.333 \,(3.252) & 5.339 \,(2.674) & 5.790 \,(2.860) & 4.545 \,(1.743) & \textbf{4.447} \,(1.680) & 5.023 \,(2.147) \\
  & & \ref{ModelL5} & 7.404 \,(6.130) & 5.246 \,(2.836) & 5.913 \,(3.071) & 5.857 \,(2.530) & 4.707 \,(4.856) & \textbf{4.416} \,(1.532) & 5.037 \,(2.049) \\
  \cmidrule{2-10}
  & \multirow{5}{*}{0.10} 
    & \ref{ModelL1} & 9.382 \,(8.015) & 6.461 \,(4.345) & 6.802 \,(3.843) & 7.039 \,(3.079) & 5.643 \,(2.679) & \textbf{5.615} \,(2.581) & 6.679 \,(3.665) \\
  & & \ref{ModelL2} & 7.245 \,(5.129) & 5.522 \,(3.036) & 5.661 \,(2.567) & 6.055 \,(2.659) & 5.184 \,(2.191) & \textbf{5.056} \,(2.070) & 5.654 \,(2.588) \\
  & & \ref{ModelL3} & 7.293 \,(5.170) & 5.375 \,(2.930) & 5.658 \,(2.787) & 5.816 \,(2.607) & 5.203 \,(2.079) & \textbf{5.161} \,(2.116) & 6.025 \,(2.773) \\
  & & \ref{ModelL4} & 7.238 \,(5.755) & 5.309 \,(3.004) & 5.457 \,(2.708) & 5.866 \,(2.707) & 4.640 \,(1.818) & \textbf{4.525} \,(1.643) & 5.150 \,(2.192) \\
  & & \ref{ModelL5} & 7.491 \,(6.107) & 5.487 \,(3.057) & 6.085 \,(3.369) & 5.957 \,(2.613) & 4.763 \,(4.707) & \textbf{4.521} \,(1.753) & 5.236 \,(2.400) \\
\midrule

\multirow{15}{*}{\rotatebox{90}{\textbf{$\mathrm{IVAR}\cdot 10^3$}}} 
  & \multirow{5}{*}{0.01} 
    & \ref{ModelL1} & 3.225 \,(1.269) & \textbf{3.196} \,(1.022) & 3.694 \,(1.368) & 4.868 \,(2.118) & 3.582 \,(1.182) & 3.508 \,(1.083) & 3.863 \,(1.482) \\
  & & \ref{ModelL2} & 3.284 \,(1.352) & \textbf{3.171} \,(1.020) & 3.618 \,(1.281) & 4.718 \,(2.161) & 3.535 \,(1.196) & 3.461 \,(1.110) & 3.818 \,(1.423) \\
  & & \ref{ModelL3} & 3.169 \,(1.284) & \textbf{3.105} \,(0.975) & 3.608 \,(1.296) & 4.525 \,(1.888) & 3.565 \,(1.217) & 3.473 \,(1.128) & 3.816 \,(1.464) \\
  & & \ref{ModelL4} & 3.148 \,(1.213) & \textbf{3.100} \,(0.996) & 3.593 \,(1.437) & 4.790 \,(2.243) & 3.516 \,(1.207) & 3.428 \,(1.129) & 3.832 \,(1.492) \\
  & & \ref{ModelL5} & 3.227 \,(1.305) & \textbf{3.138} \,(0.980) & 3.631 \,(1.328) & 4.796 \,(2.103) & 3.559 \,(1.197) & 3.440 \,(1.092) & 3.836 \,(1.399) \\
  \cmidrule{2-10}
  & \multirow{5}{*}{0.05} 
    & \ref{ModelL1} & \textbf{3.142} \,(1.220) & 3.149 \,(1.024) & 3.851 \,(1.409) & 5.227 \,(2.337) & 3.566 \,(1.218) & 3.487 \,(1.138) & 3.800 \,(1.501) \\
  & & \ref{ModelL2} & 3.326 \,(1.441) & \textbf{3.231} \,(1.083) & 3.815 \,(1.383) & 4.709 \,(2.048) & 3.708 \,(1.360) & 3.647 \,(1.290) & 4.057 \,(1.691) \\
  & & \ref{ModelL3} & 3.201 \,(1.224) & \textbf{3.182} \,(1.050) & 3.672 \,(1.260) & 4.376 \,(1.890) & 3.684 \,(1.367) & 3.596 \,(1.230) & 4.035 \,(1.653) \\
  & & \ref{ModelL4} & 3.209 \,(1.330) & \textbf{3.097} \,(0.973) & 3.607 \,(1.262) & 4.961 \,(2.651) & 3.605 \,(1.247) & 3.496 \,(1.110) & 3.800 \,(1.454) \\
  & & \ref{ModelL5} & 3.300 \,(1.331) & \textbf{3.224} \,(1.070) & 3.765 \,(1.429) & 4.946 \,(2.292) & 3.752 \,(4.551) & 3.531 \,(1.136) & 3.908 \,(1.478) \\
  \cmidrule{2-10}
  & \multirow{5}{*}{0.10} 
    & \ref{ModelL1} & \textbf{3.113} \,(1.164) & 3.177 \,(1.025) & 4.025 \,(1.621) & 5.662 \,(2.753) & 3.663 \,(1.313) & 3.577 \,(1.197) & 3.956 \,(1.562) \\
  & & \ref{ModelL2} & 3.511 \,(1.658) & \textbf{3.430} \,(1.276) & 4.146 \,(1.619) & 4.967 \,(2.223) & 4.259 \,(1.809) & 4.183 \,(1.760) & 4.510 \,(2.108) \\
  & & \ref{ModelL3} & 3.430 \,(1.432) & \textbf{3.410} \,(1.193) & 3.966 \,(1.479) & 4.508 \,(1.841) & 4.295 \,(1.687) & 4.243 \,(1.716) & 4.735 \,(2.195) \\
  & & \ref{ModelL4} & 3.260 \,(1.235) & \textbf{3.196} \,(1.019) & 3.731 \,(1.354) & 5.128 \,(2.581) & 3.676 \,(1.353) & 3.580 \,(1.202) & 3.971 \,(1.565) \\
  & & \ref{ModelL5} & \textbf{3.273} \,(1.328) & 3.305 \,(1.149) & 3.837 \,(1.460) & 4.866 \,(2.170) & 3.776 \,(4.340) & 3.574 \,(1.228) & 3.999 \,(1.645) \\
\midrule

\multirow{15}{*}{\rotatebox{90}{\textbf{$\mathrm{ISB}\cdot 10^3$}}} 
  & \multirow{5}{*}{0.01} 
    & \ref{ModelL1} & 4.420 \,(6.194) & 1.940 \,(2.591) & 1.495 \,(2.026) & \textbf{0.713} \,(1.032) & 0.903 \,(1.253) & 0.904 \,(1.253) & 1.218 \,(1.770) \\
  & & \ref{ModelL2} & 4.024 \,(5.533) & 1.840 \,(2.580) & 1.536 \,(2.213) & 0.899 \,(1.236) & 0.883 \,(1.207) & \textbf{0.853} \,(1.152) & 1.155 \,(1.705) \\
  & & \ref{ModelL3} & 4.213 \,(5.869) & 2.098 \,(3.103) & 1.723 \,(2.224) & \textbf{0.903} \,(1.224) & 0.921 \,(1.295) & 0.927 \,(1.285) & 1.046 \,(1.523) \\
  & & \ref{ModelL4} & 3.828 \,(5.505) & 1.920 \,(2.669) & 1.573 \,(2.146) & \textbf{0.717} \,(1.045) & 0.921 \,(1.264) & 0.918 \,(1.302) & 1.047 \,(1.548) \\
  & & \ref{ModelL5} & 4.032 \,(5.328) & 1.861 \,(2.585) & 1.508 \,(2.178) & \textbf{0.750} \,(0.992) & 0.870 \,(1.237) & 0.882 \,(1.213) & 1.093 \,(1.664) \\
  \cmidrule{2-10}
  & \multirow{5}{*}{0.05} 
    & \ref{ModelL1} & 4.654 \,(6.097) & 2.389 \,(3.186) & 1.754 \,(2.377) & \textbf{0.866} \,(1.185) & 1.231 \,(1.711) & 1.199 \,(1.647) & 1.657 \,(2.411) \\
  & & \ref{ModelL2} & 3.941 \,(5.572) & 2.083 \,(3.034) & 1.440 \,(1.940) & 1.006 \,(1.424) & 0.909 \,(1.249) & \textbf{0.841} \,(1.159) & 1.217 \,(1.766) \\
  & & \ref{ModelL3} & 4.290 \,(5.713) & 1.950 \,(2.687) & 1.797 \,(2.435) & 1.300 \,(1.795) & \textbf{0.892} \,(1.227) & 0.911 \,(1.295) & 1.093 \,(1.625) \\
  & & \ref{ModelL4} & 4.160 \,(5.435) & 2.236 \,(3.250) & 1.732 \,(2.370) & \textbf{0.829} \,(1.150) & 0.940 \,(1.321) & 0.951 \,(1.397) & 1.223 \,(1.723) \\
  & & \ref{ModelL5} & 4.104 \,(6.096) & 2.022 \,(2.719) & 2.148 \,(2.822) & 0.910 \,(1.213) & 0.954 \,(1.299) & \textbf{0.885} \,(1.156) & 1.128 \,(1.553) \\
  \cmidrule{2-10}
  & \multirow{5}{*}{0.10} 
    & \ref{ModelL1} & 6.268 \,(7.990) & 3.283 \,(4.275) & 2.776 \,(3.429) & \textbf{1.377} \,(1.594) & 1.980 \,(2.337) & 2.038 \,(2.304) & 2.722 \,(3.253) \\
  & & \ref{ModelL2} & 3.734 \,(5.001) & 2.092 \,(2.838) & 1.515 \,(2.095) & 1.087 \,(1.545) & 0.924 \,(1.298) & \textbf{0.873} \,(1.218) & 1.144 \,(1.584) \\
  & & \ref{ModelL3} & 3.863 \,(5.084) & 1.964 \,(2.786) & 1.692 \,(2.364) & 1.308 \,(1.949) & \textbf{0.907} \,(1.273) & 0.918 \,(1.337) & 1.289 \,(1.895) \\
  & & \ref{ModelL4} & 3.977 \,(5.675) & 2.112 \,(2.922) & 1.726 \,(2.396) & \textbf{0.738} \,(1.016) & 0.964 \,(1.292) & 0.945 \,(1.238) & 1.178 \,(1.630) \\
  & & \ref{ModelL5} & 4.218 \,(6.058) & 2.181 \,(2.931) & 2.247 \,(3.048) & 1.090 \,(1.646) & 0.987 \,(1.325) & \textbf{0.946} \,(1.292) & 1.236 \,(1.774) \\
\bottomrule
        \end{tabular}}
\caption{Robust location estimation (Section~\ref{Supplement: LocationEstimation}): Comparison of performance measures for Models~\ref{ModelL1}--\ref{ModelL5} with $n=100$. The table reports $\mathrm{MISE}\cdot 10^3$, $\mathrm{IVAR}\cdot 10^3$, and $\mathrm{ISB}\cdot 10^3$, as defined in~\eqref{eq: MISEiSB}, for various contamination levels $\varepsilon\in\set{0.01,0.05,0.1}$. For each experimental setting, the deepest function is identified and the corresponding quantities are evaluated. Reported values are Monte Carlo averages with standard deviations in parentheses, based on $1\,000$ simulations.}
\label{tab: Location_n100}
\end{table}

%% file: Tables_applications/Location_n1000.tex
\begin{table}
\centering
\resizebox{\textwidth}{!}{%
\begin{tabular}{cccccccccc}
\toprule
Metric & $\varepsilon$ & Model & $\RPD_{0.1}$ & $\RPD_{0.5}$ & $\RHD_{0.1}$ & $\RHD_{0.5}$ & \FD{} & \MBD{} & \ID{} \\
\midrule

\multirow{15}{*}{\rotatebox{90}{\textbf{$\mathrm{MISE}\cdot 10^3$}}}
  & \multirow{5}{*}{0.01}
    & \ref{ModelL1} & 4.258 \,(2.945) & 2.806 \,(0.989) & 3.101 \,(0.992) & 3.552 \,(1.282) & 2.477 \,(0.557) & \textbf{2.414} \,(0.489) & 2.853 \,(0.807) \\
  & & \ref{ModelL2} & 4.146 \,(2.997) & 2.776 \,(0.961) & 3.146 \,(1.008) & 3.495 \,(1.237) & 2.474 \,(0.551) & \textbf{2.424} \,(0.496) & 2.842 \,(0.783) \\
  & & \ref{ModelL3} & 4.068 \,(2.779) & 2.801 \,(1.015) & 3.168 \,(1.064) & 3.498 \,(1.284) & 2.502 \,(0.591) & \textbf{2.431} \,(0.517) & 2.832 \,(0.810) \\
  & & \ref{ModelL4} & 4.098 \,(2.610) & 2.811 \,(1.020) & 3.100 \,(1.026) & 3.595 \,(1.510) & 2.467 \,(0.529) & \textbf{2.411} \,(0.480) & 2.813 \,(0.773) \\
  & & \ref{ModelL5} & 4.126 \,(2.765) & 2.780 \,(0.955) & 3.078 \,(0.985) & 3.464 \,(1.213) & 2.470 \,(0.553) & \textbf{2.413} \,(0.485) & 2.850 \,(0.819) \\
  \cmidrule{2-10}
  & \multirow{5}{*}{0.05}
    & \ref{ModelL1} & 4.695 \,(3.344) & 3.095 \,(1.378) & 3.437 \,(1.164) & 3.655 \,(1.221) & 2.750 \,(0.747) & \textbf{2.678} \,(0.690) & 3.124 \,(1.042) \\
  & & \ref{ModelL2} & 4.365 \,(2.936) & 2.897 \,(1.030) & 3.252 \,(1.047) & 3.528 \,(1.179) & 2.642 \,(0.651) & \textbf{2.585} \,(0.607) & 3.011 \,(0.906) \\
  & & \ref{ModelL3} & 4.266 \,(2.867) & 2.890 \,(1.010) & 3.265 \,(1.079) & 3.623 \,(1.325) & 2.667 \,(0.672) & \textbf{2.623} \,(0.613) & 3.121 \,(0.986) \\
  & & \ref{ModelL4} & 4.181 \,(2.611) & 2.854 \,(0.989) & 3.208 \,(0.991) & 3.576 \,(1.235) & 2.516 \,(0.562) & \textbf{2.463} \,(0.509) & 2.893 \,(0.822) \\
  & & \ref{ModelL5} & 4.303 \,(2.822) & 2.858 \,(1.087) & 3.355 \,(1.146) & 3.677 \,(1.361) & 2.513 \,(0.575) & \textbf{2.444} \,(0.492) & 2.895 \,(0.834) \\
  \cmidrule{2-10}
  & \multirow{5}{*}{0.10}
    & \ref{ModelL1} & 6.594 \,(5.522) & 4.164 \,(2.188) & 4.593 \,(1.595) & 4.475 \,(1.393) & 3.639 \,(1.214) & \textbf{3.624} \,(1.172) & 4.236 \,(1.575) \\
  & & \ref{ModelL2} & 4.541 \,(3.074) & 3.077 \,(1.091) & 3.405 \,(1.053) & 3.655 \,(1.189) & 3.120 \,(0.928) & \textbf{3.062} \,(0.860) & 3.547 \,(1.249) \\
  & & \ref{ModelL3} & 4.234 \,(2.578) & \textbf{3.068} \,(1.160) & 3.365 \,(1.087) & 3.669 \,(1.302) & 3.225 \,(0.972) & 3.254 \,(0.972) & 3.772 \,(1.289) \\
  & & \ref{ModelL4} & 4.301 \,(3.047) & 2.906 \,(1.110) & 3.271 \,(1.106) & 3.613 \,(1.314) & 2.568 \,(0.599) & \textbf{2.505} \,(0.542) & 2.940 \,(0.842) \\
  & & \ref{ModelL5} & 4.304 \,(2.726) & 3.054 \,(1.264) & 3.464 \,(1.201) & 3.799 \,(1.403) & 2.578 \,(0.601) & \textbf{2.525} \,(0.557) & 2.954 \,(0.871) \\
\midrule

\multirow{15}{*}{\rotatebox{90}{\textbf{$\mathrm{IVAR}\cdot 10^3$}}}
  & \multirow{5}{*}{0.01}
    & \ref{ModelL1} & 2.152 \,(0.677) & \textbf{2.097} \,(0.461) & 2.716 \,(0.883) & 3.493 \,(1.293) & 2.263 \,(0.504) & 2.206 \,(0.446) & 2.542 \,(0.728) \\
  & & \ref{ModelL2} & 2.134 \,(0.620) & \textbf{2.127} \,(0.495) & 2.728 \,(0.879) & 3.371 \,(1.263) & 2.248 \,(0.517) & 2.197 \,(0.472) & 2.539 \,(0.715) \\
  & & \ref{ModelL3} & 2.136 \,(0.662) & \textbf{2.104} \,(0.476) & 2.703 \,(0.884) & 3.393 \,(1.297) & 2.278 \,(0.545) & 2.203 \,(0.480) & 2.521 \,(0.735) \\
  & & \ref{ModelL4} & 2.166 \,(0.643) & \textbf{2.102} \,(0.457) & 2.729 \,(0.943) & 3.525 \,(1.528) & 2.264 \,(0.505) & 2.199 \,(0.453) & 2.509 \,(0.698) \\
  & & \ref{ModelL5} & 2.133 \,(0.657) & \textbf{2.106} \,(0.463) & 2.697 \,(0.877) & 3.391 \,(1.231) & 2.261 \,(0.513) & 2.205 \,(0.448) & 2.537 \,(0.729) \\
  \cmidrule{2-10}
  & \multirow{5}{*}{0.05}
    & \ref{ModelL1} & 2.155 \,(0.604) & \textbf{2.120} \,(0.465) & 2.886 \,(0.961) & 3.459 \,(1.206) & 2.284 \,(0.522) & 2.227 \,(0.463) & 2.531 \,(0.717) \\
  & & \ref{ModelL2} & 2.245 \,(0.753) & \textbf{2.191} \,(0.522) & 2.878 \,(0.943) & 3.380 \,(1.176) & 2.405 \,(0.606) & 2.365 \,(0.574) & 2.685 \,(0.832) \\
  & & \ref{ModelL3} & 2.225 \,(0.683) & \textbf{2.178} \,(0.517) & 2.836 \,(0.952) & 3.424 \,(1.317) & 2.443 \,(0.645) & 2.404 \,(0.585) & 2.759 \,(0.879) \\
  & & \ref{ModelL4} & 2.148 \,(0.633) & \textbf{2.119} \,(0.472) & 2.854 \,(0.927) & 3.505 \,(1.257) & 2.299 \,(0.527) & 2.243 \,(0.474) & 2.562 \,(0.727) \\
  & & \ref{ModelL5} & 2.192 \,(0.691) & \textbf{2.160} \,(0.502) & 2.788 \,(0.931) & 3.575 \,(1.379) & 2.288 \,(0.534) & 2.232 \,(0.462) & 2.567 \,(0.736) \\
  \cmidrule{2-10}
  & \multirow{5}{*}{0.10}
    & \ref{ModelL1} & \textbf{2.114} \,(0.623) & 2.128 \,(0.485) & 3.025 \,(1.047) & 3.686 \,(1.352) & 2.315 \,(0.543) & 2.255 \,(0.477) & 2.547 \,(0.704) \\
  & & \ref{ModelL2} & 2.417 \,(0.928) & \textbf{2.387} \,(0.680) & 3.097 \,(0.974) & 3.530 \,(1.203) & 2.883 \,(0.906) & 2.816 \,(0.834) & 3.225 \,(1.187) \\
  & & \ref{ModelL3} & 2.396 \,(0.777) & \textbf{2.391} \,(0.650) & 3.015 \,(0.984) & 3.493 \,(1.279) & 3.004 \,(0.953) & 3.024 \,(0.950) & 3.394 \,(1.213) \\
  & & \ref{ModelL4} & 2.216 \,(0.685) & \textbf{2.163} \,(0.511) & 2.866 \,(1.000) & 3.534 \,(1.332) & 2.338 \,(0.562) & 2.283 \,(0.527) & 2.610 \,(0.769) \\
  & & \ref{ModelL5} & \textbf{2.264} \,(0.723) & 2.272 \,(0.596) & 2.894 \,(0.990) & 3.652 \,(1.392) & 2.344 \,(0.565) & 2.297 \,(0.517) & 2.612 \,(0.762) \\
\midrule

\multirow{15}{*}{\rotatebox{90}{\textbf{$\mathrm{ISB}\cdot 10^3$}}}
  & \multirow{5}{*}{0.01}
    & \ref{ModelL1} & 2.125 \,(2.980) & 0.727 \,(0.988) & 0.409 \,(0.534) & \textbf{0.090} \,(0.139) & 0.234 \,(0.297) & 0.227 \,(0.292) & 0.333 \,(0.472) \\
  & & \ref{ModelL2} & 2.031 \,(3.002) & 0.668 \,(0.962) & 0.442 \,(0.591) & \textbf{0.153} \,(0.209) & 0.247 \,(0.299) & 0.247 \,(0.302) & 0.326 \,(0.443) \\
  & & \ref{ModelL3} & 1.950 \,(2.788) & 0.716 \,(1.002) & 0.489 \,(0.655) & \textbf{0.135} \,(0.192) & 0.244 \,(0.311) & 0.248 \,(0.314) & 0.333 \,(0.458) \\
  & & \ref{ModelL4} & 1.951 \,(2.611) & 0.727 \,(1.031) & 0.396 \,(0.515) & \textbf{0.101} \,(0.147) & 0.223 \,(0.285) & 0.231 \,(0.310) & 0.327 \,(0.450) \\
  & & \ref{ModelL5} & 2.012 \,(2.792) & 0.693 \,(0.940) & 0.406 \,(0.532) & \textbf{0.103} \,(0.150) & 0.229 \,(0.293) & 0.228 \,(0.290) & 0.336 \,(0.484) \\
  \cmidrule{2-10}
  & \multirow{5}{*}{0.05}
    & \ref{ModelL1} & 2.559 \,(3.338) & 0.994 \,(1.360) & 0.577 \,(0.703) & \textbf{0.227} \,(0.236) & 0.486 \,(0.567) & 0.471 \,(0.574) & 0.616 \,(0.816) \\
  & & \ref{ModelL2} & 2.140 \,(2.929) & 0.726 \,(1.008) & 0.399 \,(0.542) & \textbf{0.178} \,(0.263) & 0.258 \,(0.327) & 0.241 \,(0.309) & 0.351 \,(0.516) \\
  & & \ref{ModelL3} & 2.061 \,(2.867) & 0.732 \,(1.007) & 0.455 \,(0.604) & \textbf{0.229} \,(0.310) & 0.246 \,(0.315) & 0.240 \,(0.315) & 0.387 \,(0.581) \\
  & & \ref{ModelL4} & 2.051 \,(2.602) & 0.754 \,(1.024) & 0.379 \,(0.512) & \textbf{0.102} \,(0.147) & 0.238 \,(0.305) & 0.240 \,(0.317) & 0.354 \,(0.507) \\
  & & \ref{ModelL5} & 2.131 \,(2.821) & 0.717 \,(1.077) & 0.592 \,(0.759) & \textbf{0.134} \,(0.194) & 0.245 \,(0.312) & 0.232 \,(0.290) & 0.351 \,(0.522) \\
  \cmidrule{2-10}
  & \multirow{5}{*}{0.10}
    & \ref{ModelL1} & 4.499 \,(5.543) & 2.056 \,(2.182) & 1.595 \,(1.280) & \textbf{0.822} \,(0.465) & 1.344 \,(1.069) & 1.389 \,(1.077) & 1.712 \,(1.447) \\
  & & \ref{ModelL2} & 2.146 \,(2.982) & 0.712 \,(1.000) & 0.335 \,(0.475) & \textbf{0.156} \,(0.222) & 0.263 \,(0.305) & 0.271 \,(0.331) & 0.351 \,(0.523) \\
  & & \ref{ModelL3} & 1.859 \,(2.528) & 0.699 \,(1.047) & 0.377 \,(0.525) & \textbf{0.207} \,(0.292) & 0.248 \,(0.322) & 0.257 \,(0.338) & 0.408 \,(0.540) \\
  & & \ref{ModelL4} & 2.104 \,(3.063) & 0.762 \,(1.116) & 0.431 \,(0.589) & \textbf{0.110} \,(0.154) & 0.251 \,(0.309) & 0.243 \,(0.306) & 0.353 \,(0.476) \\
  & & \ref{ModelL5} & 2.061 \,(2.700) & 0.802 \,(1.224) & 0.595 \,(0.766) & \textbf{0.179} \,(0.252) & 0.255 \,(0.330) & 0.249 \,(0.323) & 0.365 \,(0.545) \\
\bottomrule
\end{tabular}}
\caption{Robust location estimation (Section~\ref{Supplement: LocationEstimation}): Comparison of performance measures for Models~\ref{ModelL1}--\ref{ModelL5} with $n=1000$. The table reports $\mathrm{MISE}\cdot 10^3$, $\mathrm{IVAR}\cdot 10^3$, and $\mathrm{ISB}\cdot 10^3$, as defined in~\eqref{eq: MISEiSB}, for various contamination levels $\varepsilon\in\set{0.01,0.05,0.1}$. For each experimental setting, the deepest function is identified and the corresponding quantities are evaluated. Reported values are Monte Carlo averages with standard deviations in parentheses, based on $1\,000$ simulations.}
\label{tab: Location_n1000}
\end{table}

%% file: CSDA_Combined.bbl
\begin{thebibliography}{53}
\expandafter\ifx\csname natexlab\endcsname\relax\def\natexlab#1{#1}\fi
\providecommand{\url}[1]{\texttt{#1}}
\providecommand{\href}[2]{#2}
\providecommand{\path}[1]{#1}
\providecommand{\DOIprefix}{doi:}
\providecommand{\ArXivprefix}{arXiv:}
\providecommand{\URLprefix}{URL: }
\providecommand{\Pubmedprefix}{pmid:}
\providecommand{\doi}[1]{\href{http://dx.doi.org/#1}{\path{#1}}}
\providecommand{\Pubmed}[1]{\href{pmid:#1}{\path{#1}}}
\providecommand{\bibinfo}[2]{#2}
\ifx\xfnm\relax \def\xfnm[#1]{\unskip,\space#1}\fi
\bibitem[{Am\'endola et~al.(2020)Am\'endola, Engstr\"om and
  Haase}]{Amendola_etal2020}
\bibinfo{author}{Am\'endola, C.}, \bibinfo{author}{Engstr\"om, A.},
  \bibinfo{author}{Haase, C.}, \bibinfo{year}{2020}.
\newblock \bibinfo{title}{Maximum number of modes of {G}aussian mixtures}.
\newblock \bibinfo{journal}{Inf. Inference} \bibinfo{volume}{9},
  \bibinfo{pages}{587--600}.
\bibitem[{Arribas-Gil and Romo(2014)}]{Arribas_Romo2014}
\bibinfo{author}{Arribas-Gil, A.}, \bibinfo{author}{Romo, J.},
  \bibinfo{year}{2014}.
\newblock \bibinfo{title}{Shape outlier detection and visualization for
  functional data: the outliergram}.
\newblock \bibinfo{journal}{Biostatistics} \bibinfo{volume}{15},
  \bibinfo{pages}{603--619}.
\bibitem[{Boente et~al.(2014)Boente, Salibi\'an~Barrera and
  Tyler}]{Boente_etal2014}
\bibinfo{author}{Boente, G.}, \bibinfo{author}{Salibi\'an~Barrera, M.},
  \bibinfo{author}{Tyler, D.E.}, \bibinfo{year}{2014}.
\newblock \bibinfo{title}{A characterization of elliptical distributions and
  some optimality properties of principal components for functional data}.
\newblock \bibinfo{journal}{J. Multivariate Anal.} \bibinfo{volume}{131},
  \bibinfo{pages}{254--264}.
\bibitem[{Bo\v{c}inec et~al.(2026)Bo\v{c}inec, Nagy and
  Yeon}]{Bocinec_etal2026}
\bibinfo{author}{Bo\v{c}inec, F.}, \bibinfo{author}{Nagy, S.},
  \bibinfo{author}{Yeon, H.}, \bibinfo{year}{2026}.
\newblock \bibinfo{title}{Projection depth for functional data: Theoretical
  properties}.
\newblock \bibinfo{note}{ArXiv:2512.20452}.
\bibitem[{Briend et~al.(2025)Briend, Lugosi and Oliveira}]{Briend_etal2025}
\bibinfo{author}{Briend, S.}, \bibinfo{author}{Lugosi, G.},
  \bibinfo{author}{Oliveira, R.I.}, \bibinfo{year}{2025}.
\newblock \bibinfo{title}{On the quality of randomized approximations of
  {T}ukey's depth}.
\newblock \bibinfo{journal}{SIAM J. Math. Data Sci.} \bibinfo{volume}{7},
  \bibinfo{pages}{1441--1464}.
\bibitem[{Chakraborty and Chaudhuri(2014a)}]{Chakraborty_Chaudhuri2014b}
\bibinfo{author}{Chakraborty, A.}, \bibinfo{author}{Chaudhuri, P.},
  \bibinfo{year}{2014}a.
\newblock \bibinfo{title}{The deepest point for distributions in infinite
  dimensional spaces}.
\newblock \bibinfo{journal}{Stat. Methodol.} \bibinfo{volume}{20},
  \bibinfo{pages}{27--39}.
\bibitem[{Chakraborty and Chaudhuri(2014b)}]{Chakraborty_Chaudhuri2014}
\bibinfo{author}{Chakraborty, A.}, \bibinfo{author}{Chaudhuri, P.},
  \bibinfo{year}{2014}b.
\newblock \bibinfo{title}{On data depth in infinite dimensional spaces}.
\newblock \bibinfo{journal}{Ann. Inst. Statist. Math.} \bibinfo{volume}{66},
  \bibinfo{pages}{303--324}.
\bibitem[{Chaudhuri(1996)}]{Chaudhuri1996}
\bibinfo{author}{Chaudhuri, P.}, \bibinfo{year}{1996}.
\newblock \bibinfo{title}{On a geometric notion of quantiles for multivariate
  data}.
\newblock \bibinfo{journal}{J. Amer. Statist. Assoc.} \bibinfo{volume}{91},
  \bibinfo{pages}{862--872}.
\bibitem[{Chenouri and Small(2012)}]{CS12}
\bibinfo{author}{Chenouri, S.}, \bibinfo{author}{Small, C.G.},
  \bibinfo{year}{2012}.
\newblock \bibinfo{title}{A nonparametric multivariate multisample test based
  on data depth}.
\newblock \bibinfo{journal}{Electron. J. Statist.} \bibinfo{volume}{6},
  \bibinfo{pages}{760--782}.
\bibitem[{Cram\'er(1946)}]{Cramer1946}
\bibinfo{author}{Cram\'er, H.}, \bibinfo{year}{1946}.
\newblock \bibinfo{title}{Mathematical Methods of Statistics}. volume
  \bibinfo{volume}{vol. 9} of \textit{\bibinfo{series}{Princeton Mathematical
  Series}}.
\newblock \bibinfo{publisher}{Princeton University Press, Princeton, NJ}.
\bibitem[{Cuevas et~al.(2007)Cuevas, Febrero and Fraiman}]{Cuevas_etal2007}
\bibinfo{author}{Cuevas, A.}, \bibinfo{author}{Febrero, M.},
  \bibinfo{author}{Fraiman, R.}, \bibinfo{year}{2007}.
\newblock \bibinfo{title}{Robust estimation and classification for functional
  data via projection-based depth notions}.
\newblock \bibinfo{journal}{Comput. Statist.} \bibinfo{volume}{22},
  \bibinfo{pages}{481--496}.
\bibitem[{Dai and Genton(2018)}]{DaiGenton2018}
\bibinfo{author}{Dai, W.}, \bibinfo{author}{Genton, M.G.},
  \bibinfo{year}{2018}.
\newblock \bibinfo{title}{Multivariate functional data visualization and
  outlier detection}.
\newblock \bibinfo{journal}{J. Comput. Graph. Statist.} \bibinfo{volume}{27},
  \bibinfo{pages}{923--934}.
\bibitem[{Dai et~al.(2020)Dai, Mrkvi\v{c}ka, Sun and Genton}]{Dai_etal2020}
\bibinfo{author}{Dai, W.}, \bibinfo{author}{Mrkvi\v{c}ka, T.},
  \bibinfo{author}{Sun, Y.}, \bibinfo{author}{Genton, M.G.},
  \bibinfo{year}{2020}.
\newblock \bibinfo{title}{Functional outlier detection and taxonomy by
  sequential transformations}.
\newblock \bibinfo{journal}{Comput. Statist. Data Anal.} \bibinfo{volume}{149},
  \bibinfo{pages}{106960, 17}.
\bibitem[{Donoho(1982)}]{Donoho1982}
\bibinfo{author}{Donoho, D.L.}, \bibinfo{year}{1982}.
\newblock \bibinfo{title}{Breakdown properties of multivariate location
  estimators}.
\newblock \bibinfo{note}{Qualifying paper, Harvard University}.
\bibitem[{Dudley(2002)}]{Dudley2002}
\bibinfo{author}{Dudley, R.M.}, \bibinfo{year}{2002}.
\newblock \bibinfo{title}{Real Analysis and Probability}.
  volume~\bibinfo{volume}{74} of \textit{\bibinfo{series}{Cambridge Studies in
  Advanced Mathematics}}.
\newblock \bibinfo{publisher}{Cambridge University Press},
  \bibinfo{address}{Cambridge}.
\newblock \bibinfo{note}{Revised reprint of the 1989 original}.
\bibitem[{Dyckerhoff(2004)}]{Dyckerhoff2004}
\bibinfo{author}{Dyckerhoff, R.}, \bibinfo{year}{2004}.
\newblock \bibinfo{title}{Data depths satisfying the projection property}.
\newblock \bibinfo{journal}{Allg. Stat. Arch.} \bibinfo{volume}{88},
  \bibinfo{pages}{163--190}.
\bibitem[{Eddelbuettel et~al.(2024)Eddelbuettel, Francois, Bates, Ni and
  Sanderson}]{RcppArmadillo}
\bibinfo{author}{Eddelbuettel, D.}, \bibinfo{author}{Francois, R.},
  \bibinfo{author}{Bates, D.}, \bibinfo{author}{Ni, B.},
  \bibinfo{author}{Sanderson, C.}, \bibinfo{year}{2024}.
\newblock \bibinfo{title}{RcppArmadillo: `Rcpp' integration for the `Armadillo'
  templated linear algebra library}.
\newblock \URLprefix \url{https://CRAN.R-project.org/package=RcppArmadillo}.
  \bibinfo{note}{{R} package version 14.2.2-1}.
\bibitem[{Elmore et~al.(2006)Elmore, Hettmansperger and Xuan}]{Elmore_etal2006}
\bibinfo{author}{Elmore, R.T.}, \bibinfo{author}{Hettmansperger, T.P.},
  \bibinfo{author}{Xuan, F.}, \bibinfo{year}{2006}.
\newblock \bibinfo{title}{Spherical data depth and a multivariate median}, in:
  \bibinfo{booktitle}{Data Depth: Robust Multivariate Analysis, Computational
  Geometry and Applications}. \bibinfo{publisher}{Amer. Math. Soc., Providence,
  RI}. volume~\bibinfo{volume}{72} of \textit{\bibinfo{series}{DIMACS Ser.
  Discrete Math. Theoret. Comput. Sci.}}, pp. \bibinfo{pages}{87--101}.
\bibitem[{Fanaee-T and Gama(2014)}]{Fanaee_etal2014}
\bibinfo{author}{Fanaee-T, H.}, \bibinfo{author}{Gama, J.},
  \bibinfo{year}{2014}.
\newblock \bibinfo{title}{Event labeling combining ensemble detectors and
  background knowledge}.
\newblock \bibinfo{journal}{Progress in Artificial Intelligence}
  \bibinfo{volume}{2}, \bibinfo{pages}{113--127}.
\bibitem[{Fraiman and Muniz(2001)}]{Fraiman_Muniz2001}
\bibinfo{author}{Fraiman, R.}, \bibinfo{author}{Muniz, G.},
  \bibinfo{year}{2001}.
\newblock \bibinfo{title}{Trimmed means for functional data}.
\newblock \bibinfo{journal}{Test} \bibinfo{volume}{10},
  \bibinfo{pages}{419--440}.
\bibitem[{Helander et~al.(2020)Helander, Van~Bever, Rantala and
  Ilmonen}]{Helander_etal2020}
\bibinfo{author}{Helander, S.}, \bibinfo{author}{Van~Bever, G.},
  \bibinfo{author}{Rantala, S.}, \bibinfo{author}{Ilmonen, P.},
  \bibinfo{year}{2020}.
\newblock \bibinfo{title}{Pareto depth for functional data}.
\newblock \bibinfo{journal}{Statistics} \bibinfo{volume}{54},
  \bibinfo{pages}{182--204}.
\bibitem[{Hlubinka et~al.(2015)Hlubinka, Gijbels, Omelka and
  Nagy}]{Hlubinka_etal2015}
\bibinfo{author}{Hlubinka, D.}, \bibinfo{author}{Gijbels, I.},
  \bibinfo{author}{Omelka, M.}, \bibinfo{author}{Nagy, S.},
  \bibinfo{year}{2015}.
\newblock \bibinfo{title}{Integrated data depth for smooth functions and its
  application in supervised classification}.
\newblock \bibinfo{journal}{Comput. Statist.} \bibinfo{volume}{30},
  \bibinfo{pages}{1011--1031}.
\bibitem[{Hsing and Eubank(2015)}]{Hsing_Eubank2015}
\bibinfo{author}{Hsing, T.}, \bibinfo{author}{Eubank, R.},
  \bibinfo{year}{2015}.
\newblock \bibinfo{title}{Theoretical Foundations of Functional Data Analysis,
  with an Introduction to Linear Operators}.
\newblock Wiley Series in Probability and Statistics, \bibinfo{publisher}{John
  Wiley \& Sons, Ltd., Chichester}.
\bibitem[{Huber(1985)}]{Huber1985}
\bibinfo{author}{Huber, P.J.}, \bibinfo{year}{1985}.
\newblock \bibinfo{title}{Projection pursuit}.
\newblock \bibinfo{journal}{Ann. Statist.} \bibinfo{volume}{13},
  \bibinfo{pages}{435--525}.
\bibitem[{Huber and Ronchetti(2009)}]{Huber_Ronchetti2009}
\bibinfo{author}{Huber, P.J.}, \bibinfo{author}{Ronchetti, E.M.},
  \bibinfo{year}{2009}.
\newblock \bibinfo{title}{Robust Statistics}.
\newblock Wiley Series in Probability and Statistics. \bibinfo{edition}{second}
  ed., \bibinfo{publisher}{John Wiley \& Sons, Inc., Hoboken, NJ}.
\bibitem[{Hubert et~al.(2017)Hubert, Rousseeuw and Segaert}]{Hubert_etal2017}
\bibinfo{author}{Hubert, M.}, \bibinfo{author}{Rousseeuw, P.},
  \bibinfo{author}{Segaert, P.}, \bibinfo{year}{2017}.
\newblock \bibinfo{title}{Multivariate and functional classification using
  depth and distance}.
\newblock \bibinfo{journal}{Adv. Data Anal. Classif.} \bibinfo{volume}{11},
  \bibinfo{pages}{445--466}.
\bibitem[{Ieva et~al.(2019)Ieva, Paganoni, Romo and
  Tarabelloni}]{roahd_package}
\bibinfo{author}{Ieva, F.}, \bibinfo{author}{Paganoni, A.M.},
  \bibinfo{author}{Romo, J.}, \bibinfo{author}{Tarabelloni, N.},
  \bibinfo{year}{2019}.
\newblock \bibinfo{title}{roahd package: Robust analysis of high dimensional
  data}.
\newblock \bibinfo{journal}{{The R Journal}} \bibinfo{volume}{11},
  \bibinfo{pages}{291--307}.
\bibitem[{Jim\'enez-Var\'on et~al.(2024)Jim\'enez-Var\'on, Harrou and
  Sun}]{Jimenez-Varon_etal2024}
\bibinfo{author}{Jim\'enez-Var\'on, C.F.}, \bibinfo{author}{Harrou, F.},
  \bibinfo{author}{Sun, Y.}, \bibinfo{year}{2024}.
\newblock \bibinfo{title}{Pointwise data depth for univariate and multivariate
  functional outlier detection}.
\newblock \bibinfo{journal}{Environmetrics} \bibinfo{volume}{35},
  \bibinfo{pages}{Paper No. e2851, 19}.
\bibitem[{Kong and Zuo(2010)}]{Kong_Zuo2010}
\bibinfo{author}{Kong, L.}, \bibinfo{author}{Zuo, Y.}, \bibinfo{year}{2010}.
\newblock \bibinfo{title}{Smooth depth contours characterize the underlying
  distribution}.
\newblock \bibinfo{journal}{J. Multivariate Anal.} \bibinfo{volume}{101},
  \bibinfo{pages}{2222--2226}.
\bibitem[{Kuo(1975)}]{Kuo1975}
\bibinfo{author}{Kuo, H.H.}, \bibinfo{year}{1975}.
\newblock \bibinfo{title}{Gaussian Measures in {B}anach Spaces}. volume
  \bibinfo{volume}{Vol. 463} of \textit{\bibinfo{series}{Lecture Notes in
  Mathematics}}.
\newblock \bibinfo{publisher}{Springer-Verlag, Berlin-New York}.
\bibitem[{Li et~al.(2012)Li, Cuesta-Albertos and Liu}]{Li_etal2012}
\bibinfo{author}{Li, J.}, \bibinfo{author}{Cuesta-Albertos, J.A.},
  \bibinfo{author}{Liu, R.Y.}, \bibinfo{year}{2012}.
\newblock \bibinfo{title}{{$DD$}-classifier: nonparametric classification
  procedure based on {$DD$}-plot}.
\newblock \bibinfo{journal}{J. Amer. Statist. Assoc.} \bibinfo{volume}{107},
  \bibinfo{pages}{737--753}.
\bibitem[{Liu(1990)}]{Liu1990}
\bibinfo{author}{Liu, R.Y.}, \bibinfo{year}{1990}.
\newblock \bibinfo{title}{On a notion of data depth based on random simplices}.
\newblock \bibinfo{journal}{Ann. Statist.} \bibinfo{volume}{18},
  \bibinfo{pages}{405--414}.
\bibitem[{Liu and Zuo(2014)}]{Liu_Zuo2014b}
\bibinfo{author}{Liu, X.}, \bibinfo{author}{Zuo, Y.}, \bibinfo{year}{2014}.
\newblock \bibinfo{title}{Computing projection depth and its associated
  estimators}.
\newblock \bibinfo{journal}{Stat. Comput.} \bibinfo{volume}{24},
  \bibinfo{pages}{51--63}.
\bibitem[{L{\'o}pez-Pintado and Romo(2009)}]{Lopez_Romo2009}
\bibinfo{author}{L{\'o}pez-Pintado, S.}, \bibinfo{author}{Romo, J.},
  \bibinfo{year}{2009}.
\newblock \bibinfo{title}{On the concept of depth for functional data}.
\newblock \bibinfo{journal}{J. Amer. Statist. Assoc.} \bibinfo{volume}{104},
  \bibinfo{pages}{718--734}.
\bibitem[{Mendro\v{s} and Nagy(2025)}]{Mendros_Nagy2025}
\bibinfo{author}{Mendro\v{s}, E.}, \bibinfo{author}{Nagy, S.},
  \bibinfo{year}{2025}.
\newblock \bibinfo{title}{The spherical depth for functional data}, in:
  \bibinfo{booktitle}{New Trends in Functional Statistics and Related Fields}.
  \bibinfo{publisher}{Springer, Cham}. Contrib. Stat., pp.
  \bibinfo{pages}{401--408}.
\bibitem[{Mosler(2002)}]{Mosler2002}
\bibinfo{author}{Mosler, K.}, \bibinfo{year}{2002}.
\newblock \bibinfo{title}{Multivariate Dispersion, Central Regions and Depth:
  The Lift Zonoid Approach}. volume \bibinfo{volume}{165} of
  \textit{\bibinfo{series}{Lecture Notes in Statistics}}.
\newblock \bibinfo{publisher}{Springer-Verlag}, \bibinfo{address}{Berlin}.
\bibitem[{Mosler(2013)}]{Mosler2013}
\bibinfo{author}{Mosler, K.}, \bibinfo{year}{2013}.
\newblock \bibinfo{title}{Depth statistics}, in: \bibinfo{editor}{Becker, C.},
  \bibinfo{editor}{Fried, R.}, \bibinfo{editor}{Kuhnt, S.} (Eds.),
  \bibinfo{booktitle}{Robustness and Complex Data Structures}.
  \bibinfo{publisher}{Springer, Heidelberg}, pp. \bibinfo{pages}{17--34}.
\bibitem[{Nagy et~al.(2017)Nagy, Gijbels and Hlubinka}]{Nagy_etal2017}
\bibinfo{author}{Nagy, S.}, \bibinfo{author}{Gijbels, I.},
  \bibinfo{author}{Hlubinka, D.}, \bibinfo{year}{2017}.
\newblock \bibinfo{title}{Depth-based recognition of shape outlying functions}.
\newblock \bibinfo{journal}{J. Comput. Graph. Statist.} \bibinfo{volume}{26},
  \bibinfo{pages}{883--893}.
\bibitem[{Nagy et~al.(2016)Nagy, Gijbels, Omelka and Hlubinka}]{Nagy_etal2016}
\bibinfo{author}{Nagy, S.}, \bibinfo{author}{Gijbels, I.},
  \bibinfo{author}{Omelka, M.}, \bibinfo{author}{Hlubinka, D.},
  \bibinfo{year}{2016}.
\newblock \bibinfo{title}{Integrated depth for functional data: Statistical
  properties and consistency}.
\newblock \bibinfo{journal}{ESAIM Probab. Stat.} \bibinfo{volume}{20},
  \bibinfo{pages}{95--130}.
\bibitem[{Nagy et~al.(2026)Nagy, Mrkvi{\v{c}}ka and El{\'\i}as}]{Nagy_etal2025}
\bibinfo{author}{Nagy, S.}, \bibinfo{author}{Mrkvi{\v{c}}ka, T.},
  \bibinfo{author}{El{\'\i}as, A.}, \bibinfo{year}{2026}.
\newblock \bibinfo{title}{Which depth to use to construct functional boxplots?}
\newblock \bibinfo{journal}{Statist. Sci.} \bibinfo{note}{To appear}.
\bibitem[{Narisetty and Nair(2016)}]{Narisetty_Nair2016}
\bibinfo{author}{Narisetty, N.N.}, \bibinfo{author}{Nair, V.N.},
  \bibinfo{year}{2016}.
\newblock \bibinfo{title}{Extremal depth for functional data and applications}.
\newblock \bibinfo{journal}{J. Amer. Statist. Assoc.} \bibinfo{volume}{111},
  \bibinfo{pages}{1705--1714}.
\bibitem[{Ojo et~al.(2022)Ojo, Fern\'andez~Anta, Lillo and
  Sguera}]{Ojo_etal2022}
\bibinfo{author}{Ojo, O.T.}, \bibinfo{author}{Fern\'andez~Anta, A.},
  \bibinfo{author}{Lillo, R.E.}, \bibinfo{author}{Sguera, C.},
  \bibinfo{year}{2022}.
\newblock \bibinfo{title}{Detecting and classifying outliers in big functional
  data}.
\newblock \bibinfo{journal}{Adv. Data Anal. Classif.} \bibinfo{volume}{16},
  \bibinfo{pages}{725--760}.
\bibitem[{Ojo et~al.(2023)Ojo, Lillo and Anta}]{fdaoutlier_package}
\bibinfo{author}{Ojo, O.T.}, \bibinfo{author}{Lillo, R.E.},
  \bibinfo{author}{Anta, A.F.}, \bibinfo{year}{2023}.
\newblock \bibinfo{title}{fdaoutlier package: Outlier detection tools for
  functional data analysis}.
\newblock \DOIprefix\doi{10.32614/CRAN.package.fdaoutlier}. \bibinfo{note}{{R}
  package version 0.2.1}.
\bibitem[{Pokotylo et~al.(2024)Pokotylo, Mozharovskyi, Dyckerhoff and
  Nagy}]{ddalpha}
\bibinfo{author}{Pokotylo, O.}, \bibinfo{author}{Mozharovskyi, P.},
  \bibinfo{author}{Dyckerhoff, R.}, \bibinfo{author}{Nagy, S.},
  \bibinfo{year}{2024}.
\newblock \bibinfo{title}{ddalpha: Depth-based classification and calculation
  of data depth}.
\newblock \DOIprefix\doi{10.32614/CRAN.package.ddalpha}. \bibinfo{note}{{R}
  package version 1.3.16}.
\bibitem[{Ramsay and Chenouri(2024)}]{RamsayChenouri2024}
\bibinfo{author}{Ramsay, K.}, \bibinfo{author}{Chenouri, S.},
  \bibinfo{year}{2024}.
\newblock \bibinfo{title}{Robust nonparametric hypothesis tests for differences
  in the covariance structure of functional data}.
\newblock \bibinfo{journal}{Canad. J. Statist.} \bibinfo{volume}{52},
  \bibinfo{pages}{43--78}.
\bibitem[{Schwarz(1978)}]{Schwarz1978}
\bibinfo{author}{Schwarz, G.}, \bibinfo{year}{1978}.
\newblock \bibinfo{title}{Estimating the dimension of a model}.
\newblock \bibinfo{journal}{Ann. Statist.} \bibinfo{volume}{6},
  \bibinfo{pages}{461--464}.
\bibitem[{Scrucca et~al.(2025)Scrucca, Fraley, Murphy and
  Raftery}]{mclust_package}
\bibinfo{author}{Scrucca, L.}, \bibinfo{author}{Fraley, C.},
  \bibinfo{author}{Murphy, T.B.}, \bibinfo{author}{Raftery, A.E.},
  \bibinfo{year}{2025}.
\newblock \bibinfo{title}{mclust: Gaussian mixture modelling for model-based
  clustering, classification, and density estimation}.
\newblock \DOIprefix\doi{10.32614/CRAN.package.mclust}. \bibinfo{note}{{R}
  package version 6.1.2}.
\bibitem[{Shao et~al.(2022)Shao, Zuo and Luo}]{Shao_etal2022}
\bibinfo{author}{Shao, W.}, \bibinfo{author}{Zuo, Y.}, \bibinfo{author}{Luo,
  J.}, \bibinfo{year}{2022}.
\newblock \bibinfo{title}{Employing the {MCMC} technique to compute the
  projection depth in high dimensions}.
\newblock \bibinfo{journal}{J. Comput. Appl. Math.} \bibinfo{volume}{411},
  \bibinfo{pages}{Paper No. 114278, 16}.
\bibitem[{Stahel(1981)}]{Stahel1981}
\bibinfo{author}{Stahel, W.A.}, \bibinfo{year}{1981}.
\newblock \bibinfo{title}{Robuste {S}ch\"atzungen: {I}nfinitesimale
  {O}ptimalit\"at und {S}ch\"atzungen von {K}ovarianzmatrizen}.
\newblock \bibinfo{note}{Ph.D. thesis, ETH Z\"urich}.
\bibitem[{Sun and Genton(2011)}]{SunGenton2011}
\bibinfo{author}{Sun, Y.}, \bibinfo{author}{Genton, M.G.},
  \bibinfo{year}{2011}.
\newblock \bibinfo{title}{Functional boxplots}.
\newblock \bibinfo{journal}{J. Comput. Graph. Statist.} \bibinfo{volume}{20},
  \bibinfo{pages}{316--334}.
\bibitem[{Tukey(1975)}]{Tukey1975}
\bibinfo{author}{Tukey, J.W.}, \bibinfo{year}{1975}.
\newblock \bibinfo{title}{Mathematics and the picturing of data}, in:
  \bibinfo{booktitle}{Proceedings of the {I}nternational {C}ongress of
  {M}athematicians ({V}ancouver, {B}. {C}., 1974), {V}ol. 2},
  \bibinfo{publisher}{Canad. Math. Congress, Montreal, Que}. pp.
  \bibinfo{pages}{523--531}.
\bibitem[{Yeon et~al.(2025)Yeon, Dai and L\'opez-Pintado}]{Yeon_etal2025}
\bibinfo{author}{Yeon, H.}, \bibinfo{author}{Dai, X.},
  \bibinfo{author}{L\'opez-Pintado, S.}, \bibinfo{year}{2025}.
\newblock \bibinfo{title}{Regularized halfspace depth for functional data}.
\newblock \bibinfo{journal}{J. R. Stat. Soc. Ser. B. Stat. Methodol.}
  \bibinfo{volume}{87}, \bibinfo{pages}{1553--1575}.
\bibitem[{Zuo(2003)}]{Zuo2003}
\bibinfo{author}{Zuo, Y.}, \bibinfo{year}{2003}.
\newblock \bibinfo{title}{Projection-based depth functions and associated
  medians}.
\newblock \bibinfo{journal}{Ann. Statist.} \bibinfo{volume}{31},
  \bibinfo{pages}{1460--1490}.

\end{thebibliography}


\begin{thebibliography}{5}
\expandafter\ifx\csname natexlab\endcsname\relax\def\natexlab#1{#1}\fi
\providecommand{\url}[1]{\texttt{#1}}
\providecommand{\href}[2]{#2}
\providecommand{\path}[1]{#1}
\providecommand{\DOIprefix}{doi:}
\providecommand{\ArXivprefix}{arXiv:}
\providecommand{\URLprefix}{URL: }
\providecommand{\Pubmedprefix}{pmid:}
\providecommand{\doi}[1]{\href{http://dx.doi.org/#1}{\path{#1}}}
\providecommand{\Pubmed}[1]{\href{pmid:#1}{\path{#1}}}
\providecommand{\bibinfo}[2]{#2}
\ifx\xfnm\relax \def\xfnm[#1]{\unskip,\space#1}\fi
\bibitem[{Bo\v{c}inec et~al.(2026)Bo\v{c}inec, Nagy and
  Yeon}]{Bocinec_etal2026}
\bibinfo{author}{Bo\v{c}inec, F.}, \bibinfo{author}{Nagy, S.},
  \bibinfo{author}{Yeon, H.}, \bibinfo{year}{2026}.
\newblock \bibinfo{title}{Projection depth for functional data: Theoretical
  properties}.
\newblock \bibinfo{note}{ArXiv:2512.20452}.
\bibitem[{Dudley(2002)}]{Dudley2002}
\bibinfo{author}{Dudley, R.M.}, \bibinfo{year}{2002}.
\newblock \bibinfo{title}{Real Analysis and Probability}.
  volume~\bibinfo{volume}{74} of \textit{\bibinfo{series}{Cambridge Studies in
  Advanced Mathematics}}.
\newblock \bibinfo{publisher}{Cambridge University Press},
  \bibinfo{address}{Cambridge}.
\newblock \bibinfo{note}{Revised reprint of the 1989 original}.
\bibitem[{Dyckerhoff(2004)}]{Dyckerhoff2004}
\bibinfo{author}{Dyckerhoff, R.}, \bibinfo{year}{2004}.
\newblock \bibinfo{title}{Data depths satisfying the projection property}.
\newblock \bibinfo{journal}{Allg. Stat. Arch.} \bibinfo{volume}{88},
  \bibinfo{pages}{163--190}.
\bibitem[{Serfling(1980)}]{Serfling1980}
\bibinfo{author}{Serfling, R.}, \bibinfo{year}{1980}.
\newblock \bibinfo{title}{Approximation theorems of mathematical statistics}.
\newblock \bibinfo{publisher}{John Wiley \& Sons Inc.}, \bibinfo{address}{New
  York}.
\newblock \bibinfo{note}{Wiley Series in Probability and Mathematical
  Statistics}.
\bibitem[{Sinova et~al.(2018)Sinova, Gonz\'alez-Rodr\'iguez and
  Van~Aelst}]{Sinova_etal2018}
\bibinfo{author}{Sinova, B.}, \bibinfo{author}{Gonz\'alez-Rodr\'iguez, G.},
  \bibinfo{author}{Van~Aelst, S.}, \bibinfo{year}{2018}.
\newblock \bibinfo{title}{M-estimators of location for functional data}.
\newblock \bibinfo{journal}{Bernoulli} \bibinfo{volume}{24},
  \bibinfo{pages}{2328--2357}.

\end{thebibliography}
